\documentclass[11pt]{article}

\usepackage{amsfonts}
\usepackage{amsmath}
\usepackage{lipsum}
\usepackage{amssymb}
\usepackage{amsmath}
\usepackage{amsthm}
\usepackage{geometry}
\usepackage{multirow}
\usepackage{booktabs}
\usepackage{xr}
\usepackage{float}
\usepackage{rotating}

\usepackage[authoryear]{natbib}

\newcommand{\myref}[2]{\hyperref[#1]{#2}}
\numberwithin{equation}{section}

\usepackage{latexsym}
\usepackage{graphicx} 
\usepackage{enumerate}
\usepackage{setspace}
\usepackage[toc,title,titletoc,header]{appendix}
\usepackage[bottom]{footmisc} 
\usepackage[pdfborder={0 0 0}]{hyperref}
\hypersetup{
 colorlinks=true,linkcolor=magenta, citecolor=blue, filecolor=magenta, 
 linkbordercolor={0 1 1}, citebordercolor={1 0 0},
}

\usepackage{lineno}
\setpagewiselinenumbers

\allowdisplaybreaks

\usepackage{xcolor}
\newtheorem{theorem}{Theorem}
\newtheorem{lemma}{Lemma}
\newtheorem{corollary}{Corollary}
\newtheorem{example}{Example}
\let\oldexample\example
\let\endoldexample\endexample

\renewenvironment{example}
  {\pushQED{\qedsymbol}\oldexample}
  {\hfill\qedhere\endoldexample\popQED}
\newcounter{assumptionM}
\newcounter{assumptionA}
\newtheorem{inner_runningexample}{Running Example}
\newenvironment{runningexample}
  {\pushQED{\qedsymbol}\begin{inner_runningexample}}
  {\hfill\qedhere\end{inner_runningexample}\popQED}

\def\theassumptionM{M.\arabic{assumptionM}}
\def\theassumptionA{A.\arabic{assumptionA}}
\begin{document}
	\relax
	\hypersetup{pageanchor=false}
	\hypersetup{pageanchor=true}
{\author{
Federico A. Bugni\\
Department of Economics\\
Northwestern University\\
\url{federico.bugni@northwestern.edu}
\and
Joel L. Horowitz\\
Department of Economics\\
Northwestern University\\
\url{joel-horowitz@northwestern.edu}
\and
Linqi Zhang\\
Department of Decisions, Operations and Technology\\
The Chinese University of Hong Kong\\
\url{linqizhang@cuhk.edu.hk}
}}
 
\title{Demand estimation without knowledge of outside good shares\thanks{Corresponding author: \href{mailto:federico.bugni@northwestern.edu}{federico.bugni@northwestern.edu}.
We thank Gaston Illanes, Ivan Canay, Ruixuan Liu, and the seminar participants at the 2025 CUHK Econometrics Workshop for their comments and suggestions, which have improved the manuscript. Any errors are our own.}}

\maketitle

\vspace{-0.3in}
\thispagestyle{empty} 

\begin{spacing}{1.2}

\begin{abstract}
The BLP model is the workhorse framework for estimating demand for differentiated products using aggregate product shares. In practice, however, the share of the outside good is often unavailable. This paper studies identification and inference in the BLP model when the share of the outside good is unobserved. We show that the model is partially identified, and we derive the identified sets for the structural parameters and other quantities of economic interest. We also develop inference procedures based on moment inequalities that deliver valid confidence sets for these structural parameters and quantities of economic interest. We illustrate our results with an empirical application based on the tuna data analyzed by \citet{gandhi/lu/shi:2023}.

\end{abstract}
\end{spacing}

\medskip
\noindent KEYWORDS: BLP demand estimation, outside-good shares, market size, partial identification, random sets, conditional moment inequalities.

\noindent JEL classification codes: C12, C18, C35, C51, L13.

\thispagestyle{empty} 
\newpage

\section{Introduction}\label{sec:Introduction}

Discrete-choice demand models for differentiated products are a central tool in empirical industrial organization and are widely used in other applied fields in economics; see \cite{berry/haile:2014,berry/haile:2021} and references therein. Among these, the random-coefficients logit model introduced by \citet{berry:1994} and developed by \citet{berry/levinsohn/pakes:1995}, hereinafter referred to as BLP, has become the workhorse framework for estimating demand systems, conducting counterfactual analyses such as mergers, and evaluating market power, pass-through, and welfare.

The BLP model posits that industry-level demand arises from a discrete-choice model of consumer demand with unobserved heterogeneity. In each market, consumers choose among $J$ differentiated products and one {\it outside good}, representing the option of not choosing any of the $J$ products. The outside good is a key ingredient of the BLP model. As \citet[p.\ 14]{berry/haile:2021} explain, ``In a discrete choice model without an outside good, the market demand elasticity would always be zero.''
According to \citet[p.\ 2088]{berry/gandhi/haile:2013}, the interpretation of the outside good can vary across applications: it may represent a genuine economic alternative, a reference alternative relative to which utilities are normalized, or a purely artificial technical device. 

A central feature of BLP is that it allows identification and estimation of the structural demand system using aggregate product-level market shares. In the standard formulation, these shares include both the shares of the $J$ inside goods and the share of the outside good. In many empirical applications, however, the share of the outside good is not observed. This situation arises naturally when only aggregate shares of the $J$ inside goods are observed, or when aggregate purchase data are available but the size or boundaries of the relevant market are not well defined. 
To address this issue, the applied literature often relies on ad hoc assumptions regarding the outside good or, equivalently, market size. A recent survey by \cite{Zhang:2024} finds that, among 29 top-five publications using the BLP model, 24 rely on ad hoc assumptions about market size or the outside good, while only 5 assess the robustness of their results to these assumptions.

This paper adopts an agnostic approach to the share of the outside good, or equivalently, to market size. We consider a setting in which the researcher can credibly restrict the share of the outside good in each market to a set $S_0(w)$, which can depend on the observed market data $w$. Our framework accommodates varying degrees of missing information about the outside option. At one extreme is the fully agnostic case in which $S_0(w) = (0,1)$. At the other extreme, the existing literature treats the outside share as observed, possibly up to sampling error, effectively assuming that $S_0(w)$ is a singleton. Intermediate cases allow the researcher to work with incomplete specifications of the outside share. For example, the researcher may know only that the outside share is bounded below by some constant $c\in(0,1)$, without knowing its exact value. This corresponds to setting $S_0(w)=[c,1)$. Alternatively, if the researcher has a benchmark value for the outside share, sensitivity analysis can be conducted by specifying $S_0(w)$ as a neighborhood around that value.

We study identification and inference in the BLP model when the share of the outside good is restricted to the set $S_0(w)$. Our first contribution is to characterize the identified set for the structural parameters of the BLP model in this context. Our analysis shows that once the share of the outside good is no longer pinned down to a single value, the BLP model is typically partially identified, even under otherwise standard assumptions. Using tools from random set theory (\cite{molchanov:2005,molchanov/molinari:2018}), we derive an explicit characterization of the identified set that fully exploits the structure of the BLP demand system. In essence, our identified set is characterized by an infinite collection of conditional moment inequalities.

Our second contribution is to characterize the identified sets of economically relevant quantities in the BLP model, such as demand shocks, demand elasticities, equilibrium markups, diversion ratios, and market shares. We show how partial identification of the structural parameters propagates to these quantities of interest, and we derive sharp identified sets for them. 

To illustrate these results, we present a numerical example that shows how the identified sets of parameters and economic quantities vary under different specifications of $S_0(w)$. The results demonstrate that although small departures from a singleton outside share lead to partial identification of the structural parameters, the resulting sets remain sufficiently informative to support economically meaningful conclusions.

Our third contribution is to develop inference procedures for the BLP model in this partially identified setting. Building on the methods of \citet{andrews/shi:2013} and \citet{chernozhukov/chetverikov/kato:2019}, and using our characterization of the identified set, we transform the conditional moment inequalities into a (potentially large) collection of unconditional moment inequalities and construct confidence sets that are asymptotically valid uniformly over a broad class of data-generating processes. From here, we show how to conduct inference on both the structural parameters and quantities of economic interest in the BLP model.

The remainder of the paper is organized as follows. Section \ref{sec:LitReview} reviews the related literature and places our contribution in context. Section \ref{sec:standard_BLP} briefly reviews the standard BLP model with observed outside shares. Section \ref{sec:Identification} contains the paper’s main identification results, deriving identified sets for the structural parameters and economically relevant quantities when outside shares are unobserved. Section \ref{sec:Inference} builds on these results to construct confidence sets. Section \ref{sec:Application} illustrates the methodology in an empirical application, and Section \ref{sec:Conclusions} concludes. Proofs and auxiliary results are collected in the Appendix.

\subsection{Literature review}\label{sec:LitReview}

This paper contributes to the literature on differentiated-products demand estimation by studying identification and inference in the BLP model after relaxing the standard assumption that the outside share, or equivalently market size, is known. In standard applications of this model, the researcher assumes that the share of the outside good is observed or can be constructed from the available data. When this share is unobserved, the existing literature typically imposes additional structure to substitute for the missing information. Without direct data on the outside option, however, these restrictions are difficult to assess empirically. This paper instead studies what can be learned without imposing parametric or auxiliary structure that point identifies the outside share.

One strand of the literature incorporates a parametric specification for the outside shares within the BLP framework and estimates it jointly with demand parameters. \citet{Berry:2006tt} introduce additional structure that links the market size, defined as the sum of inside and unobserved outside quantities, to observed covariates. More recently, \citet{Zhang:2024} models the market size as a function of observed market-level variables, known up to a finite-dimensional parameter vector, which is estimated jointly with demand. Under suitable conditions, this restores point identification using the same data requirements as in standard BLP estimation. Relatedly, \citet{Huang:2013vh} study a standard logit model without random coefficients and use market fixed effects to eliminate bias from unobserved outside shares. They recover the size of the outside option under the additional assumption that market size is constant across markets. In contrast, our approach does not seek to restore point identification through additional parametric restrictions; instead, it allows the model to be set-identified.

A second strand of the literature relies on supply-side conditions and auxiliary information, such as cost or accounting data, to estimate market size (e.g., \cite{Chu:2011ul}, \cite{Byrne:2022vz}, and \cite{Kim:2024ti}). These approaches exploit equilibrium pricing conditions to infer market size and, implicitly, the outside share. Identification in these settings, therefore, depends on additional structural assumptions on the supply side. By contrast, we abstract from supply-side information when identifying the structural parameters and study what can be learned from the demand side alone.

A third approach, increasingly common in applied work, predicts total demand, or proxies for it, in a first step. In grocery and airline applications, for example, researchers often predict total store trips or arriving passengers using stand-alone regressions such as gravity equations or Poisson arrival processes (e.g., \cite{Sweeting:2020tw}, \cite{Li:2022tk}, \cite{Hortacsu:2022wt}, and \cite{Backus:2021aa}). However, total store trips or arriving passengers do not capture consumers who choose not to visit the store or who use alternative modes of transportation. As a result, the estimated quantity still needs to be scaled by an ad hoc factor to account for these unobserved outside options.

Our paper is also related to the literature on inference in differentiated product demand models. There is a large body of literature addressing this problem under the assumption that inside and outside market shares are observed. \cite{berry/linton/pakes:2004} and \cite{armstrong:2016} study asymptotics as the number of products grows, whereas \citet{freyberger:2015} and \cite{hong/li/li:2021} consider asymptotics in the number of markets, as we do. The main difference between these papers and ours is that we treat outside shares as missing data and address the associated identification and inference problem.

\subsection{The standard BLP model}\label{sec:standard_BLP}

This paper considers the BLP model when the share of the outside good is unobserved. Before delving into this problem, we briefly review the standard BLP model, in which all shares are observed. We follow the notation and the standard assumptions imposed in \cite{berry:1994,berry/levinsohn/pakes:1995,berry/haile:2021}.

We observe an i.i.d.\ sample of markets $m=1,2,\dots ,n$. Market observations typically correspond to different cities and/or periods. That is, $m$ is determined by the pair $(t,c)$, where $t=1,2,\dots ,T_{c}$ denotes periods observed for city $c$ and $c=1,2,\dots ,C$ denotes cities observed. Since markets are assumed to be i.i.d., we can drop the index $m$ from the identification analysis. 

In each market, there is one ``outside'' product (labeled as zero) and $J$ ``inside''  products. We assume that $J$ is a fixed, known number. Throughout this section, we observe the shares of all products, conditional on market characteristics. These are determined by the decisions of a representative consumer in a standard BLP discrete-choice model. We assume that there is a unit mass of consumers, indexed by $i \in \mathcal{I}$, each choosing a single product.

For each market, the observed random variables are $( s,x,p,z) $, where $s=( s_{j}) _{j=1}^{J}\in (0,1)^{J}$ denotes the shares of the inside products, $x=( x_{j}') _{j=1}^{J}$ with $x_{j}\in \mathbb{R}^{d_{X}}$ denotes the non-price inside product characteristics, $p=( p_{j}) _{j=1}^{J}$ denotes the prices of the inside products, and $z$ denotes a vector of instruments.  In the canonical case of the car market, $j=1,2,\dots, J$ is a list of the available cars, $x$ would represent car features like fuel efficiency, size, and a constant vector, and $z$ would represent cost shifters and characteristics of other products, which generate changes in the observables but are assumed to be mean independent of the unobserved demand shock. The outside share can be deduced from the inside shares: $s_{0}=1-\sum_{j=1}^{J}s_{j}$. By definition, $( s_{0},s) \in \Delta _{J+1}$, where $\Delta _{\ell}$ for $\ell \in \mathbb{N}$ denotes the $\ell$-dimensional simplex.

The indirect utility of the (inside) good $j=1,2,\dots ,J$ for the representative individual $i$ is 
\begin{align*}
u_{ij}~ =~\tilde{\beta}_{i} x_{j}- \tilde{\alpha}_{i}p_{j}+\xi _{j}+\epsilon _{ij} 
~ =~\beta x_{j} +\zeta _{i} x_{j}-\alpha p_{j}-\nu _{i} p_{j}+\xi _{j}+\epsilon _{ij},
\end{align*}
where $\tilde{\beta}_{ik}=(\beta _{k}+\zeta _{ik})$ is the random coefficient on $x_{jk}$, $\beta =( \beta _{k}) _{k=1}^{d_{X}}$ is the vector of mean level of taste parameter for $x_{k}$, $\zeta _{ik}$ is a random component of the taste parameter for $x_{k}$, $\zeta _{i}  = (\zeta _{ik}) _{k=1}^{d_{X}}$ is the associated vector, $\tilde{\alpha}_{i}=(\alpha +\nu _{i})$ is the random coefficient on $p_{j}$, $\alpha $ is the mean level of distaste parameter for $p_{j}$, $\nu _{i}$ is the random component of the distaste parameter for $p_{j}$, $\xi _{j}$ is the product-specific demand shock, and $\epsilon _{ij}$ is the individual-specific taste shock. In the case of the cars, $\beta _{k}$ represents the mean taste for the observed characteristic $k$, $\zeta _{ik}$ represents customer $i$'s preference shock for this characteristic, $\nu _{i}$ represents customer $i$'s personal sensitivity to price, and $\xi _{j}$ could represent product $j$'s brand prestige. The indirect utility of the outside good for the representative individual $i$ is $u_{i0}=\epsilon _{i0}$.

In this model, the following objects are random and unobserved to the econometrician: the individual-specific taste shocks $((\epsilon _{ij})_{j=0}^{J})_{i \in \mathcal{I}}$, the random components $(\zeta _{i},\nu _{i})_{i \in \mathcal{I}}$, and the demand shocks $\xi = (\xi _{j})_{j=1}^{J}$. The standard BLP model imposes the following conditions on these shocks. First, $((\epsilon _{ij})_{j=0}^{J})_{i \in \mathcal{I}}$ is independent of the data and other shocks, and i.i.d.\ Type I extreme value.\footnote{The Type I extreme value assumption is commonly adopted in empirical work, but it is not essential for our identification analysis. It can be relaxed, provided the connected substitutes condition in \cite{berry/gandhi/haile:2013} holds, which ensures the demand system is invertible under full information.} Second, $(\zeta _{i},\nu _{i})_{i \in \mathcal{I}}$ are also assumed independent of the data and other shocks, and i.i.d.\ with a density function $f(\zeta,\nu;\lambda ) $ that is known up to a parameter $\lambda \in \Lambda \subset \mathbb{R}^{d_{\lambda }}$.\footnote{In practice, $f(\zeta,\nu;\lambda ) $ is often assumed to be normally distributed with mean zero and variance–covariance matrix $\lambda \in \Lambda $, and so $\Lambda$ represents a space of variance-covariance matrices. In this case, setting $\lambda$ to the zero matrix implies $\zeta=\mathbf{0}_{d_X}$ and $\nu=0$, which yields the plain logit model.} Finally, the demand shocks are $\xi = (\xi _{j})_{j=1}^{J}$, and assumed to satisfy the IV exogeneity condition:
\begin{equation}
    E[ \xi _{j}|z] ~=~0~~~~ \text{for}~j=1,\dots,J. 
    \label{eq:IVcondition}
\end{equation}
The structural parameters of the model are: 
\begin{equation*}
\theta ~=~( \alpha ,\beta ,\lambda ) ~\in ~\Theta ~\subset ~\mathbb{R}\times \mathbb{R}^{d_{X}}\times \Lambda,
\end{equation*}
where $\Theta$ denotes the parameter space for $\theta$.


For any product $j=1,\dots,J$, this model implies the following conditional market share:
\begin{equation}
P(y=j|x,p,\xi ;\theta )~=~\int_{(\zeta ,\nu )}\frac{\exp ( \beta x_{j} -\alpha p_{j}+\xi _{j}+\zeta x_{j} -\nu p_{j}  ) }{1+\sum_{b=1}^{J}\exp (\beta  x_{b} -\alpha p_{b}+\xi _{b}+\zeta x_{b} -\nu p_{b} ) } f( \zeta ,\nu ;\lambda ) d( \zeta ,\nu ) .
\label{eq:BLP_probs_1}
\end{equation} 
See Lemma \ref{lem:BLP_probs} in the appendix for a derivation. Given that each market contains a continuum of customers, we have that, conditional on $(x,p,\xi)$,
\begin{equation}
s ~=~ \big(P(y=j|x,p,\xi ;\theta )\big)_{j=1}^{J}.
\label{eq:sharesUncond}
\end{equation}

Under mild assumptions in \cite{berry:1994} and \cite{berry/levinsohn/pakes:1995}, this model is point identified. We now briefly outline the argument. 
For any $\delta  \in \mathbb{R}^{J}$, let the function $ \sigma (\delta ,x,p;\lambda ):\mathbb{R}^{J}\times \mathbb{R}^{d_{X}}\times \mathbb{R}^{J}\times \mathbb{R}^{d_{\lambda }}\to (0,1)^{J}$ be defined as follows:
\begin{equation}
\sigma (\delta ,x,p;\lambda )~=~\left(\int_{(\zeta ,\nu )}\frac{   \exp ( \delta _{j}+\zeta x_{j} - \nu p_{j} ) }{1+\sum_{b=1}^{J}\exp ( \delta _{b}+\zeta x_{b} -\nu p_{b} ) }f( \zeta ,\nu ;\lambda ) d( \zeta ,\nu )\right)_{j=1}^{J}.
\label{eq:sigma_defn}
\end{equation}
It is also convenient to define the corresponding outside-share function, $ \sigma_0 (\delta,x,p;\lambda ):\mathbb{R}^{J}\times \mathbb{R}^{d_{X}}\times \mathbb{R}^{J}\times \mathbb{R}^{d_{\lambda }}\to (0,1)$, given by
\begin{equation}
\sigma_0 (\delta ,x,p;\lambda )~=~1-\sum_{j=1}^J\sigma_j (\delta ,x,p;\lambda )~=~\int_{(\zeta ,\nu )}\frac{   1}{1+\sum_{b=1}^{J}\exp ( \delta _{b}+\zeta x_{b} -\nu p_{b} ) }f( \zeta ,\nu ;\lambda ) d( \zeta ,\nu ).
\label{eq:sigma0_defn}
\end{equation}
\cite{berry:1994} and \cite{berry/levinsohn/pakes:1995} show that, for any fixed $(s,x,p,\lambda)$, the mapping $\delta \to \delta + (\ln(s_j) - \ln{\sigma_j (\delta ,x,p;\lambda )})_{j=1}^{J}$ is a contraction in the sup-norm. Consequently, for any vector of market shares generated by this model, there exists a unique value of $\delta$ that solves $\sigma (\delta,x,p;\lambda ) = s$. By combining this with \eqref{eq:BLP_probs_1} and \eqref{eq:sharesUncond}, we obtain that
\begin{equation}
\xi_j  ~=~ \sigma^{-1}_j( s,x,p;\lambda ) - \beta x_j +\alpha p_j~~~\text{for}~j=1,\dots,J,
\label{eq:BLP_shocks}
\end{equation}
where $\sigma^{-1}_j$ denotes the $j$th coordinate of  $\sigma^{-1}( s,x,p;\lambda )$. This and \eqref{eq:IVcondition} yield the following equation:
\begin{equation}
E[\sigma^{-1}_j( s,x,p;\lambda ) - \beta x_j + \alpha p_j|z] ~=~0~~~\text{for}~j=1,\dots,J.
\label{eq:BLP_IVGMM}
\end{equation}
Under standard conditions regarding the variability of the instruments, \eqref{eq:BLP_IVGMM} point identifies the parameter of interest $\theta =(\alpha,\beta,\lambda )$. In particular, \citet[Page 23]{berry/haile:2021} argue that one can use \eqref{eq:BLP_IVGMM} to produce unconditional moment conditions that point identify $\theta$. Estimation and inference then proceed by standard GMM methods. 


Throughout the paper, we use the special case of the model without random coefficients, i.e., the plain logit model, as a running example to illustrate the main results and derivations. 

\begin{runningexample}[Identification in the plain logit model]\label{ex:Id_plainLogit}
Let $\bar{\lambda}\in\Lambda$ be the parameter value that makes the distribution of random coefficients $(\zeta,\nu)$ degenerate at zero, i.e.,
\begin{equation*}
f(\zeta,\nu;\bar{\lambda}) ~=~ I[\zeta=\mathbf{0}_{d_X},~ \nu=0].
\end{equation*}
We can specialize the BLP model to the plain logit case by assuming that $\Lambda=\{\bar{\lambda}\}$, leading to a parameter space $\Theta = \mathbb{R}\times \mathbb{R}^{d_{X}}\times \{\bar\lambda\}$. The only parameters left to identify are $\alpha$ and $\beta$.

In this special setting, the conditional choice probabilities are given by
\begin{equation}
P(y=j|x,p,\xi;\theta) ~=~ \frac{\exp(\beta x_j - \alpha p_j + \xi_j)}{1+\sum_{b=1}^J \exp(\beta x_b - \alpha p_b + \xi_b)} ~~~\text{for}~j=1,\dots,J,
\label{eq:BLP_probs_logit}
\end{equation}
and \eqref{eq:sigma_defn} and \eqref{eq:sigma0_defn} become
\begin{align*}
\sigma (\delta ,x,p;\lambda)~=~\left(\frac{ \exp ( \delta _{j}) }{1+\sum_{b=1}^{J}\exp ( \delta _{b}) }\right)_{j=1}^J~~\text{and}~~
\sigma_0 (\delta ,x,p;\lambda)~=~\frac{1 }{1+\sum_{b=1}^{J}\exp ( \delta _{b}) }.
\end{align*}
The function $\sigma(\delta,x,p;\lambda)$ is invertible in the first argument. Using $s_0 = 1-\sum_{j=1}^J s_j$, we obtain
\begin{equation}
\sigma^{-1}(s,x,p;\lambda)~=~ (\ln ( s_{j}/s_0))_{j=1}^J,
\label{eq:sigmaj_inv}
\end{equation}
where $\sigma ^{-1}$ is the inverse of $\sigma(\delta ,x,p;\lambda)$ with respect to the first argument.
Substituting into \eqref{eq:BLP_IVGMM}, we obtain the following conditional moment condition:
\begin{equation}
E[\ln ( s_{j}/s_0) - \beta x_{j}  + \alpha p_{j}|z] ~=~0~~~~ \text{for }j=1,\dots,J. 
\label{eq:BLP_IVGMM_logit}
\end{equation}
Provided that the instrument $z$ has sufficient variability, \eqref{eq:BLP_IVGMM_logit} identifies $(\alpha,\beta)$.
\end{runningexample}

In addition to the structural parameters $\theta$, researchers who use the BLP model are often interested in other objects within this framework, such as demand elasticities, markups, and diversion ratios. 
For any pair of products $j,k=1,\dots,J$, the elasticity of demand for product $j$ with respect to the price of product $k$ is given by
\begin{equation*}
e_{jk} ~=~\frac{\partial \ln P( y=j|x,p,\xi ;\theta ) }{\partial \ln p_{k}}.
\end{equation*}
By combining this with \eqref{eq:sharesUncond} and \eqref{eq:sigma_defn}, we obtain an expression for the own-price elasticity of product $j$, given by
\begin{equation}
e_{jj} ~=~-\frac{p_{j}}{s_{j}}\int_{(\zeta ,\nu )}(\alpha +\nu )
\left(\begin{array}{c}
\tfrac{\exp (\beta x_{j} -\alpha p_{j}+\xi _{j}+\zeta x_{j} -\nu p_{j})}{1+\sum_{b=1}^{J}\exp (\beta x_{b} -\alpha p_{b}+\xi _{b}+\zeta x_{b} -\nu p_{b})}\times \\
\left( 1-\tfrac{\exp (\beta x_{j} -\alpha p_{j}+\xi _{j}+\zeta x_{j} -\nu p_{j})}{1+\sum_{b=1}^{J}\exp (\beta x_{b} -\alpha p_{b}+\xi _{b}+\zeta x_{b} -\nu p_{b})} \right)
\end{array}\right)
f(\zeta ,\nu ;\lambda )d(\zeta ,\nu ),
\label{eq:elasticity_own}
\end{equation}
and an expression for the cross-price elasticity of product $j$ with respect to the price of product $k$, given by
\begin{equation}
e_{jk} ~=~\frac{p_{k}}{s_{j}}\int_{(\zeta ,\nu )}(\alpha +\nu )
\left(\begin{array}{c}
\tfrac{\exp (\beta x_{j} -\alpha p_{j}+\xi _{j}+\zeta x_{j} -\nu p_{j})}{1+\sum_{b=1}^{J}\exp (\beta x_{b} -\alpha p_{b}+\xi _{b}+\zeta x_{b} -\nu p_{b})}\times \\
\tfrac{\exp (\beta x_{k} -\alpha p_{k}+\xi _{k}+\zeta x_{k} -\nu p_{k})}{1+\sum_{b=1}^{J}\exp (\beta x_{b} -\alpha p_{b}+\xi _{b}+\zeta x_{b} -\nu p_{b})}
\end{array}\right)
f(\zeta ,\nu ;\lambda )d(\zeta ,\nu ).
\label{eq:elasticity_cross}
\end{equation}
The markup of product $j$ is the difference between $p_{j}$ and the marginal cost of product $j$. Under the assumption that single-product firms choose prices to maximize profits and that these are unconstrained in equilibrium, we obtain an expression relating the markup and the own elasticity:
\begin{equation}
M_{j} ~=~\frac{-P( y=j|x,p,\xi ;\theta ) }{\partial P( y=j|x,p,\xi ;\theta ) /\partial p_{j}} ~=~ - \frac{p_j}{e_{jj}}.
\label{eq:markup}
\end{equation}
Finally, the diversion ratio between products $j$ and $k$ quantifies the extent to which purchases shift from product $k$ to product $j$ in response to an infinitesimal increase in the price of product $k$. Formally, it is given by
\begin{equation*}
    D_{jk} ~=~ - \frac{\partial P( y=j|x,p,\xi ;\theta )}{\partial p_{k}} \big/ \frac{\partial P( y=k|x,p,\xi ;\theta )}{\partial p_{k}}.
\end{equation*}
By combining this with \eqref{eq:sharesUncond} and \eqref{eq:sigma_defn}, we obtain the following expression:
\begin{equation}
D_{jk} ~=~ \frac{\int_{(\zeta ,\nu )}(\alpha + \nu) 
\left(\begin{array}{c}
\tfrac{\exp (\beta x_{j} -\alpha p_{j}+\xi _{j}+\zeta x_{j} -\nu p_{j})}{1+\sum_{b=1}^{J}\exp (\beta x_{b} -\alpha p_{b}+\xi _{b}+\zeta x_{b} -\nu p_{b})}\times \\
\left( \tfrac{\exp (\beta x_{k} -\alpha p_{k}+\xi _{k}+x_{k}\zeta -\nu p_{k})}{1+\sum_{b=1}^{J}\exp (\beta x_{b} -\alpha p_{b}+\xi _{b}+\zeta x_{b} -\nu p_{b})} \right)
\end{array}\right)
f(\zeta,\nu; \lambda) d(\zeta,\nu)}{\int_{(\zeta ,\nu )}(\alpha + \nu) 
\left(\begin{array}{c}
\tfrac{\exp (\beta x_{k} -\alpha p_{k}+\xi _{k}+x_{k}\zeta -\nu p_{k})}{1+\sum_{b=1}^{J}\exp (\beta x_{b} -\alpha p_{b}+\xi _{b}+\zeta x_{b} -\nu p_{b})}\times \\
\left( 1-\tfrac{\exp (\beta x_{k} -\alpha p_{k}+\xi _{k}+x_{k}\zeta -\nu p_{k})}{1+\sum_{b=1}^{J}\exp (\beta x_{b} -\alpha p_{b}+\xi _{b}+\zeta x_{b} -\nu p_{b})} \right)
\end{array}\right)
f(\zeta,\nu;\lambda)d(\zeta,\nu)}
\label{eq:diversion}
\end{equation}

The expressions in \eqref{eq:elasticity_own}-\eqref{eq:diversion} reveal that the elasticities, markups, and diversion ratios are functions of the structural parameters $\theta$, the observables $(s,p,x)$, and the demand shocks $\xi$.

\begin{runningexample}[Elasticities and markups in the plain logit model]
In the plain logit case (i.e., $\Lambda =\{\bar{ \lambda}\}$), the expressions in \eqref{eq:elasticity_own}-\eqref{eq:diversion} simplify considerably. For any $j,k=1,\dots,J$ with $j\neq k$, we obtain
\begin{align*}
e_{jj}~=~-\alpha p_{j}( 1-s_{j}),~~~e_{jk}~ =~\alpha p_{k}s_{k},~~~M_{j}~ =~1/(\alpha (1-s_{j})),~~\text{and}~~D_{jk} ~=~ {s_j}/(1-s_{k}).
\end{align*}
Notably, these expressions depend only on $p$, $s$, and $\alpha$, and the dependence on the other structural parameters or $\xi$ drops out.
\end{runningexample}

\section{Identification without knowledge of outside-good shares}\label{sec:Identification}

This section derives the sharp identified set for the BLP model when the outside-good share $s_0$ is unobserved. Apart from this, the structure of the model is identical to that described in Section \ref{sec:standard_BLP}. In particular, we observe an i.i.d.\ sample of markets indexed by $m=1,\dots, n$. Since markets are i.i.d., we suppress the market subscript and conduct the identification analysis using a representative market. 

Within each market, the observed random variables are $w=(\tilde{s},x,p,z)$, where
\begin{equation*}
    \tilde{s}~=~\Big( s_j \big/ \sum\nolimits_{b=1}^{J} s_b \Big)_{j=1}^{J}
    ~\in~    \Delta_J ,
\end{equation*}
and $s$, $x$, $p$, $z$, and $\Delta_J$ are as defined in Section \ref{sec:standard_BLP}. Unlike in the standard case, however, the outside-good share $s_0$ cannot be recovered from the observed data.

Throughout our analysis, we assume that the unobserved outside share $s_0$ belongs to a known, possibly data-dependent measurable set $S_0(w)\subseteq(0,1)$. We interpret $S_0(w)$ as the set of outside-good shares that are consistent with the researcher's information for a given realization of the observed data. We assume that $S_0(w)$ is closed relative to $(0,1)$.\footnote{This means that if $s_{0,b} \in S_0(w)$ for all $b\in \mathbb{N}$ and $s_{0,b} \to s_0 \in (0,1)$, then $s_0 \in S_0(w)$.} The fully agnostic case corresponds to $S_0(w)=(0,1)$, where $s_0=0$ and $s_0=1$ are ruled out by the specification of the BLP model. If the researcher has additional information about the share of the outside good, this information can be incorporated by restricting $S_0(w)$ to a narrower set.



Under our assumptions, we have that, conditional on $(x,p,\xi)$,
\begin{equation}
\tilde{s} ~=~ \big( P(y=j \mid x,p,\xi,y > 0;\theta)\big)_{j=1}^{J}.
\label{eq:sharesCond}
\end{equation}
By combining \eqref{eq:sharesCond} with the conditional choice probabilities in \eqref{eq:BLP_probs_1}, we obtain
\begin{equation}
\tilde{s} ~=~ \left(\frac{\int_{(\zeta ,\nu )}\frac{
\exp ( \beta x_{j}-\alpha p_{j}+ \xi_{j} + \zeta x_{j}  -\nu p_{j} )  }{ 1+\sum_{b=1}^{J}\exp ( \beta x_{b}-\alpha p_{b}+\xi _{b}+\zeta x_{b}  -\nu p_{b} ) 
}f( \zeta ,\nu ;\lambda ) d( \zeta ,\nu ) 
}{
\int_{(\zeta ,\nu )}\frac{\sum_{l=1}^{J}\exp ( \beta x_{l}-\alpha p_{l}+\xi _{l}+\zeta x_{l}  -\nu p_{l} ) }{ 1+\sum_{b=1}^{J}\exp ( \beta x_{b}-\alpha p_{b}+\xi _{b}+\zeta x_{b} - \nu p_{b} ) }f( \zeta ,\nu ;\lambda ) d( \zeta ,\nu) }\right)_{j=1}^{J}.
\label{eq:sharesCond2}
\end{equation}

For any $\delta  \in \mathbb{R}^{J}$, the analog of the function $\sigma (\delta ,x,p;\lambda )$ in \eqref{eq:sigma_defn} in the current context is $ \tilde\sigma (\delta ,x,p;\lambda ):\mathbb{R}^{J}\times \mathbb{R}^{d_{X}}\times \mathbb{R}^{J}\times \Lambda \to \Delta_J$, given by
\begin{equation}
\tilde\sigma (\delta ,x,p;\lambda )~=~\left(\frac{\int_{(\zeta ,\nu )}\frac{ \exp ( \delta _{j}+\zeta x_{j} -\nu p_{j} )   }{ 1+\sum_{b=1}^{J}\exp ( \delta _{b}+\zeta x_{b}  -\nu p_{b} ) 
}f( \zeta ,\nu ;\lambda ) d( \zeta ,\nu ) 
}{
\int_{(\zeta ,\nu )}\frac{\sum_{l=1}^{J}\exp ( \delta _{l}+\zeta x_{l}  -\nu p_{l} ) }{ 1+\sum_{b=1}^{J}\exp ( \delta _{b}+\zeta x_{b} -\nu p_{b} ) }f( \zeta ,\nu ;\lambda ) d( \zeta ,\nu) } \right)_{j=1}^{J}.
\label{eq:sigma_defn2}
\end{equation}
Unlike $\sigma(\delta ,x,p;\lambda )$, $\tilde\sigma(\delta ,x,p;\lambda)$ is not invertible in the first argument. The next example illustrates this point in the special case of the plain logit model.

\begin{runningexample}[Non-invertibility of $\tilde\sigma(\delta ,x,p;\lambda)$ in the plain logit model]
In the plain logit case (i.e., $\Lambda =\{\bar{ \lambda}\}$), \eqref{eq:sigma_defn2} becomes
\begin{equation}
\tilde{\sigma}(\delta ,x,p;\lambda )~=~\left(\frac{ \exp \left( \delta _{j}\right) }{\sum_{b=1}^{J}\exp ( \delta _{b}) }\right) _{j=1}^{J}.
\end{equation}
For any $C \in \mathbb{R}$, note that $\tilde{\sigma}(\delta ,x,p;\lambda)= \tilde{\sigma}(\delta +{\bf 1}_{J \times 1} C,x,p;\lambda)$, and so $\tilde{\sigma}(\delta ,x,p;\lambda)$ is not invertible in the first argument.
\end{runningexample}

Therefore, the argument used in Section \ref{sec:standard_BLP} to identify the parameters of the model does not apply. In fact, the next illustration shows that the model is not (point) identified.

\begin{runningexample}[Lack of identification in the plain logit model]\label{ex:NoId}
Consider the plain logit case (i.e., $\Lambda =\{\bar{ \lambda}\}$), \eqref{eq:sigma_defn2} with $x_{j}=(1,\tilde{x}_{j}')'$ and $\beta =(\beta _{1},\beta _{2})$. Then, \eqref{eq:sharesCond2} becomes
\begin{equation*}
\tilde{s}~=~\left(\frac{ \exp ( \beta _{1}+\beta _{2}\tilde{x} _{j}-\alpha p_{j}+\xi _{j}) }{\sum_{b=1}^{J}\exp ( \beta _{1}+\beta _{2}\tilde{x}_{b}-\alpha p_{b}+\xi _{b}) }\right)_{j=1}^{J}.
\end{equation*}

First, note that $\beta _{1}$ is not identified. To see this, note that for any $( \alpha ,( \beta _{1},\beta _{2}) ,\bar{\lambda}) \in \Theta $, $( \alpha ,( \beta _{1},\beta _{2}) ,\bar{\lambda}) $ and $( \alpha ,( \beta _{1}+C,\beta _{2}) ,\bar{ \lambda}) $ are observationally equivalent. 

Next, we argue that $( \beta _{2},\alpha ) $ is identified under mild additional conditions. To see this, note that
\begin{equation*}
\ln \left( \tilde{s}_{j}/\tilde{s}_{k}\right) ~=~\beta _{2}\left( \tilde{x} _{j}-\tilde{x}_{k}\right) -\alpha (p_{j}-p_{k})+(\xi _{j}-\xi _{k})~~~ \text{for }j,k=1,\dots,J.
\end{equation*}
Then,
\begin{equation*}
E\left[ \ln \left( \tilde{s}_{j}/\tilde{s}_{k}\right) |z\right] ~=~\beta _{2}E \left[ \tilde{x}_{j}-\tilde{x}_{k}|z\right] -\alpha E\left[ p_{j}-p_{k}|z \right] ~~~ \text{for }j,k=1,\dots,J.
\label{eq:fixed_linear_model}
\end{equation*}
Provided that $E[ \tilde{x}_{j}-\tilde{x}_{k}|z] $ and $E [ p_{j}-p_{k}|z] $ vary sufficiently for some $j,k$, $( \beta _{2},\alpha ) $ is identified.
\end{runningexample}

Example \ref{ex:NoId} shows that, when outside-good shares are unobserved, the fixed-coefficient version of the BLP model need not be point identified. It also enables us to characterize the identified set of structural parameters in this special case. The next section addresses the corresponding identification problem in the general BLP model with random coefficients.

\subsection{Identification of the structural parameters}

The following result provides a sharp characterization of the identified set of the BLP model. It derives the sharp identified set for the structural parameters using tools from random set theory.

\begin{theorem}[Identified set of the structural parameters]\label{thm:Id_set_expression}
Assume that the conditional distribution of $\{(\tilde{s},x,p)|z\}$ is non-atomic a.e.\ $z\in \mathbb{S}_{z}$,
and that, for any $\theta =( \alpha ,\beta ,\lambda ) \in \Theta$, the following random set is integrable:
\begin{equation}
\mathcal{U}( w;\theta ) =\left\{ \xi \in \mathbb{R}^{J}:
\begin{array}{c}
\tilde{s}~=~\left( \dfrac{ \int_{(\zeta ,\nu )}\frac{ \exp ( \beta x_{j}-\alpha p_{j} + \xi_{j} + \zeta x_{j} -\nu p_{j} ) }{ 1+\sum_{b=1}^{J}\exp ( \beta x_{b}-\alpha p_{b} + \xi_{b} +\zeta x_{b}  -\nu p_{b} ) }f( \zeta ,\nu ;\lambda ) d( \zeta ,\nu ) }{\int_{(\zeta ,\nu )}\frac{\sum_{l=1}^{J}\exp ( \beta x_{l}-\alpha p_{l} + \xi_{l} +\zeta x_{l}-  \nu p_{l}) }{1+\sum_{e=1}^{J}\exp ( \beta x_{e}-\alpha p_{e}+\xi _{e}+\zeta x_{e}  -p_{e}\nu ) }f( \zeta ,\nu ;\lambda ) d( \zeta ,\nu ) } \right) _{j=1}^{J},\\
\sigma_0\big((\beta x_j-\alpha p_j+\xi_j)_{j=1}^J,x,p;\lambda\big)\in S_0
\end{array}
\right\}.
\label{eq:U_defn}
\end{equation}
Then, the identified set of $\theta =( \alpha ,\beta ,\lambda ) $ is
\begin{equation}
\Theta _{I}(P) =\left\{ 
\begin{array}{c}
( \alpha ,\beta ,\lambda ) \in \Theta: \\ 
E\Big[\sup_{s_{0}\in S_0(w)}v^{\prime }\big( \sigma ^{-1}( \tilde{s}(1-s_{0}),x,p;\lambda ) -( \beta x_{j}-\alpha p_{j})_{j=1}^{J}\Big) \Big|z \Big]\geq 0 \\ 
\text{for all }v\in \mathbb{R}^{J}:\Vert v\Vert =1\text{ and a.e. }z\in \mathbb{S}_{z}
\end{array}
\right\} ,\label{eq:Id_set_expression}
\end{equation}
where $\sigma ^{-1}$ is the inverse of the function in \eqref{eq:sigma_defn} with respect to the first argument.
\end{theorem}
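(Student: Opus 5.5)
We would follow the standard random-set route of \citet{molchanov/molinari:2018} for models with conditional moment restrictions on unobservables. First, for fixed $\theta$, we rewrite the random set $\mathcal{U}(w;\theta)$ in \eqref{eq:U_defn} explicitly. The key observation is that, for each candidate outside share $s_0\in S_0(w)$, the vector of full shares is $s=\tilde{s}(1-s_0)$. Hence the pair of conditions defining $\mathcal{U}(w;\theta)$ (conditional shares equal $\tilde s$, outside share in $S_0$) is equivalent to $\sigma(\delta,x,p;\lambda)=\tilde{s}(1-s_0)$ for some $s_0\in S_0(w)$, with $\delta_j=\beta x_j-\alpha p_j+\xi_j$. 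Indeed, $\sigma_0(\delta,x,p;\lambda)=s_0$ and $\tilde\sigma=\sigma/(1-\sigma_0)$. By the invertibility of $\sigma$ recalled in Section \ref{sec:standard_BLP} \citep{berry:1994,berry/levinsohn/pakes:1995}, this gives
\begin{equation*}
\mathcal{U}(w;\theta)~=~\Big\{\sigma^{-1}(\tilde{s}(1-s_0),x,p;\lambda)-(\beta x_j-\alpha p_j)_{j=1}^J:~s_0\in S_0(w)\Big\},
\end{equation*}
the image of $S_0(w)$ under a continuous map. We then check that $\mathcal{U}(w;\theta)$ is a closed random set. Closedness follows because $S_0(w)$ is closed relative to $(0,1)$ and the map $s_0\mapsto\sigma^{-1}(\tilde s(1-s_0),\cdot)$ diverges in norm as $s_0\to0$ or $s_0\to1$: some coordinate of $\sigma^{-1}$ goes to $\pm\infty$. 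So limit points stay in the image. Measurability follows from measurability of $S_0(w)$ and continuity of $\sigma^{-1}$.

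Second, we characterize $\Theta_I(P)$ by observational equivalence. The parameter $\theta$ is in the identified set if and only if there exists a random vector $\xi$, jointly distributed with $w$, such that $\xi\in\mathcal{U}(w;\theta)$ a.s.\ (i.e., $\xi$ is a measurable selection) and $E[\xi|z]=0$. Conversely, given such a selection, the implied $s_0=\sigma_0(\cdot)$ and the model reproduce the observed distribution of $w$, so the conditions are also sufficient.

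Third, we translate the existence of a conditional mean-zero selection into the Aumann expectation condition $0\in E[\mathcal{U}(w;\theta)|z]$ for a.e.\ $z$. Here integrability is assumed, and the non-atomicity of $\{(\tilde s,x,p)|z\}$ makes the conditional Aumann expectation convex, so its closure equals the conditional selection expectation. By the support-function characterization of closed convex sets (Theorem 3.11 / conditional version in \citet{molchanov:2005}), $0\in E[\mathcal{U}|z]$ iff $h_{E[\mathcal{U}|z]}(v)\ge 0$ for all unit $v$, and $h_{E[\mathcal{U}|z]}(v)=E[h_{\mathcal{U}}(v)|z]=E[\sup_{s_0\in S_0(w)}v'(\cdots)|z]$. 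This is exactly \eqref{eq:Id_set_expression}.

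The main obstacle is the third step: it involves the conditional (rather than unconditional) Aumann expectation and making the ``a.e.\ $z$, all $v$'' quantifier legitimate. We would handle this by working with a regular conditional distribution of $w$ given $z$. Non-atomicity yields convexity of the conditional expectation for each $z$. Closedness of $E[\mathcal{U}|z]$ is a separate issue, since $\mathcal{U}$ can be unbounded; if it fails, we would argue that $0$ lying in the closure suffices, because a sequence of approximating selections can be combined via weak compactness of conditionally integrable selections. Measurable selection of the conditional-mean-zero $\xi$ across $z$ follows from a standard measurable selection theorem applied to the set-valued map $z\mapsto\{$selections with zero conditional mean$\}$.
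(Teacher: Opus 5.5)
Your architecture matches the paper's. Writing $\mathcal{U}(w;\theta)$ as the image of $S_0(w)$ under $s_0\mapsto\sigma^{-1}(\tilde s(1-s_0),x,p;\lambda)-(\beta x_j-\alpha p_j)_{j=1}^J$ is Lemma \ref{lem:charaU}. Non-atomicity giving convexity, followed by the expected support function, is the paper's last step. Your closedness argument is a valid alternative to the dominated-convergence proof of Lemma \ref{lem:U_is_Closed}: a convergent sequence in $\mathcal{U}$ is bounded, so the associated $s_0$'s stay in a compact subset of $(0,1)$, and you pass to the limit using closedness of $S_0(w)$ relative to $(0,1)$ and continuity.

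The real difference is in how $\Theta_I(P)=\{\theta:\mathbf{0}\in\mathbb{E}[\mathcal{U}(w;\theta)|z]\text{ a.e.}\}$ is obtained. The paper does not prove this itself; it verifies Assumptions A1--A5 and MI of \citet{chesher/rosen:2017} and takes the characterization from their Theorem 5. You instead define $\Theta_I(P)$ through an \emph{exact} selection $\xi\in\mathcal{U}(w;\theta)$ with $E[\xi|z]=0$. You therefore need the converse implication, from the support-function inequalities back to such a selection.

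That converse is where your proposal breaks. The support-function inequalities only characterize membership of $\mathbf{0}$ in the \emph{closed} conditional selection expectation. You therefore need $\{E[\xi|z]:\xi\text{ an integrable selection}\}$ to be closed, and your weak-compactness repair does not deliver this. $\mathcal{U}(w;\theta)$ is unbounded in exactly the cases the paper emphasizes, $S_0(w)=(0,1)$ or $[c,1)$. Selections whose conditional means approach $\mathbf{0}$ then need not be uniformly integrable, and Dunford--Pettis gives no weak $L^1$ limit.

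This is not merely a weakness of the argument: for integrable but unbounded sets, the set of selection means can genuinely fail to be closed. Take $\omega\sim U(0,1)$ and $X(\omega)=\{(t,\omega t):t\ge 0\}$. The point $(1,0)$ lies in the closure, via $t_k=k\,1\{\omega<1/k\}$, but is the mean of no selection.

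The same phenomenon can plausibly occur here in directions $v$ with $\sum_j v_j=0$. There $\sup_{s_0\in S_0(w)}v'\sigma^{-1}(\tilde s(1-s_0),x,p;\lambda)$ can be finite yet approached only as $s_0\to 0$ or $s_0\to 1$. Suppose $\theta$ satisfies \eqref{eq:Id_set_expression} with equality in such a direction on a positive-probability set of $z$. Every selection then has $v'\xi<\sup_{\mathcal{U}(w;\theta)}v'\xi$ a.s., so no exact mean-zero selection exists. Hence your exact-selection set is contained in the right-hand side of \eqref{eq:Id_set_expression}, but equality can fail at boundary points.

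To close the argument you have two options:
\begin{itemize}
\item State the identified set at the level of $\mathbb{E}[\mathcal{U}(w;\theta)|z]$, as the paper does by invoking Chesher--Rosen.
\item Add conditions under which conditional means of selections form a closed set, e.g.\ $S_0(w)$ compact in $(0,1)$ with $\mathcal{U}$ integrably bounded; Aumann's theorem plus non-atomicity then gives a compact convex set of conditional means.
\end{itemize}
A secondary point: gluing the $z$-wise selections needs more than a standard measurable-selection theorem, because the relevant map is set-valued in a space of random vectors rather than in $\mathbb{R}^J$.
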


Theorem \ref{thm:Id_set_expression} is the main identification result of the paper. It characterizes the sharp identified set $\Theta_I(P)$ for the structural parameters when outside shares are unobserved but are restricted to lie in $S_0(w)$. The key complication is that, for a given value of $\theta$, the model may be consistent with multiple demand shocks, depending on the admissible value of the outside share. This is captured by the set $\mathcal{U}(w;\theta)$, which collects all demand shocks that are compatible with the observables, the candidate parameter value, and the restriction on the outside share. Characterizing the existence of admissible shocks satisfying the BLP moment restrictions leads to the support-function inequalities in \eqref{eq:Id_set_expression}. The characterization is sharp and also leads naturally to feasible inference procedures, as we show in Section \ref{sec:Inference}.

The characterization is derived using random-set theory, which is well-suited to this setting because the model generates a set of admissible shocks rather than a single shock for each value of $\theta$.  As we explain in Section \ref{sec:NeedRST}, simpler representations based on repeated applications of the standard BLP inversion are generally not sufficient to recover $\Theta_I(P)$.

Theorem \ref{thm:Id_set_expression} relies on two regularity conditions. The first is that the conditional distribution of $(\tilde{s},x,p)$ given $z$ is non-atomic for almost every $z\in\mathbb S_z$; see \citet[page 35]{billingsley:1995}. This condition is satisfied whenever at least one component of $(\tilde{s},x,p)$ is continuously distributed conditional on $z$. It is mild in our setting, since applications typically assume that either $\tilde{s}$ or $p$ is conditionally continuously distributed. The role of this condition is to ensure that the conditional selection expectation of the relevant random set is almost surely convex. This allows us to reexpress the IV exogeneity condition
to be written equivalently as the support-function inequalities in \eqref{eq:Id_set_expression}; see, for example, \citet[Theorem 3.7]{molchanov/molinari:2018}.

The second regularity condition is the integrability of the random set $\mathcal{U}(w;\theta)$ in the sense of selection expectations. That is, $\mathcal{U}(w;\theta)$ must admit at least one integrable selection; see \citet[Definition 3.1]{molchanov:2005} and \citet[page 70]{molchanov/molinari:2018}. This condition ensures that the relevant conditional selection expectation $E[\mathcal{U}(w;\theta)\mid z]$ is well defined. Importantly, it is weaker than integrable boundedness and allows $\mathcal{U}(w;\theta)$ to be unbounded. When $\mathcal{U}(w;\theta)$ is unbounded, its support function may be infinite in some directions. Such directions impose no restrictions on the structural parameter, while finite-valued directions generate the nonredundant restrictions. This distinction is important for our purposes, as illustrated in the running example below.

\begin{runningexample}[Computing objects in Theorem \ref{thm:Id_set_expression} in the plain logit model]
Consider the plain logit case (i.e., $\Lambda =\{\bar{ \lambda}\}$), with $x_{j}=(1,\tilde{x}_{j}')'$ and $\beta =(\beta_{1},\beta_{2})$. For concreteness, we focus on the fully agnostic case, i.e., $S_0(w)=(0, 1)$.  In this case, \eqref{eq:U_defn} becomes
\begin{align*}
\mathcal{U}(w;\theta)~&\overset{(1)}{=}~ \left\{\xi \in \mathbb{R}^{J}: \tilde{s}~=~\left(\frac{\exp (\beta x_{j}-\alpha p_{j}+\xi _{j} )}{\sum_{b=1}^{J}\exp (\beta x_{b}-\alpha p_{b}+\xi _{b} )}\right)_{j=1}^{J} \right\} \\
~&\overset{(2)}{=}~\left\{ \xi \in \mathbb{R}^{J}:
\left\{\begin{array}{c}
\xi _{j}-\xi _{k}=\ln (\tilde{s}_{j}/ \tilde{s}_{k})-\beta _{2}'(\tilde{x}_{j}-\tilde{x}_{k})+\alpha (p_{j}-p_{k})\\
\text{ for all }j,k=1,\dots,J
\end{array}\right\}
\right\} ,
\end{align*}
where (1) holds by $\sigma_0\big((\beta x_j-\alpha p_j+\xi_j)_{j=1}^J,x,p;\lambda\big)\in (0,1)=: S_0(w)$, and (2) by $x_{j}=(1,\tilde{x}_{j}')'$ and $\beta =(\beta _{1},\beta _{2})$. Given one of the demand shocks (say, $\xi _{1}$), the rest of the demand shocks in $\xi$ are determined by the observables and structural parameters.

We now turn to the identified set $\Theta_I(P)$. Since $\lambda\in\Lambda=\{\bar{\lambda}\}$ is point identified by assumption, it remains to characterize the restrictions on $\alpha$, $\beta_1$, and $\beta_2$. By \eqref{eq:sigmaj_inv}, the expression inside the conditional expectation in \eqref{eq:Id_set_expression} is
\begin{equation*}
 \sup_{s_{0}\in S_0(w)}\left(\ln ({(1-s_{0})}/{s_{0}})-\beta _{1} \right)\Big(\sum\nolimits_{j=1}^{J}v_{j}\Big) +v^{\prime }\Big(\ln (\tilde{s}_{j})-\beta _{2}'\tilde{x}_{j}+\alpha p_{j}\Big)_{j=1}^{J}.
\end{equation*}
If $\sum_{j=1}^{J}v_j\neq 0$, the supremum over $s_0\in(0,1)$ is equal to infinity. Hence, the corresponding support-function inequality is automatically satisfied and imposes no restriction on $\theta$. The only directions that restrict $\theta$ are those satisfying $\sum_{j=1}^{J}v_j=0$. For these directions, the common outside-share component vanishes, and \eqref{eq:Id_set_expression} reduces to
\begin{equation}
E \big[ v^{\prime } \big( \ln (\tilde{s}_{j})-\beta_{2}'\tilde{x}_{j}+\alpha p_{j} \big)_{j=1}^{J} ~\big|~ z \big]~\geq~ 0
\label{eq:example_id_set_1}
\end{equation}
for all $v\in\mathbb R^J$ such that $\Vert v\Vert=1$ and $\sum_{j=1}^J v_j=0$, and for a.e.\ $z\in\mathbb S_z$. Since this collection of directions is symmetric, in the sense that $v$ is admissible if and only if $-v$ is admissible, the inequalities in \eqref{eq:example_id_set_1} are equivalent to the equalities
\begin{equation}
E\big[v^{\prime }\big(\ln (\tilde{s}_{j})-\beta_{2}'\tilde{x}_{j}+\alpha p_{j}\big)_{j=1}^{J}~\big|~  z\big]~=~0
\label{eq:example_id_set_2}
\end{equation}
for all $v\in\mathbb R^J$ such that $\Vert v\Vert=1$ and $\sum_{j=1}^J v_j=0$, and for a.e.\ $z\in\mathbb S_z$. Equivalently, these restrictions can be written as the pairwise conditional moment equalities
\begin{equation}
E\left[
\ln (\tilde{s}_{j}/\tilde{s}_{k})
-\beta_{2}'\left( \tilde{x}_{j}-\tilde{x}_{k}\right)
+\alpha \left( p_{j}-p_{k}\right)
\mid z
\right]
=0
\label{eq:example_id_set_3}
\end{equation}
for all $j,k=1,\dots,J$ with $j\neq k$ and a.e.\ $z\in\mathbb S_z$. Therefore, the identified set in \eqref{eq:Id_set_expression} can be equivalently expressed as
\begin{equation}
\Theta _{I}(P)=\left\{ 
\begin{array}{c}
(\alpha ,\beta _{1},\beta _{2},\lambda )\in \Theta =\mathbb{R}\times \mathbb{R}\times \mathbb{R}^{d_{X}-1}\times\{\bar{\lambda}\}: \\ 
E[ \ln (\tilde{s}_{j}/\tilde{s}_{k})-\beta _{2}'( \tilde{x}_{j}-\tilde{x}_{k}) +\alpha ( p_{j}-p_{k}) |z] =0 \\ 
\text{for all }j,k=1,\dots ,J\text{ with }j\neq k\text{ and a.e. }z\in \mathbb{S}_{z}
\end{array}
\right\} .
\end{equation}
Naturally, the conclusions coincide with our earlier ad-hoc derivations in Example \ref{ex:NoId}. The parameter $\lambda$ is point identified by assumption, $\beta _{1}$ is completely unidentified, and $\alpha$ and $\beta_{2}$ are restricted by the pairwise conditional moment equalities above. Provided that $E[ \tilde{x}_{j}-\tilde{x}_{k}|z] $ and $E[ p_{j}-p_{k}|z] $ vary sufficiently for some $j,k$, $\alpha$ and $\beta_{2} $ are point identified.
\end{runningexample}

Theorem \ref{thm:Id_set_expression} immediately delivers identified sets for components or subvectors of the structural parameters. These sets are obtained by projecting $\Theta_I(P)$ onto the corresponding coordinates. For example, the identified set for the price coefficient $\alpha$ is obtained by projecting $\Theta_I(P)$ onto the $\alpha$ coordinate:
\begin{equation*}
\Theta_{I}^\alpha(P)=
\left\{
\alpha\in\mathbb R:
(\alpha,\beta,\lambda)\in\Theta_I(P)
\text{ for some }(\beta,\lambda)
\right\}.
\label{eq:projection}
\end{equation*}
Identified sets for other components or subvectors of $\theta$ are obtained analogously.

\subsection{Can we get the identified set by repeated applications of BLP?}\label{sec:NeedRST}

Given the result in Theorem \ref{thm:Id_set_expression}, a natural question is whether $\Theta_I(P)$ can be recovered by repeatedly applying the standard BLP identification argument, varying the outside share over all admissible values $s_0\in S_0(w)$ for each realization of the data. The next result shows that this is possible in principle, but that the resulting characterization is of limited practical use because it requires searching over outside-share choices as a function of the full vector of observables.

\begin{theorem}[Alternative characterization of the identified set]\label{thm:id_sets1}
Let $\mathbb{S}_{w}$ denote the support of the observed data $w = (\tilde s, x, p, z)$. Under the conditions of Theorem \ref{thm:Id_set_expression},
\begin{align*}
\Theta_I(P) &~=~\left\{ 
\begin{array}{c}
( \alpha ,\beta ,\lambda ) \in \Theta:\exists s^*_{0}:\mathbb{S}_{w}\to (0,1), \text{ with } s^*_0(w) \in S_0(w) \text{ for all }w \in \mathbb{S}_{w},\\ 
E\big[ \sigma ^{-1}( \tilde{s}(1-s^*_{0}(w) ),x,p;\lambda ) -\left( \beta x_{j}-\alpha p_{j}\right) _{j=1}^{J}\big| z\big] ={\bf 0}_{J \times 1}~\text{a.e.}~z\in \mathbb{S}_{z}
\end{array}
\right\} 
\end{align*}
\end{theorem}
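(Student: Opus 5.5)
The plan is to show that both sets coincide with the set of $\theta$ for which the random set $\mathcal{U}(w;\theta)$ admits a selection $\xi(w)$ with $E[\xi(w)\mid z]=\mathbf{0}_{J\times 1}$ a.e. In the proof of Theorem \ref{thm:Id_set_expression} this is the intermediate characterization: the sharp identified set consists of those $\theta$ with $\mathbf{0}_{J\times1}\in E[\mathcal{U}(w;\theta)\mid z]$ a.e. Under non-atomicity and integrability, this membership is equivalent to the support-function inequalities in \eqref{eq:Id_set_expression}. I would therefore take as given that $\theta\in\Theta_I(P)$ if and only if there is a measurable, integrable selection $\xi(w)\in\mathcal{U}(w;\theta)$ with zero conditional mean given $z$. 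It then remains to show that such selections correspond exactly to measurable functions $s_0^*(\cdot)$ as in the statement.

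The first step is to reparametrize $\mathcal{U}(w;\theta)$ by the outside share. Fix $w$ and $\theta$, and take any $s_0\in S_0(w)$. Put $s=\tilde s(1-s_0)$, so that $(s_0,s)\in\Delta_{J+1}$. By the BLP contraction result (Berry, 1994; BLP, 1995), there is a unique $\delta=\sigma^{-1}(s,x,p;\lambda)$. Define $\xi=\delta-(\beta x_j-\alpha p_j)_{j=1}^J$. Then $\sigma_0(\delta,x,p;\lambda)=1-\sum_j s_j=s_0\in S_0(w)$, and the ratio $\sigma_j/\sum_l\sigma_l=\tilde s_j$, so $\xi\in\mathcal{U}(w;\theta)$.

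Conversely, take $\xi\in\mathcal{U}(w;\theta)$ and let $\delta=(\beta x_j-\alpha p_j+\xi_j)_j$ and $s_0=\sigma_0(\delta,x,p;\lambda)\in S_0(w)$. The defining equation then gives $\sigma_j(\delta)=\tilde s_j(1-s_0)$, and injectivity of $\sigma$ yields $\delta=\sigma^{-1}(\tilde s(1-s_0),x,p;\lambda)$. Hence
\[
\mathcal{U}(w;\theta)=\{g(w,s_0;\theta): s_0\in S_0(w)\},\qquad g(w,s_0;\theta)=\sigma^{-1}(\tilde s(1-s_0),x,p;\lambda)-(\beta x_j-\alpha p_j)_{j=1}^J ,
\]
and $s_0\mapsto g(w,s_0;\theta)$ is a bijection from $S_0(w)$ onto $\mathcal{U}(w;\theta)$, with inverse $\xi\mapsto\sigma_0(\beta x-\alpha p+\xi,x,p;\lambda)$.

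The two inclusions follow from this bijection. ($\supseteq$) Given $s_0^*$ as in the statement, $\xi(w)=g(w,s_0^*(w);\theta)$ is a selection of $\mathcal{U}(w;\theta)$ with zero conditional mean, so $\theta\in\Theta_I(P)$. ($\subseteq$) Given a selection $\xi(w)$ with zero conditional mean, set $s_0^*(w)=\sigma_0(\beta x-\alpha p+\xi(w),x,p;\lambda)$. Since $\sigma_0$ is continuous, $s_0^*$ is measurable, it lies in $S_0(w)$, and $g(w,s_0^*(w);\theta)=\xi(w)$, so the moment condition in the statement holds.

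The main obstacle is measurability and integrability in the $\supseteq$ direction. The statement quantifies over functions $s_0^*$ for which the conditional expectation exists, so I would read it as requiring $s_0^*$ measurable with $g(w,s_0^*(w);\theta)$ integrable, which is implicit in the expectation being defined. Under that reading, measurability of $\xi$ follows because $\sigma^{-1}$ is continuous in $s$, by the implicit function theorem applied to the smooth, invertible $\sigma$. I would also check that the equivalence between ``zero lies in the conditional selection expectation'' and ``some selection has zero conditional mean'' holds for the possibly non-closed expectation. This is where the non-atomicity assumption and the closedness of $S_0(w)$ in $(0,1)$ would be invoked, as in the proof of Theorem \ref{thm:Id_set_expression}. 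Closedness makes $\mathcal{U}(w;\theta)$ a closed random set, since it is the continuous image $g(w,S_0(w);\theta)$ whose inverse map is continuous, so the selection-based characterization applies directly.
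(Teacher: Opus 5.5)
Your argument is correct and is essentially the paper's proof. The $\subseteq$ direction is identical: extract a zero-conditional-mean selection of $\mathcal{U}(w;\theta)$ and reparametrize it by the outside share. The paper isolates this reparametrization as Lemma \ref{lem:charaU}, and your explicit inverse $\xi\mapsto\sigma_0(\cdot)$ additionally gives measurability of $s_0^*$, which the paper leaves implicit. For $\supseteq$, the paper skips the selection-expectation step and works directly with the support-function form of $\Theta_I(P)$: it bounds $\sup_{s_0\in S_0(w)}v'(\cdot)$ below by its value at $s_0^*(w)$ and takes conditional expectations, which is exactly what your step ``a selection with zero conditional mean implies $\theta\in\Theta_I(P)$'' reduces to.
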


Theorem \ref{thm:id_sets1} is intuitive. Since the source of partial identification is that $s_0$ can only be restricted to lie in $S_0(w)$, one can generate $\Theta_I(P)$ by considering all possible values of this outside share for each data realization. This characterization, however, is not well suited for implementation. In a finite sample, $w = (\tilde s, x, p, z)$ can take as many values as there are markets, and the number of markets in a dataset can be large. In fact, if $w$ contains a continuously distributed component (e.g., product prices), then there will typically be as many realizations of $w$ as markets in the data. By contrast, the equivalent representation of $\Theta_I(P)$ in Theorem \ref{thm:Id_set_expression} is more amenable to inference; see Section \ref{sec:Inference}.

Given this difficulty, one may ask whether a simpler characterization is available under additional restrictions. In particular, suppose that the restriction on the outside share depends on $w$ only through the instrument $z$, so that $S_0(w)$ can be written as $S_0(z)$. This includes, for example, the fully agnostic case in which $S_0(w)=(0,1)$. One might then try to replace the full class of functions $s_0^*:\mathbb S_w\to(0,1)$, satisfying $s_0^*(w)\in S_0(z)$, with the simpler class of functions $s_0^*:\mathbb S_z\to(0,1)$, satisfying $s_0^*(z)\in S_0(z)$.\footnote{This nests the common practice in which outside-good shares are assumed to be functions of potentially exogenous observables. The parameters of this function are treated as fixed, and sensitivity analysis is conducted by applying the standard BLP framework to each possible parameter value within a prespecified range. Such an approach further limits attention to a particular class of functions.} The following result shows that this simplification can be too restrictive: it yields a subset of $\Theta_I(P)$ and may even produce an empty set, thereby excluding the true parameter value.

\begin{theorem}[Simpler characterization of the identified set may not work]\label{thm:id_sets2}
Assume the conditions of Theorem \ref{thm:Id_set_expression} and that the outside share depends on $w = (\tilde s, x, p, z)$ only through $z$, and so $S_0(w)=S_0(z)$ for all $w\in \mathbb{S}_w$. Define the following set:
\begin{align*}
\mathcal{H}(P)  &~=~\left\{ 
\begin{array}{c}
( \alpha ,\beta ,\lambda ) \in \Theta:\exists s^*_{0}:\mathbb{S}_{z}\to (0,1), \text{ with } s^*_0(z) \in S_0(z) \text{ for all }z \in \mathbb{S}_{z},\\ 
E\big[ \sigma ^{-1}( \tilde{s}(1-s^*_{0}(z) ),x,p;\lambda ) -\left( \beta x_{j}-\alpha p_{j}\right)_{j=1}^{J}\big| z\big] ={\bf 0}_{J \times 1}~\text{a.e.}~z\in \mathbb{S}_{z}
\end{array}
\right\}.
\end{align*}
Then, $\mathcal{H}(P)\subseteq  \Theta _{I}(P)$ and, in general, $\Theta _{I}(P) \neq \mathcal{H}(P)$. In fact, it is possible to have data generating processes in which $\mathcal{H}(P)$ is an empty set and $\Theta _{I}(P)$ is not an empty set.
\end{theorem}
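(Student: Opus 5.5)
The statement has two parts. The inclusion $\mathcal{H}(P)\subseteq\Theta_I(P)$ follows from the alternative characterization, and the strictness claim is established by an explicit counterexample.

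\textbf{Inclusion.} Take $\theta\in\mathcal{H}(P)$ with witness $s^*_0:\mathbb{S}_z\to(0,1)$ satisfying $s^*_0(z)\in S_0(z)$. Define $\bar s_0(w)=s^*_0(z)$ for $w=(\tilde s,x,p,z)$. Since $S_0(w)=S_0(z)$, this $\bar s_0$ is admissible in Theorem \ref{thm:id_sets1}, i.e.\ $\bar s_0(w)\in S_0(w)$ for all $w$. The conditional moment equality in $\mathcal{H}(P)$ is exactly the one required in Theorem \ref{thm:id_sets1} for $\bar s_0$. Hence $\theta\in\Theta_I(P)$ by Theorem \ref{thm:id_sets1}.

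If one prefers to bypass Theorem \ref{thm:id_sets1}, the same point can be checked directly from Theorem \ref{thm:Id_set_expression}. For each unit $v$, the supremum over $s_0\in S_0(w)$ dominates the value at $\bar s_0(w)$. Taking conditional expectations, the displayed expectation is at least $v'\mathbf{0}=0$.

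\textbf{Counterexample.} I would use the plain logit running example in the fully agnostic case $S_0(w)=(0,1)$, with $x_j=(1,\tilde x_j')'$. By \eqref{eq:sigmaj_inv},
\[
\sigma^{-1}_j(\tilde s(1-s_0),x,p;\bar\lambda)=\ln\tilde s_j+\ln\big((1-s_0)/s_0\big).
\]
For a candidate $z$-measurable $s^*_0(z)$, the term $c(z)=\ln((1-s^*_0(z))/s^*_0(z))$ ranges over all of $\mathbb{R}$. So $\theta\in\mathcal{H}(P)$ if and only if there exists a measurable $c(z)$ with
\[
E[\ln\tilde s_j-\beta_2'\tilde x_j+\alpha p_j\mid z]=\beta_1-c(z)\quad\text{for all }j.
\]
This says the left side must be a common function of $z$ across $j$, and that is precisely the pairwise equalities \eqref{eq:example_id_set_3}. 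Thus in the plain logit case with no random coefficients, $\mathcal{H}(P)=\Theta_I(P)$, and the counterexample must come from elsewhere.

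There are two candidate sources. One is to let $S_0$ be a nontrivial interval $[a,b]\subset(0,1)$ depending only on $z$. The other is to use random coefficients, where $\sigma^{-1}(\tilde s(1-s_0),\cdot)$ is no longer additively separable in $s_0$ and $\tilde s$.

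The cleanest route is a constrained-interval logit. Let $S_0(z)=[a,b]$, so $c$ is restricted to $[c_b,c_a]$, with $c_a=\ln((1-a)/a)$ and $c_b=\ln((1-b)/b)$. Build a design with $J=1$, or $J=2$, in which the $w$-dependent shift $c(w)\in[c_b,c_a]$ can average, conditional on $z$, to produce the required intercept, but no single constant $c(z)$ works for all $j$ simultaneously.

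Concretely, take $J=2$, set $\alpha$ and $\beta_2$ to be known, and take $z$ degenerate, so that only unconditional moments matter. Then the requirement is to find $c(w)\in[c_b,c_a]$ with
\[
E[c(w)]=\beta_1-E[\ln\tilde s_j-\beta_2'\tilde x_j+\alpha p_j]\quad\text{for }j=1,2 .
\]
Since $E[c(w)]$ is the same number for both $j$, separable logit still cannot separate $\mathcal{H}(P)$ from $\Theta_I(P)$ through unconditional constraints alone. Therefore the separation must come from the conditional structure: the instrument $z$ is nondegenerate, and $\tilde s$ is not $z$-measurable. Even then, separable logit gives $\mathcal{H}(P)=\Theta_I(P)$ whenever the pairwise moment equalities are the binding constraints.

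I therefore expect the main obstacle to be constructing a genuinely non-separable example. The plan is to use a random-coefficient specification, for instance a two-point distribution for $\nu$ with $J=1$. Then $g(\tilde s,p,s_0)=\sigma^{-1}(1-s_0,p;\lambda)$ depends on $p$ nonlinearly in $s_0$. With $z$ degenerate and the true $s_0$ varying with $p$, equal to a value $s_0^\dagger(p)$, the true $\theta$ satisfies $E[g(p,s_0^\dagger(p))-\beta+\alpha p]=0$, so $\theta\in\Theta_I(P)$ and $\Theta_I(P)\neq\emptyset$.

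To make $\mathcal{H}(P)$ empty, I would restrict $\Theta$ to a small set, or even a singleton containing the truth, together with a tight $S_0=[a,b]$. The goal is that no constant $s_0\in[a,b]$ satisfies the single moment equation for any $\theta\in\Theta$. Verifying this last property requires a monotonicity or intermediate-value computation. Because the constant and the $p$-varying choice face the same one-dimensional restriction, it may fail with only one moment, so I would add a second instrument value to create two equations. That tuning step is where I expect the difficulty to lie.
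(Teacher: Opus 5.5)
Your inclusion argument is correct and is the paper's: a $z$-measurable witness is a special case of the $w$-measurable witness in Theorem \ref{thm:id_sets1}. Your finding that plain logit with $S_0=(0,1)$ cannot separate the two sets matches the paper's Lemma \ref{lem:rand_coeff_necessary}. The gap is the counterexample, which is never produced, and the design you settle on cannot produce one. With $J=1$ you have $\tilde s\equiv 1$. Since $\sigma(\cdot,x,p;\lambda)$ is strictly increasing in the scalar $\delta$ for every $\lambda$, the map $s_0\mapsto g(w,s_0)=\sigma^{-1}(1-s_0,x,p;\lambda)-\beta x+\alpha p$ is strictly decreasing. Suppose some admissible $s_0(w)\in[a,b]$ gives $E[g(w,s_0(w))\mid z]=0$. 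Then $E[g(w,b)\mid z]\le 0\le E[g(w,a)\mid z]$. The intermediate value theorem, applied to $s\mapsto E[g(w,s)\mid z]$ (continuous by dominated convergence, given integrability at the endpoints), yields a $z$-measurable $s_0^*(z)\in[a,b]$. So for $J=1$ and interval-valued $S_0(z)$ one has $\mathcal{H}(P)=\Theta_I(P)$. This holds with or without random coefficients, however tight the interval, and whatever $\Theta$ is. Your proposed fix of adding a second instrument value does not create two equations in one unknown, because the witness in $\mathcal{H}(P)$ is a function $s_0^*(z)$: each support point of $z$ brings its own free scalar. For the same reason, separation does not come from a nondegenerate $z$; the paper's example has $z$ degenerate.

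What is missing is overdetermination at a fixed $z$. A single scalar $s_0^*(z)$ must solve $J$ moment equations at once, while the $w$-measurable selection in Theorem \ref{thm:id_sets1} is flexible enough to solve them. This requires $J\ge 2$, and an $s_0$-dependence of $\sigma^{-1}_j$ that is not a common additive shift across $j$ (hence random coefficients, when $S_0$ is an interval). The paper's design is:
\begin{itemize}
\item $J=2$, $z=0$, $x_1=0$, $x_2=10B$ with $B\sim\mathrm{Bernoulli}(1/2)$;
\item $p$ continuous and independent, $\xi\sim N(\mathbf{0},\mathbf{I})$;
\item a random coefficient $\zeta\sim N(0,1)$ on $x$, $\theta=(0,0,1)$, $\Theta=\{\theta\}$;
\item the fully agnostic $S_0=(0,1)$.
\end{itemize}
The true $\theta$ lies in $\Theta_I(P)$ by construction. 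Membership in $\mathcal{H}(P)$, however, would require one constant $s_0^*$ to zero out both moments. These moments mix a logit type ($x_2=0$) and a strongly heterogeneous type ($x_2=10$) whose inversions respond differently to $s_0$. Numerically, the sum of squared moments is minimized near $s_0\approx 0.23$ at about $0.179>0$. Incidentally, your convexity observation also points to an elementary alternative, since $S_0(z)$ need only be closed relative to $(0,1)$. Take a plain-logit design with a two-point $S_0=\{a,b\}$ and a true outside share equal to each value with positive probability. This puts $E[c(w)\mid z]$ strictly between $c_b$ and $c_a$, so no $c(z)\in\{c_a,c_b\}$ works. The paper's random-coefficient construction is what shows the failure persists even in the fully agnostic case.
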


Theorems \ref{thm:id_sets1} and \ref{thm:id_sets2} underscore that recovering $\Theta_I(P)$ requires searching over a rich class of functions. Any simplification of this space can produce a strict subset of $\Theta_I(P)$, potentially even empty. In what follows, we adopt the characterization of $\Theta_I(P)$ in Theorem \ref{thm:Id_set_expression} (equivalently, Theorem \ref{thm:id_sets1}). Despite its complexity, the representation in \eqref{eq:Id_set_expression} is tractable and amenable to standard inference methods, as shown in Section \ref{sec:Inference}.

\subsection{Identification of other quantities of interest}

In addition to structural parameters, one may be interested in other economically relevant quantities in the BLP model. As shown in Section \ref{sec:standard_BLP}, these objects are functions of the structural parameters $\theta $, the observables $(s,p,x,z)$, and the demand shocks $\xi $. 

For concreteness, we begin our analysis with the own-price elasticities. Recall from Section \ref{sec:standard_BLP} that the own-price elasticity of product $j=1,\dots,J$ is given by
\begin{equation*}
e_{jj} ~=~-\frac{p_{j}}{s_{j}}\int_{(\zeta ,\nu )}(\alpha +\nu )
\left(\begin{array}{c}
\tfrac{\exp (\beta x_{j} -\alpha p_{j}+\xi _{j}+\zeta x_{j} -\nu p_{j})}{1+\sum_{b=1}^{J}\exp (\beta x_{b} -\alpha p_{b}+\xi _{b}+\zeta x_{b} -\nu p_{b})}\times \\
\left( 1-\tfrac{\exp (\beta x_{j} -\alpha p_{j}+\xi _{j}+\zeta x_{j} -\nu p_{j})}{1+\sum_{b=1}^{J}\exp (\beta x_{b} -\alpha p_{b}+\xi _{b}+\zeta x_{b} -\nu p_{b})} \right)
\end{array}\right)
f(\zeta ,\nu ;\lambda )d(\zeta ,\nu ).
\end{equation*}
The right-hand side expression presents three identification challenges when $s_0$ is unobserved. First, the structural parameter $\theta = (\alpha,\beta,\lambda)$ is partially identified, and can only be restricted to its identified set $\Theta_I(P)$ in \eqref{eq:Id_set_expression}. Second, the demand shock $\xi$ is also partially identified, and can only be restricted to $\mathcal{U}(w;\theta)$ in \eqref{eq:U_defn} for each combination of observables and structural parameters $\theta$. These two problems are addressed in Theorem \ref{thm:Id_set_expression}. Third and finally, $s_j = \tilde s_j (1-s_0)$ is unknown. To deal with this issue, we combine \eqref{eq:sharesUncond} and \eqref{eq:sigma_defn} to express the unobserved share $s_j$ as follows:
\begin{align}
s_j ~=~ \int_{(\zeta ,\nu )}\tfrac{\exp \left( \beta x_{j}-\alpha p_{j}+\xi _{j}+ \zeta x_{j} - \nu p_{j} \right) }{1+\sum_{b=1}^{J} \exp \left( \beta x_{b}-\alpha p_{b}+\xi _{b}+\zeta x_{b}  -\nu p_{b} \right) }f( \zeta ,\nu ;\lambda ) d(\zeta ,\nu ).
\label{eq:shares_BLP}
\end{align}
By plugging this in \eqref{eq:elasticity_own}, we obtain the following formula for the own-elasticity:
\begin{equation}
e_{jj}~=~-p_{j}\tfrac{\int_{(\zeta ,\nu )}(\alpha +\nu )
\left(\begin{array}{c}
\tfrac{\exp (\beta x_{j} -\alpha p_{j}+\xi _{j}+\zeta x_{j} -\nu p_{j})}{1+\sum_{b=1}^{J}\exp (\beta x_{b} -\alpha p_{b}+\xi _{b}+\zeta x_{b} -\nu p_{b})}\times \\
\left( 1-\tfrac{\exp (\beta x_{j} -\alpha p_{j}+\xi _{j}+\zeta x_{j} -\nu p_{j})}{1+\sum_{b=1}^{J}\exp (\beta x_{b} -\alpha p_{b}+\xi _{b}+\zeta x_{b} -\nu p_{b})} \right)
\end{array}\right)
f(\zeta ,\nu ;\lambda )d(\zeta ,\nu )}{\int_{(\zeta ,\nu )}\tfrac{ \exp \left( \beta x_{j}-\alpha p_{j}+\xi_{j}+ \zeta x_{j} - \nu p_{j} \right) }{1+\sum_{b=1}^{J} \exp \left( \beta x_{b}-\alpha p_{b}+\xi _{b}+\zeta x_{b}  -\nu p_{b} \right) }f( \zeta ,\nu ;\lambda ) d(\zeta ,\nu )}.
\label{eq:elasticity_own2}
\end{equation}
By combining \eqref{eq:elasticity_own2} with the identified sets in Theorem \ref{thm:Id_set_expression}, we obtain an identified set for the own-price elasticity. This identified set is denoted by $\mathcal{E}_{jj}(\tilde{s},x,p)$ and specified in the next theorem. The result also includes identified sets for the cross-price elasticities, markups, and diversion ratios.

\begin{theorem}[Identified set for other quantities of interest]\label{thm:eq_objects}
Assume the conditions in Theorem \ref{thm:Id_set_expression}. Then, the identified set for the own-price elasticity of product $j=1,\dots, J$ in a market with observables $w= (\tilde{s},x,p,z)$ is given by:
\begin{equation*}
\mathcal{E}_{jj}(w)~=~\left\{
\begin{array}{c}
e_{jj} = -p_{j}\tfrac{\int_{(\zeta ,\nu )}(\alpha +\nu )
\left(\begin{array}{c}
\tfrac{\exp (\beta x_{j} -\alpha p_{j}+\xi _{j}+\zeta x_{j} -\nu p_{j})}{1+\sum_{b=1}^{J}\exp (\beta x_{b} -\alpha p_{b}+\xi _{b}+\zeta x_{b} -\nu p_{b})}\times \\
\left( 1-\tfrac{\exp (\beta x_{j} -\alpha p_{j}+\xi _{j}+\zeta x_{j} -\nu p_{j})}{1+\sum_{b=1}^{J}\exp (\beta x_{b} -\alpha p_{b}+\xi _{b}+\zeta x_{b} -\nu p_{b})} \right)
\end{array}\right)
f(\zeta ,\nu ;\lambda )d(\zeta ,\nu )}{\int_{(\zeta ,\nu )}\tfrac{ \exp \left( \beta x_{j}-\alpha p_{j}+\xi_{j}+ \zeta x_{j} -\nu p_{j} \right) }{1+\sum_{b=1}^{J} \exp \left( \beta x_{b}-\alpha p_{b}+\xi _{b}+\zeta x_{b}  -\nu p_{b} \right) }f( \zeta ,\nu ;\lambda ) d(\zeta ,\nu )}:\\
 \xi  \in \mathcal{U}(w;\theta )~~\text{and}~~\theta \in \Theta _{I}(P)
\end{array}
\right\} .
\end{equation*}
In turn, the identified set for the cross-price elasticity of product $j=1,\dots, J$ with respect to the price of product $k\neq j$ in a market with observables $w$ is given by:
\begin{equation*}
\mathcal{E}_{jk}(w)~=~\left\{
\begin{array}{c}
e_{jk} = p_{k}\tfrac{\int_{(\zeta ,\nu )}(\alpha +\nu )
\left(\begin{array}{c}
\tfrac{\exp (\beta x_{j} -\alpha p_{j}+\xi _{j}+\zeta x_{j} -\nu p_{j})}{1+\sum_{b=1}^{J}\exp (\beta x_{b} -\alpha p_{b}+\xi _{b}+\zeta x_{b} -\nu p_{b})}\times \\
\tfrac{\exp (\beta x_{k} -\alpha p_{k}+\xi _{k}+\zeta x_{k} -\nu p_{k})}{1+\sum_{b=1}^{J}\exp (\beta x_{b} -\alpha p_{b}+\xi _{b}+\zeta x_{b} -\nu p_{b})}
\end{array}\right)
f(\zeta ,\nu ;\lambda )d(\zeta ,\nu )}{\int_{(\zeta ,\nu )}\tfrac{ \exp \left( \beta x_{j}-\alpha p_{j}+\xi_{j}+ \zeta x_{j} -\nu p_{j} \right) }{1+\sum_{b=1}^{J} \exp \left( \beta x_{b}-\alpha p_{b}+\xi _{b}+\zeta x_{b}  -\nu p_{b} \right) }f( \zeta ,\nu ;\lambda ) d(\zeta ,\nu )}:\\
 \xi  \in \mathcal{U}(w;\theta )~~\text{and}~~\theta \in \Theta _{I}(P)
\end{array}
\right\}.
\end{equation*}
Also, the identified set for the markup of product $j=1,\dots,J$ in a market with observables $w$ is given by:
\begin{equation*}
\mathcal{M}_{j}(w)~=~\left\{
\begin{array}{c}
M_{j} =  \tfrac{
\int_{(\zeta ,\nu )}\tfrac{ \exp \left( \beta x_{j}-\alpha p_{j}+\xi_{j}+ \zeta x_{j} -\nu p_{j} \right) }{1+\sum_{b=1}^{J} \exp \left( \beta x_{b}-\alpha p_{b}+\xi _{b}+\zeta x_{b}  -\nu p_{b} \right) }f( \zeta ,\nu ;\lambda ) d(\zeta ,\nu )
}{
\int_{(\zeta ,\nu )}(\alpha +\nu )
\left(\begin{array}{c}
\tfrac{\exp (\beta x_{j} -\alpha p_{j}+\xi _{j}+\zeta x_{j} -\nu p_{j})}{1+\sum_{b=1}^{J}\exp (\beta x_{b} -\alpha p_{b}+\xi _{b}+\zeta x_{b} -\nu p_{b})}\times \\
\left( 1-\tfrac{\exp (\beta x_{j} -\alpha p_{j}+\xi _{j}+\zeta x_{j} -\nu p_{j})}{1+\sum_{b=1}^{J}\exp (\beta x_{b} -\alpha p_{b}+\xi _{b}+\zeta x_{b} -\nu p_{b})} \right)
\end{array}\right)
f(\zeta ,\nu ;\lambda )d(\zeta ,\nu )
}:\\
 \xi  \in \mathcal{U}(w;\theta )~~\text{and}~~\theta \in \Theta _{I}(P)
\end{array}
\right\}.
\end{equation*}
Finally, the identified set for the diversion ratio of product $j = 1,\dots, J$ with respect to product $k \ne j$ in a market with observables $w$ is given by:
\begin{equation*}
\mathcal{D}_{jk}(w)~=~\left\{
\begin{array}{c}
D_{jk} = \tfrac{\int_{(\zeta ,\nu )}(\alpha +\nu )
\left(\begin{array}{c}
\tfrac{\exp (\beta x_{j} -\alpha p_{j}+\xi _{j}+\zeta x_{j} -\nu p_{j})}{1+\sum_{b=1}^{J}\exp (\beta x_{b} -\alpha p_{b}+\xi _{b}+\zeta x_{b} -\nu p_{b})}\times \\
 \tfrac{\exp (\beta x_{k} -\alpha p_{k}+\xi _{k}+x_{k}\zeta -\nu p_{k})}{1+\sum_{b=1}^{J}\exp (\beta x_{b} -\alpha p_{b}+\xi _{b}+\zeta x_{b} -\nu p_{b})} 
\end{array}\right)
f(\zeta ,\nu ;\lambda )d(\zeta ,\nu )}{\int_{(\zeta ,\nu )}(\alpha + \nu) 
\left(\begin{array}{c}
\tfrac{\exp (\beta x_{k} -\alpha p_{k}+\xi _{k}+x_{k}\zeta -\nu p_{k})}{1+\sum_{b=1}^{J}\exp (\beta x_{b} -\alpha p_{b}+\xi _{b}+\zeta x_{b} -\nu p_{b})}\times \\
\left( 1-\tfrac{\exp (\beta x_{k} -\alpha p_{k}+\xi _{k}+x_{k}\zeta -\nu p_{k})}{1+\sum_{b=1}^{J}\exp (\beta x_{b} -\alpha p_{b}+\xi _{b}+\zeta x_{b} -\nu p_{b})} \right)
\end{array}\right)
f(\zeta,\nu;\lambda)d(\zeta,\nu)}:\\
 \xi  \in \mathcal{U}(w;\theta)\text{ and }\theta \in \Theta _{I}(P)
\end{array}
\right\}.
\end{equation*}
\end{theorem}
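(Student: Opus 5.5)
The argument reduces to a single generic claim: for any quantity of interest of the form $g(\theta,\xi,w)$, the sharp identified set in a market with observables $w$ is $\{g(\theta,\xi,w):\xi\in\mathcal{U}(w;\theta),\ \theta\in\Theta_I(P)\}$. The four displays in Theorem \ref{thm:eq_objects} then follow by substituting the appropriate formula for $g$.

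\textbf{Step 1: rewrite each object as a function of $(\theta,\xi,w)$ only.} The only unobserved component of $(s,x,p)$ is $s_j=\tilde s_j(1-s_0)$. By \eqref{eq:sharesUncond} and \eqref{eq:shares_BLP}, $s_j$ equals the BLP integral evaluated at $(\theta,\xi)$. Substituting this into \eqref{eq:elasticity_own}, \eqref{eq:elasticity_cross} and \eqref{eq:markup} gives the displayed formulas. For the markup, $M_j=-p_j/e_{jj}$ cancels $p_j$ and leaves a ratio in which $s_j$ appears in the numerator. The diversion ratio \eqref{eq:diversion} already depends only on $(\theta,\xi,x,p)$. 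In every case the formula used inside the set is valid for any admissible structure.

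\textbf{Step 2: the inclusion ``true value $\in$ set''.} Let the data be generated by some structure $(\theta_0,\xi,s_0)$ consistent with $P$. Then $\theta_0\in\Theta_I(P)$ by Theorem \ref{thm:Id_set_expression}. The realized $\xi$ satisfies \eqref{eq:sharesCond2}, and its implied outside share $\sigma_0((\beta x_j-\alpha p_j+\xi_j)_j,x,p;\lambda)=s_0$ lies in $S_0(w)$, so $\xi\in\mathcal{U}(w;\theta_0)$. Hence the true elasticity, markup or diversion ratio equals $g(\theta_0,\xi,w)$ and belongs to the displayed set.

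\textbf{Step 3: sharpness, the reverse inclusion.} Fix $\theta\in\Theta_I(P)$ and a point $\xi^\dagger\in\mathcal{U}(w^\dagger;\theta)$ at a given observable value $w^\dagger$. We must build an observationally equivalent structure with parameter $\theta$ whose shock in market $w^\dagger$ is $\xi^\dagger$. By Theorem \ref{thm:id_sets1} (equivalently, the selection-expectation argument behind Theorem \ref{thm:Id_set_expression}), there is a measurable selection $\xi(w)\in\mathcal{U}(w;\theta)$ with $E[\xi(w)\mid z]=0$. Modify this selection on the single point $w^\dagger$ by setting $\xi(w^\dagger)=\xi^\dagger$. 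Because the conditional distribution of $(\tilde s,x,p)\mid z$ is non-atomic, this change occurs on a null set. The conditional mean restriction \eqref{eq:IVcondition} is therefore preserved, and measurability is kept since the modified selection differs from a measurable one only at a point.

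The modified selection, together with the outside share $s_0(w)=\sigma_0(\cdot)$ it implies, reproduces the distribution of $w$ and satisfies every model restriction. The value $g(\theta,\xi^\dagger,w^\dagger)$ is thus attained by an admissible structure. For the identified set of the object as a function of $w$, this pointwise argument applies simultaneously to all $w$ in a full-measure set. I expect this step to be the main obstacle. The delicate points are that the selection retains its measurability and integrability and that the null-set modification is legitimate, which is where the non-atomicity assumption enters. Once those are settled, the remaining verifications are mechanical.
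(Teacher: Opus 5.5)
Your proposal is correct and follows the same two-inclusion argument as the paper's proof, which also obtains both directions from Theorem \ref{thm:Id_set_expression} and the fact that $\mathcal{U}(w;\theta)$ collects exactly the shocks that rationalize the data under $\theta$. The paper simply asserts that any pair with $\theta\in\Theta_I(P)$ and $\xi\in\mathcal{U}(w;\theta)$ is a data-compatible configuration, whereas your null-set modification of the selection from Theorem \ref{thm:id_sets1} actually justifies that step, using non-atomicity to get $P(w=w^\dagger)=0$; note that the construction works one market at a time, which is all the pointwise-in-$w$ statement requires.
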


We now illustrate these in the context of the plain logit model.

\begin{runningexample}[Identified sets for quantities of interest in the plain logit model]
Consider the plain logit case (i.e., $\Lambda =\{\bar{ \lambda}\}$), with $x_{j}=(1,\tilde{x}_{j}')'$ and $\beta =(\beta _{1},\beta _{2})$. For each $j,k=1,\dots,J$ with $j\neq k$, an observables $w$,
\begin{align*}
\mathcal{E}_{jj}(w) ~&=~\Big\{e_{jj}=-\alpha p_j \big(1-\tfrac{\exp(\beta_1+\beta_2\tilde x_j-\alpha p_j+\xi_j)} {1+\sum_{b=1}^J \exp(\beta_1+\beta_2\tilde x_b-\alpha p_b+\xi_b)} \big): \xi\in\mathcal{U}(w;\theta)~\text{and}~\theta\in\Theta_I(P)\Big\},\\
\mathcal{E}_{jk}(w)~&=~
\Big\{e_{jk}=\alpha p_k \tfrac{\exp(\beta_1+\beta_2\tilde x_k-\alpha p_k+\xi_k)} {1+\sum_{b=1}^J \exp(\beta_1+\beta_2\tilde x_b-\alpha p_b+\xi_b)}: \xi\in\mathcal{U}(w;\theta)~\text{and}~\theta\in\Theta_I(P)\Big\},\\
\mathcal{M}_{j}(w)~&=~ \Big\{m_{j}=\frac{1}{\alpha}\big(1- \tfrac{\exp(\beta_1+\beta_2\tilde x_j-\alpha p_j+\xi_j)}{1+\sum_{b=1}^J \exp(\beta_1+\beta_2\tilde x_b-\alpha p_b+\xi_b)} \big): \xi\in\mathcal{U}(w;\theta)~\text{and}~\theta\in\Theta_I(P)\Big\},\\
\mathcal{D}_{jk}(w)~&=~
\Bigg\{d_{jk}=\tfrac{\tfrac{\exp(\beta_1+\beta_2\tilde x_j-\alpha p_j+\xi_j)} {1+\sum_{b=1}^J \exp(\beta_1+\beta_2\tilde x_b-\alpha p_b+\xi_b)} }{\big(1-\tfrac{\exp(\beta_1+\beta_2\tilde x_k-\alpha p_k+\xi_k)} {1+\sum_{a=1}^J \exp(\beta_1+\beta_2\tilde x_a-\alpha p_a+\xi_a)}\big)}: \xi\in\mathcal{U}(w;\theta)~\text{and}~\theta\in\Theta_I(P)\Bigg\}.
\end{align*}

By our earlier derivations, $\beta_1$ is completely unidentified and may take any value in $\mathbb{R}$. Holding $(\alpha,\beta_2,\xi)$ fixed with $\xi\in\mathcal{U}(w;\theta)$, we obtain
\begin{align*}
\lim_{\beta_1\to-\infty}\frac{\exp(\beta_1+\beta_2\tilde x_j-\alpha p_j+\xi_j)}{1+\sum_{b=1}^J \exp(\beta_1+\beta_2\tilde x_b-\alpha p_b+\xi_b)}&=0,\\
\lim_{\beta_1\to\infty}\frac{\exp(\beta_1+\beta_2\tilde x_j-\alpha p_j+\xi_j)}{1+\sum_{b=1}^J \exp(\beta_1+\beta_2\tilde x_b-\alpha p_b+\xi_b)}&=\frac{\exp(\beta_2\tilde x_j-\alpha p_j+\xi_j)} {\sum_{b=1}^J \exp(\beta_2\tilde x_b-\alpha p_b+\xi_b)}~\overset{(1)}{=}~\tilde s_j,
\end{align*}
where (1) holds by $\xi\in\mathcal{U}(w;\theta)$. As $\beta_1$ ranges over $\mathbb{R}$, this expression ranges over $(0,\tilde s_j)$. By continuity of the mappings involved, we obtain the following identified sets:
\begin{align*}
\mathcal{E}_{jj}(w)~&=~\Big\{e_{jj} \in (-\alpha p_j ,-\alpha p_j (1-\tilde s_j))~:~ (\alpha,\beta_1,\beta_2,\bar\lambda)\in\Theta_I(P)\Big\}\\
\mathcal{E}_{jk}(w)~&=~\Big\{e_{jk} \in (0, \alpha p_k \tilde s_k)~:~ (\alpha,\beta_1,\beta_2,\bar\lambda)\in\Theta_I(P)\Big\}\\
\mathcal{M}_{j}(w)~&=~\Big\{m_{j} \in ({1}/{\alpha },{1}/{(\alpha (1-\tilde s_j))})~:~ (\alpha,\beta_1,\beta_2,\bar\lambda)\in\Theta_I(P)\Big\},\\
\mathcal{D}_{jk}(w)~&=~\Big\{d_{jk} \in (0, {\tilde s_j}/{(1- \tilde s_k)})~:~ (\alpha,\beta_1,\beta_2,\bar\lambda)\in\Theta_I(P)\Big\}.
\end{align*}
Provided that $E[\tilde{x}_{j}-\tilde{x}_{k}\mid z]$ and $E[p_{j}-p_{k}\mid z]$ vary sufficiently, $\alpha$ was shown to be point identified. If so, the first three identified sets simplify further, with $\alpha$ fixed at its uniquely identified value.
\end{runningexample}


\subsection{Numerical illustration}\label{sec:numericalWork}

This section presents a numerical illustration of the identified sets for the structural parameters and other quantities of interest in the BLP model when the outside-good share is partially identified.

The data-generating process is a simplified version of the simulation design in \citet[Section 8.3]{gandhi/lu/shi:2023} (henceforth GLS). In particular, we consider markets with $J=2$ inside goods and an outside good. Assume that there is a vector of continuous instruments $z = (z_{j})_{j=1}^{J}$, with $z_{j} \sim U([0,2])$. The instrument shifts demand for products and affects the prices of the inside goods. Following the setup in GLS, prices of the inside goods are generated according to
\begin{align*}
p_{j} ~=~ z_{j} + b \exp(\xi_{j})~~~~\text{for}~j=1,\dots,J,
\end{align*}
where we set $b=1$. Demand shocks are generated from:
\begin{equation*}
\xi_{j} ~=~ \mathbf{1}\left[\beta x_{j}   \ge med(\beta x_{j} )\right]\xi_{j}^{\prime} + \mathbf{1}\left[ \beta x_{j}  < med(\beta x_{j} ) \right] \xi_{j}^{\prime\prime}~~~~\text{for}~j=1,\dots,J,
\end{equation*} 
with $\xi_{j}^{\prime}$ and $\xi_{j}^{\prime\prime}$ i.i.d.\ and $N(0, 0.5^{2})$, and $x_{j}$ is defined below.

In this model, prices are endogenous due to 
\begin{equation}
cov[\xi_{j}, p_{j}] ~=~ cov[\xi_{j}, z_{j} + b\exp(\xi_{j})] ~=~ bcov[\xi_{j}, \exp(\xi_{j})]~\neq~0.
\end{equation}
In turn, the instrument is valid since $\xi_{j} \perp z$.

Demand is generated from the BLP model described in Section \ref{sec:standard_BLP}, with the random-coefficients specification specialized to allow heterogeneity only in price sensitivity. That is, the market shares for inside goods $j=1,2$ and the outside option are given by
\begin{align*}
s_j&~=~\int\frac{\exp(\beta x_{j} - \alpha p_j + \xi_j  - \nu p_{j})}{1 + \sum_{b=1}^{J}\exp(\beta x_{b} - \alpha p_b + \xi_b - \nu p_{b})
}f(\zeta,\nu;\lambda)d(\zeta,\nu),\\
s_0&~=~\int\frac{1}{1 + \sum_{b=1}^{J}\exp(\beta x_{b} - \alpha p_b + \xi_b  - \nu p_{b})}f(\zeta,\nu;\lambda)d(\zeta,\nu),
\end{align*}
where $\nu \sim N(0,\lambda^{2})$. The observed product characteristics $x_{j}$ consist of a constant, a continuous variable, $\tilde{x}_{j} \sim N(-0.5,1)$, and product fixed effects. We set the coefficient of $\tilde{x}_{j}$ to $\beta_{2}=1$, while $(\alpha,\beta_{1},\lambda)$ and product fixed effects are set to point estimates obtained by applying the standard BLP specification with observed $s_0$ to the dataset used in GLS's application after adapting it to our setting. This dataset is later used in the empirical application in Section \ref{sec:Application}. In particular, because handling zero market shares is not the focus of this paper, we adapt the data to our setting by trimming observations with zero shares.

To keep the numerical exercise tractable, we restrict our attention to three parameters:
\begin{equation*}
\theta ~=~ (\alpha,\beta_{1},\lambda) ~\in~ \Theta,
\end{equation*}
where $\alpha$ is the price coefficient, $\beta_{1}$ is the constant term, and $\lambda$ governs heterogeneity in price sensitivity. The remaining parameters of the model are treated as known and fixed at their true values. For computational purposes, we restrict attention to the compact parameter space
\begin{equation*}
\Theta=[0,5]\times[-8,4]\times[0,1].
\end{equation*}
The true parameter values used in the simulation are
\begin{equation*}
\alpha=1,~~~\beta_{1}=-4,~~~\lambda=0.05.
\end{equation*}
As explained earlier, these values are chosen to be broadly consistent with point estimates obtained from a standard BLP specification with the outside share $s_0$ observed.

To compute the identified set, we generate a large simulated dataset with $n = 50{,}000$ markets, thereby allowing us to abstract away from random sampling error. As in GLS, the distribution of the true outside share $s_0$ is concentrated at very high values: the minimum value is $0.8061$, the $25$th quantile is $0.9902$, the median is $0.9947$, the $75$th quantile is $0.9971$, and the maximum value is $0.9999$. These large values will motivate our specification of $S_0(w)$ below.

To implement the identification method with continuous instruments, we discretize $z_{1}$ and $z_{2}$ into $d_{z} = 30$ quantile-based bins. The identified set $\Theta_{I}(P)$ is the intersection of the sets obtained based on $z_{1}$ and $z_{2}$, respectively. 

We study identification under two information structures for the outside-good share:
\begin{enumerate}
\item \emph{Partial information on outside shares:} $S_0(w)=[c,1)$, where $c\geq 0$ is a constant. This represents a common case in applications where the researcher knows that the outside-good share is large, but does not know its exact value. In the numerical exercise below, we use $c=0.8$, a choice guided by the simulated data.\footnote{In Section \ref{sec:Numerical_NoInfo} in the appendix, we report results for the case with no information on $s_{0}$, i.e., $S_0(w) = (0,1)$.}
\item \emph{Small amount of missing information on outside shares:} $S_0(w) = [s_0-\varepsilon,\, s_0+\varepsilon]\cap(0,1)$, where $s_0$ denotes the true outside share and $\varepsilon>0$ is a small constant. This specification examines the sensitivity of identification to small departures from point knowledge of the outside-good share. In the numerical exercise, we set $\varepsilon=0.05$.
\end{enumerate}
We divide the remainder of the section into these two exercises.

\subsubsection{Partial information on outside shares}\label{sec:Numerical1}

We now report the identified sets for the partial-information case with $S_0(w) = [0.8,1)$. Figure \ref{fig:idset_partial} depicts the identified set $\Theta_{I}(P)$. This is a three-dimensional object that we visualize in the $(\alpha,\beta_1)$ plane for different values of $\lambda$. 

We compute this set numerically using \eqref{eq:Id_set_expression}. The moment inequalities are evaluated on a fine grid of values of $\theta=(\alpha,\beta_1,\lambda)$, with expectations replaced by sample analogs from the simulated dataset of $n=50{,}000$ markets. To approximate the requirement that the inequalities hold for all $v\in\mathbb R^2$ with $\Vert v\Vert=1$, we use $360$ directions of the form $v=(\cos a,\sin a)$, with $a$ evenly spaced over $[0,2\pi)$. 

For the conditioning information, we use the instruments $z_1$ and $z_2$. As already mentioned, we discretize the support of each instrument into $d_z=30$ quantile-based bins and compute the conditional expectations within those bins. The conditioning restrictions are imposed separately on $z_{1}$ and $z_{2}$. For each instrument, the identified set is the set of parameter values satisfying the resulting bin-level inequalities. We then report the intersection of the identified sets obtained from $z_1$ and $z_2$.

Finally, to account for numerical approximation error from the finite grids and the fixed-point inversion, we treat a grid point as belonging to the numerical approximation of the identified set whenever the sample analogs of the moment inequalities are no smaller than the negative of a tolerance parameter $-\tau$. In our numerical exercise, we set this tolerance parameter to $\tau = 0.01$. This tolerance is intentionally conservative: increasing $\tau$ enlarges the numerical approximation of the identified set, making us more likely to include rather than exclude parameter values.\footnote{Since $n=50{,}000$, $\tau=0.01$ implies $\sqrt{n}\tau \approx 2.24$, so the tolerance is about $2.24$ times the scale $1/\sqrt{n}$.}

If we project this identified set $\Theta_{I}(P)$ on each parameter coordinate, we obtain the parameter-specific identified sets:
\begin{align*}
\Theta_I^\alpha(P)~=~[0.8,2.9],~~~
\Theta_I^{\beta_{1}}(P)~=~[-5,0],~~~\text{and}~~~
\Theta_I^\lambda(P)~=~[0,1].
\end{align*}
Naturally, the true parameter values $(\alpha,\beta_{1},\lambda) = (1,-4,0.05)$ lie in the corresponding marginal projections of the identified set. The round endpoints reflect the finite grid used in the numerical approximation, rather than exact analytical boundaries. Although these sets are large, they remain informative about the signs and plausible magnitudes of $\alpha$ and $\beta_1$.

\begin{sidewaysfigure}
    \centering
    \includegraphics[width=0.95\textheight]{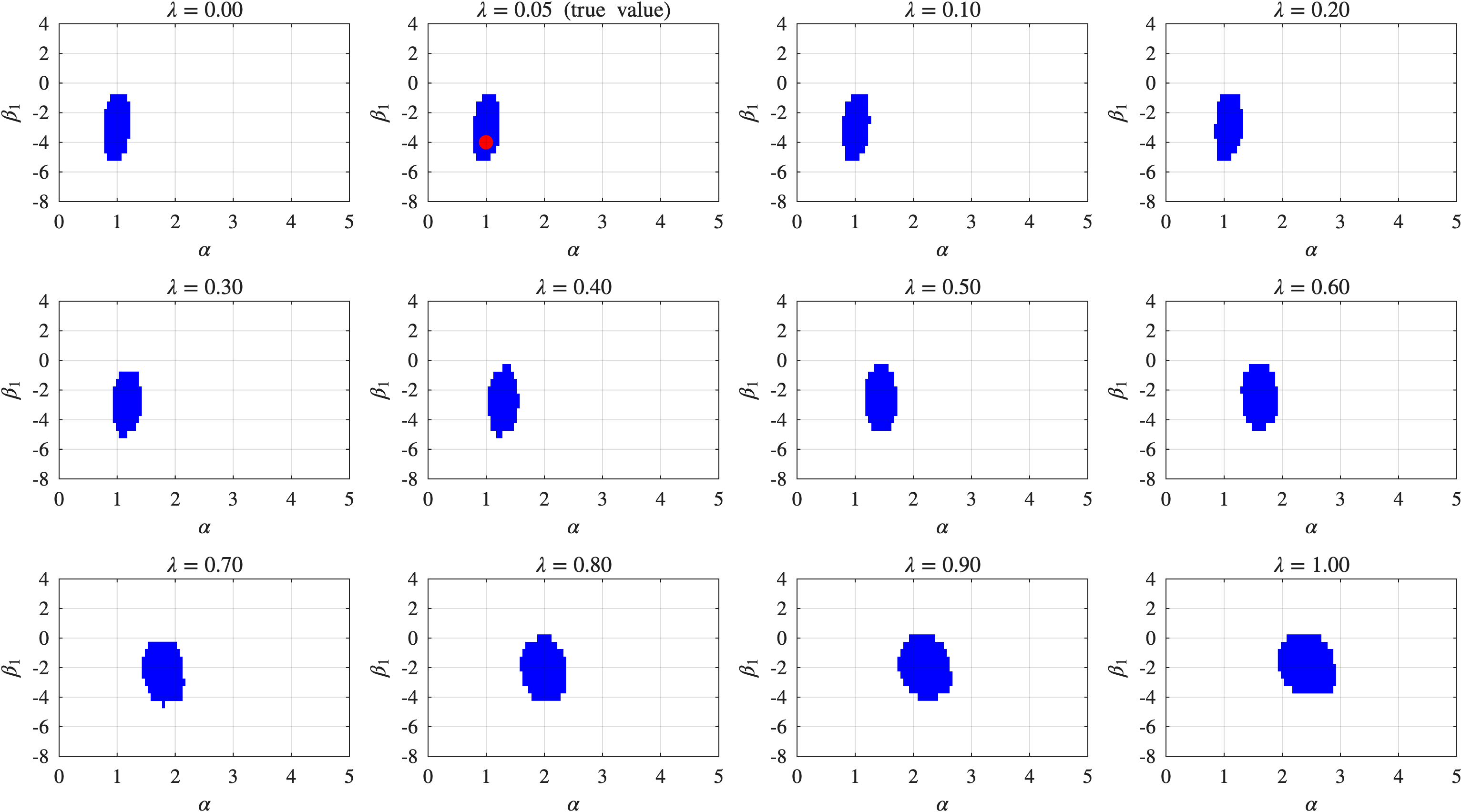}
        \caption{Identified set $\Theta _{I}(P)$ in a DGP with $(\alpha,\beta_{1},\lambda) = (1,-4,0.05)$ and partial information on $s_0$ with $S_0(w)=[0.8,1)$, computed with tolerance parameter $\tau=0.01$.}
    \label{fig:idset_partial}
\end{sidewaysfigure}

Next, we consider demand responses and equilibrium objects in a market with the observables $w$ set at values in the first market in our dataset.\footnote{Following the practice in parts of the IO literature, we report elasticities for a particular market (\cite{Moon:2018aa} and \cite{Aguiar:2025aa}). Other quantities, such as the average or median elasticities across all markets, could also be considered.} In this design, the inside shares of the first market are $\tilde{s} = (0.2787,0.7213)$, and the prices are $p = (1.7523,1.5934)$. 


Using these results, we can compute the identified sets for each elasticity in the first market. 
\begin{align*}
    &\mathcal{E}_{1,1}(w) = [ -2.8753, -0.8275],~
    &&\mathcal{E}_{2,2}(w) = [ -2.3022, -0.8934],\\
    &\mathcal{E}_{1,2}(w) = [ -0.0004, 0.7071],~
    &&\mathcal{E}_{2,1}(w) = [ -0.0002, 0.3050].
\end{align*}
The true elasticity values are $e_{1,1} = -1.7367$, $e_{2,2} = -1.5755$, $e_{1,2} = 0.007$, and $e_{2,1} = 0.003$,  all of which lie within their corresponding identified sets. 
The identified sets for markup and diversion ratios are as follows:
\begin{align*}
    &\mathcal{M}_{1}(w) = [ 0.6095, 2.1176],~
    &&\mathcal{M}_{2}(w) = [ 0.6633, 1.7835],\\
    &\mathcal{D}_{1,2}(w) = [ -0.0002, 0.1172],~
    &&\mathcal{D}_{2,1}(w) = [ -0.0005, 0.2705].
\end{align*}
The true markup values are $M_{1} = 1.009$ and $M_{2} = 1.0113$, and the true diversion ratios are $D_{1,2} = 0.0017$ and $D_{2,1} = 0.0044$
Overall, although the identified sets are wide, they remain informative: the markups are positive, and the sets for cross-price elasticities and diversion ratios are nearly nonnegative, with lower endpoints very close to zero.


\subsubsection{Small amount of missing information on the outside shares}

We now consider the case in which there is only a small amount of missing information about the share of the outside goods. Specifically, we set $S_0(w) = [s_0-0.05,s_0+0.05]\cap(0,1)$. 

Figure \ref{fig:idset_small} depicts the resulting identified set $\Theta_I(P)$. The identified set is considerably smaller than the one obtained in the partial-information case with $S_0(w)=[0.8,1)$, reflecting the additional information about the outside-good share. If we project this identified set $\Theta_{I}(P)$ on each parameter coordinate, we obtain the parameter-specific identified sets:
\begin{align*}
\Theta_I^\alpha(P)~=~[0.8,2.85],~~
\Theta_I^{\beta_{1}}(P)~=~[-5,-1],~~\text{and}~~
\Theta_I^\lambda(P)~=~[0,1].
\end{align*}
Naturally, the true parameter values $(\alpha,\beta_{1},\lambda) = (1,-4,0.05)$ lie in the corresponding marginal projections of the identified set. Compared with the partial-information case $S_0(w)=[0.8,1)$, the additional information about $s_0$ tightens the identified sets for $\alpha$ and $\beta_1$, although the identified set still does not rule out any value of $\lambda \in [0,1]$.

Using the observed values of $w$ in the first market, we next compute identified sets for demand responses and equilibrium objects. Consistent with the parameter-identification results, the additional information about $s_0$ yields substantially tighter identified sets for these objects. For the own- and cross-price elasticities, we obtain
\begin{align*}
&\mathcal{E}_{1,1}(w) = [ -2.5312, -1.3797],~
&&\mathcal{E}_{2,2}(w) = [ -2.3228, -1.2233],\\
&\mathcal{E}_{1,2}(w) = [ 0.0009, 0.2328],~
&&\mathcal{E}_{2,1}(w) = [ 0.004, 0.1003].
\end{align*}
The identified sets for markups and diversion ratios are
\begin{align*}
&\mathcal{M}_{1}(w) = [ 0.6922, 1.2701],~
&&\mathcal{M}_{2}(w) = [ 0.6860, 1.3026],\\
&\mathcal{D}_{1,2}(w) = [ 0.0003, 0.0393],~
&&\mathcal{D}_{2,1}(w) = [ 0.0007, 0.1011].
\end{align*}
All of these sets contain the corresponding true values. Moreover, relative to the partial-information case $S_0(w)=[0.8,1)$, the identified sets are substantially tighter. In particular, the identified sets for the cross-price elasticities and diversion ratios are now strictly positive, showing that even a small amount of information about the outside-good share can sharpen identification of economically relevant objects.

\begin{sidewaysfigure}
    \centering
    \includegraphics[width=\textheight]{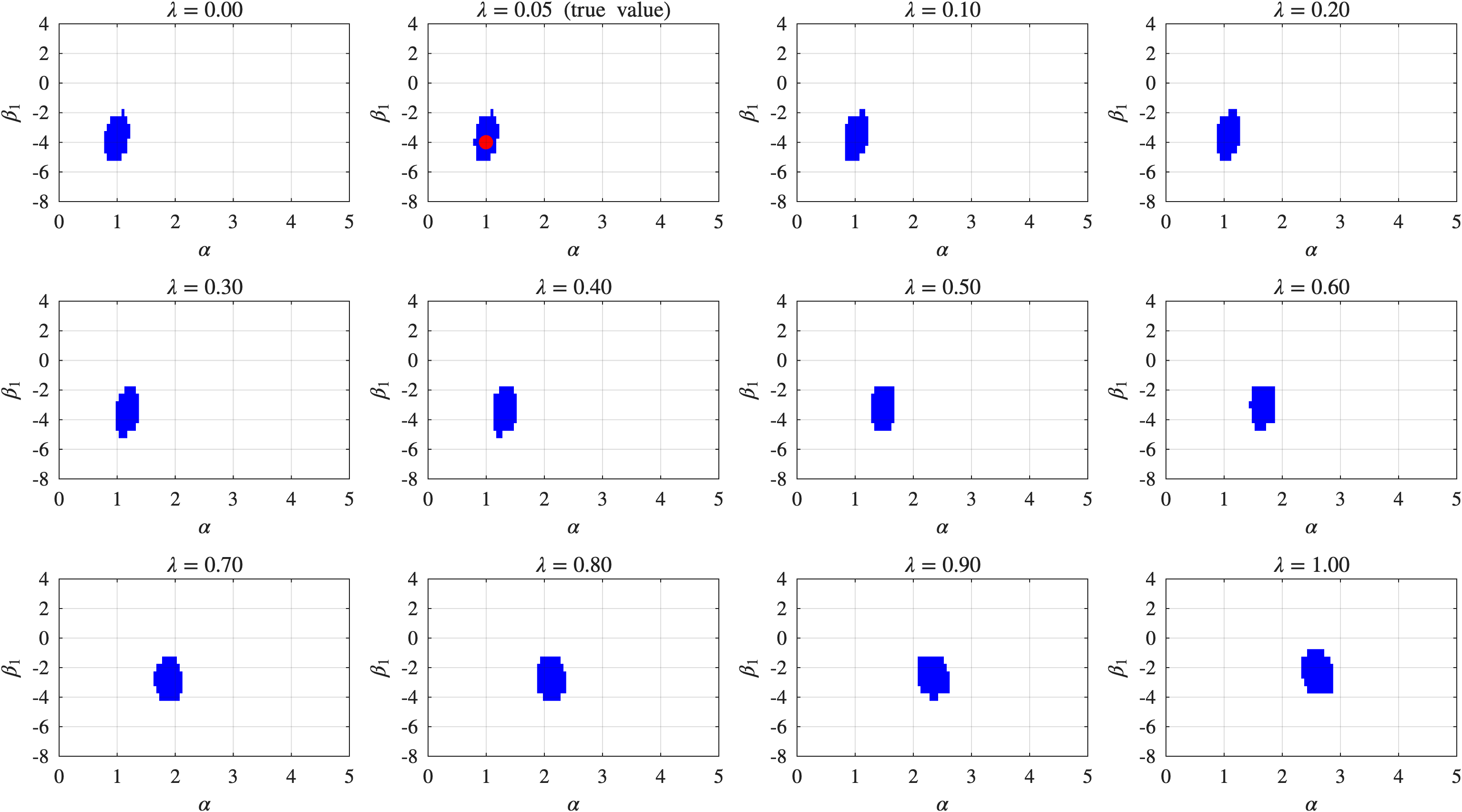}
        \caption{Identified set $\Theta _{I}(P)$ in a DGP with $(\alpha,\beta_{1},\lambda) = (1,-4,0.05)$ and a small amount of missing information on $s_0$, given by $S_0(w)=[s_0-0.05,s_0+0.05]\cap(0,1)$, computed with tolerance parameter $\tau=0.01$.}
    \label{fig:idset_small}
\end{sidewaysfigure}

\section{Inference without knowledge of outside-good shares}\label{sec:Inference}

This section describes how to conduct inference for the BLP model when the outside-good share $s_0$ is unobserved. We conduct inference on both the structural parameters of the BLP model and other economically relevant quantities it implies. Our approach builds directly on ideas from \cite{andrews/shi:2013} and \cite{chernozhukov/chetverikov/kato:2019}.

To explain our inference strategy, we note that the identified set $\Theta_I(P)$ in Theorem \ref{thm:Id_set_expression} is defined as an infinite collection of conditional moment inequalities. To see this, recall $w=( \tilde{s},x,p,z) $, denote $\mathcal V=\{v\in\mathbb R^J:\Vert v\Vert=1\}$, and define
\begin{equation*}
    m( w,v,\theta ) ~=~-\sup_{s_{0}\in S_0(w)}v^{\prime }(\sigma ^{-1}(\tilde{s}(1-s_{0}),x,p;\lambda )-(\beta x_{j}-\alpha p_{j})_{j=1}^{J}). 
\end{equation*}
The support-function inequalities in Theorem \ref{thm:Id_set_expression} allow for the possibility that the supremum is infinite. If, for a given direction $v$, the supremum in the definition of $m(w,v,\theta)$ is equal to infinity with positive probability, then $m(w,v,\theta)$ is equal to minus infinity with positive probability, and the corresponding moment inequality is automatically satisfied. Such directions impose no restriction on $\theta$. They may appear in the population characterization, but they are nonbinding and are not included in the finite collection of moment inequalities used for inference.

With this notation, it follows that $\Theta_I(P)$ can be equivalently expressed as a collection of parameters satisfying conditional moment inequalities:
\begin{equation*}
\Theta _{I}(P)~=~\{ \theta \in \Theta :E[m( w,v,\theta ) |z]\leq 0\text{ for all }v\in \mathcal{V}\text{ and a.e. }z\in \mathbb{S}_{z}\}. 
\end{equation*}

Our next step is to convert the conditional moment inequalities in $\Theta_I(P)$ into a collection of unconditional moment inequalities, following the approach of \cite{andrews/shi:2013}. To this end, let $\mathcal G$ denote a collection of nonnegative functions of the instrument $z$, referred to as {\it instrument functions}. We then define the set:
\begin{equation}
\Theta _{I}(P,\mathcal{G})~=~\{ \theta \in \Theta :E[m( w,v,\theta ) g(z)]\leq 0\text{ for all }v\in \mathcal{V}\text{ and }g\in \mathcal{G}\} .
\label{eq:IS_AS2013}
\end{equation}

Since each $g\in\mathcal G$ is nonnegative, it follows immediately that $\Theta_I(P)\subseteq \Theta_I(P,\mathcal G)$. Thus, $\Theta_I(P,\mathcal G)$ constitutes an {\it outer identified set} for the parameter of interest. However, \cite{andrews/shi:2013} show that for suitable choices of $\mathcal G$, this outer set coincides with the identified set, that is, $\Theta_I(P,\mathcal G)=\Theta_I(P)$; see \citet[Supplemental Appendix B]{andrews/shi:2013}. Examples of such choices include collections of indicator functions over hypercubes or hyperrectangles in $\mathbb{S}_z$. See Example \ref{ex:hypercube} for a description of the definition of the hypercubes. 
Motivated by these results, we choose to conduct inference on $\Theta_I(P,\mathcal G)$. By construction, this set provides an outer identified set for the parameter of interest, and coincides with $\Theta_I(P)$ under appropriate choices of $\mathcal{G}$.

\begin{example}[Hypercubes in \cite{andrews/shi:2013}]\label{ex:hypercube}
The class of instrument functions $\mathcal G_{\mathrm{cube}}$ based on hypercubes is constructed as follows. First, we transform the instrument vector $z=(z_1,\dots,z_{d_z})$ into the unit $d_z$-dimensional hypercube by using a componentwise standard normal CDF:
\begin{equation*}
\tilde z ~=~ (\Phi(z_1),\dots,\Phi(z_{d_z})) ~\in~ [0,1]^{d_z}.
\end{equation*}
Second, we partition $[0,1]^{d_z}$ into regular hypercubes. For a given $r\in\mathbb N$, define
\begin{equation*}
C_{a,r} ~=~ \prod_{u=1}^{d_z}\Big(\frac{a_u-1}{2r},\frac{a_u}{2r}\Big], \qquad a=(a_1,a_2,\dots,a_{d_z}),
\end{equation*}
where $a_u\in\{1,2,\dots,2r\}$ for each $u=1,\dots,d_z$. Each $C_{a,r}$ is a hypercube in $[0,1]^{d_z}$ with side length $(2r)^{-1}$. For any tuning parameter $r_0\in\mathbb N$, the collection of hypercubes is defined by
\begin{equation*}
\mathcal C_{\mathrm{cube}} ~=~ \{ C_{a,r} : a_u\in\{1,2,\dots,2r\},\ u=1,2,\dots,d_z,\ r=r_0,r_0+1,\dots \}.
\end{equation*}
Finally, we define the corresponding class of instrument functions as
\begin{equation*}
\mathcal G_{\mathrm{cube}} ~=~ \{ g(z)=I\{\tilde z\in C\} : C\in\mathcal C_{\mathrm{cube}} \}.
\end{equation*}
Under the regularity conditions described in \cite{andrews/shi:2013}, the collection of hypercubes $\mathcal C_{\mathrm{cube}}$ generates the Borel $\sigma$-algebra on $[0,1]^{d_z}$, and \citet{andrews/shi:2013} shows that this choice of $\mathcal G_{\mathrm{cube}}$ is sufficiently rich so that $\Theta_I(P,\mathcal G_{\mathrm{cube}})=\Theta_I(P)$.
\hfill 
\end{example}

The identified set $\Theta _{I}(P,\mathcal{G})$ in \eqref{eq:IS_AS2013} is defined by a collection of moment inequalities indexed by $(v,g)\in \mathcal{V}\times \mathcal{G}$. Since $\mathcal{V}$ is uncountable and $\mathcal{G}$ typically contains countably infinitely many elements, $\Theta _{I}(P,\mathcal{G})$ is effectively characterized by an infinite number of moment inequalities.

To conduct inference, we rely on the methods for many moment inequalities developed in \cite{chernozhukov/chetverikov/kato:2019}. Specifically, we consider a countable subset $\{(v_{j},g_{j})\}_{j=1}^{p_{n}}\subseteq \mathcal{V}\times \mathcal{G}$, where the number of moment inequalities $p_{n}$ can grow with the sample size $n$, and conduct inference on the approximating identified set
\begin{equation}
\tilde{\Theta}_{I}(P,p_{n})~=~\left\{\theta \in \Theta :E\left[m(w,v_{j},\theta)g_{j}(Z)\right]\leq 0\ \text{for all } j=1,\dots,p_{n}\right\}.
\label{eq:IS_for_CCK}
\end{equation}
For any $n\in \mathbb{N}$, $\Theta _{I}(P)\subseteq \Theta _{I}(P,\mathcal{G})\subseteq \tilde{\Theta }_{I}(P,p_{n})$. Hence, $\tilde{\Theta}_{I}(P,p_{n})$ constitutes an outer identified set for the parameter of interest. Because $p_n$ is allowed to grow with $n$, the resulting confidence sets will remain asymptotically valid in settings with many moment inequalities, including cases in which $p_n$ is large relative to $n$.

The methodology proposed by \cite{chernozhukov/chetverikov/kato:2019} is directly applicable to $\tilde{\Theta}_{I}(P,p_{n})$, as it delivers asymptotically valid inference while allowing $p_{n}$ to grow with the sample size $n$ at relatively fast rates. For each $\theta \in \Theta$, the test statistic is given by
\begin{equation*}
    T_{n}(\theta)~=~\max_{j=1,\dots,p_{n}}\frac{\sqrt{n}\,\hat{\mu}_{j}(\theta)}{\hat{\sigma}_{j}(\theta)},
\end{equation*}
where, for each $j=1,\dots,p_{n}$,
\begin{align*}
\hat{\mu}_{j}(\theta) &~=~\frac{1}{n}\sum_{i=1}^{n} m(w_{i},v_{j},\theta) g_{j}(Z_{i}), \\
\hat{\sigma}_{j}^{2}(\theta) &~=~\frac{1}{n}\sum_{i=1}^{n} \left(m(w_{i},v_{j},\theta) g_{j}(Z_{i})-\hat{\mu}_{j}(\theta)\right)^{2}.
\end{align*}

Our confidence sets are given by
\begin{equation*}
C_{n}(1-\pi)~=~\left\{\theta \in \Theta : T_{n}(\theta)\leq c_{n}(\theta,\pi)\right\},
\end{equation*}
where $c_{n}(\theta,\pi)$ are critical values corresponding to a prespecified significance level $\pi\in(0,1)$. \cite{chernozhukov/chetverikov/kato:2019} propose several options for $c_{n}(\theta,\pi)$, including self-normalized critical values, bootstrap, and two-step hybrid methods. In terms of implementation, the simplest option is the self-normalized critical value, given by
\begin{equation}
c_{n}(\theta,\pi)~=~
\frac{\Phi^{-1}(1-\pi/p_{n})}{\sqrt{1-\left(\Phi^{-1}(1-\pi/p_{n})\right)^{2}/n}},\label{eq:SN_CV}
\end{equation}
where $p_{n}$ is the number of moment inequalities in \eqref{eq:IS_for_CCK}.\footnote{The denominator in \eqref{eq:SN_CV} is positive under the regularity conditions in \cite{chernozhukov/chetverikov/kato:2019}.} This critical value is especially convenient because it does not depend on $\theta$ and depends only on $p_n$, the sample size $n$, and the significance level $\pi$. See \cite{chernozhukov/chetverikov/kato:2019} for the other critical values.

Under suitable regularity conditions on the data-generating process and on the collection of moment inequalities used in the implementation, the inference methods proposed by \cite{chernozhukov/chetverikov/kato:2019} can be used to construct asymptotically valid confidence sets for the parameter of interest. In particular, under the assumptions required by the chosen procedure, for any significance level $\pi \in (0,1)$,
\begin{equation}
\underset{n\to\infty}{\liminf}\ \inf_{P\in\mathcal{P}}\ \inf_{\theta\in\Theta_{I}(P)}
P\left(\theta\in C_{n}(1-\pi)\right)~\geq~1-\pi,
\label{eq:CS_validity}
\end{equation}
where $\mathcal{P}$ denotes a suitable set of data distributions. We do not state this validity result as a separate formal theorem because the precise assumptions required for \eqref{eq:CS_validity} vary with the particular implementation of the methods in \cite{chernozhukov/chetverikov/kato:2019}. Rather, our contribution is to show that the identified set can be characterized by moment inequalities in a form amenable to these procedures.

To conduct inference on parameters derived from the model, such as subvectors of $\theta$ or elasticities, we use projections of the confidence sets.\footnote{One could employ more sophisticated inference methods for these parameters, such as those proposed by \cite{belloni/bugni/chernozhukov:2019}. We nevertheless focus on projection-based inference to keep the exposition simple.} We record this result in Corollary \ref{cor:derived}.

\begin{corollary}\label{cor:derived}
Assume that $C_{n}(1-\pi)$ is an asymptotically valid confidence set for the parameter of interest in the sense of \eqref{eq:CS_validity}. Then, asymptotically valid confidence sets for the various quantities of interest can be obtained by projecting the parameter values in $C_{n}(1-\pi)$. 

In particular, the asymptotically valid confidence set for the own-elasticity of product $j=1,\dots, J$ in a market with observables $w=(\tilde{s},x,p,z)$ is given by:
\begin{equation*}
C_{\mathcal{E}_{jj}}(w; 1-\pi)~=~\left\{
\begin{array}{c}
e_{jj} = -p_{j}\tfrac{\int_{(\zeta ,\nu )}(\alpha +\nu )
\left(\begin{array}{c}
\tfrac{\exp (\beta x_{j} -\alpha p_{j}+\xi _{j}+\zeta x_{j} -\nu p_{j})}{1+\sum_{b=1}^{J}\exp (\beta x_{b} -\alpha p_{b}+\xi _{b}+\zeta x_{b} -\nu p_{b})}\times \\
\left( 1-\tfrac{\exp (\beta x_{j} -\alpha p_{j}+\xi _{j}+\zeta x_{j} -\nu p_{j})}{1+\sum_{b=1}^{J}\exp (\beta x_{b} -\alpha p_{b}+\xi _{b}+\zeta x_{b} -\nu p_{b})} \right)
\end{array}\right)
f(\zeta ,\nu ;\lambda )d(\zeta ,\nu )}{\int_{(\zeta ,\nu )}\tfrac{ \exp \left( \beta x_{j}-\alpha p_{j}+\xi_{j}+ \zeta x_{j} -\nu p_{j} \right) }{1+\sum_{b=1}^{J} \exp \left( \beta x_{b}-\alpha p_{b}+\xi _{b}+\zeta x_{b}  -\nu p_{b} \right) }f( \zeta ,\nu ;\lambda ) d(\zeta ,\nu )}:\\
 \xi  \in \mathcal{U}(w;\theta )~~\text{and}~~\theta \in C_{n}( 1-\pi )
\end{array}
\right\}.
\end{equation*}
In turn, the asymptotically valid confidence set for the cross-price elasticity of product $j=1,\dots, J$ with respect to the price of product $k\neq j$ in a market with observables $w$ is given by:
\begin{equation*}
C_{\mathcal{E}_{jk}}(w; 1-\pi)=~\left\{
\begin{array}{c}
e_{jk} = p_{k}\tfrac{\int_{(\zeta ,\nu )}(\alpha +\nu )
\left(\begin{array}{c}
\tfrac{\exp (\beta x_{j} -\alpha p_{j}+\xi _{j}+\zeta x_{j} -\nu p_{j})}{1+\sum_{b=1}^{J}\exp (\beta x_{b} -\alpha p_{b}+\xi _{b}+\zeta x_{b} -\nu p_{b})}\times \\
\tfrac{\exp (\beta x_{k} -\alpha p_{k}+\xi _{k}+\zeta x_{k} -\nu p_{k})}{1+\sum_{b=1}^{J}\exp (\beta x_{b} -\alpha p_{b}+\xi _{b}+\zeta x_{b} -\nu p_{b})}
\end{array}\right)
f(\zeta ,\nu ;\lambda )d(\zeta ,\nu )}{\int_{(\zeta ,\nu )}\tfrac{ \exp \left( \beta x_{j}-\alpha p_{j}+\xi_{j}+ \zeta x_{j} - \nu p_{j} \right) }{1+\sum_{b=1}^{J} \exp \left( \beta x_{b}-\alpha p_{b}+\xi _{b}+\zeta x_{b}  -\nu p_{b} \right) }f( \zeta ,\nu ;\lambda ) d(\zeta ,\nu )}:\\
 \xi  \in \mathcal{U}(w;\theta )~~\text{and}~~\theta \in C_{n}( 1-\pi )
\end{array}
\right\}.
\end{equation*}
The asymptotically valid confidence set for the markup of product $j=1,\dots, J$ in a market with observables $w$ is given by:
\begin{equation*}
C_{\mathcal{M}_{j}}(w; 1-\pi)~=~\left\{
\begin{array}{c}
M_{j} =  \tfrac{
\int_{(\zeta ,\nu )}\tfrac{ \exp \left( \beta x_{j}-\alpha p_{j}+\xi_{j}+ \zeta x_{j} -\nu p_{j} \right) }{1+\sum_{b=1}^{J} \exp \left( \beta x_{b}-\alpha p_{b}+\xi _{b}+\zeta x_{b}  -\nu p_{b} \right) }f( \zeta ,\nu ;\lambda ) d(\zeta ,\nu )
}{
 \int_{(\zeta ,\nu )}(\alpha +\nu )
\left(\begin{array}{c}
\tfrac{\exp (\beta x_{j} -\alpha p_{j}+\xi _{j}+\zeta x_{j} -\nu p_{j})}{1+\sum_{b=1}^{J}\exp (\beta x_{b} -\alpha p_{b}+\xi _{b}+\zeta x_{b} -\nu p_{b})}\times \\
\left( 1-\tfrac{\exp (\beta x_{j} -\alpha p_{j}+\xi _{j}+\zeta x_{j} -\nu p_{j})}{1+\sum_{b=1}^{J}\exp (\beta x_{b} -\alpha p_{b}+\xi _{b}+\zeta x_{b} -\nu p_{b})} \right)
\end{array}\right)
f(\zeta ,\nu ;\lambda )d(\zeta ,\nu )
}:\\
 \xi  \in \mathcal{U}(w;\theta )~~\text{and}~~\theta \in C_{n}( 1-\pi )
\end{array}
\right\}.
\end{equation*}
Finally, the asymptotically valid confidence set for the diversion ratio of product $j = 1,\dots, J$ with respect to product $k \ne j$ in a market with observables $w$ is given by:
\begin{equation*}
C_{\mathcal{D}_{jk}}(w; 1-\pi)~=~\left\{
\begin{array}{c}
D_{jk} = \tfrac{\int_{(\zeta ,\nu )}(\alpha +\nu )
\left(\begin{array}{c}
\tfrac{\exp (\beta x_{j} -\alpha p_{j}+\xi _{j}+\zeta x_{j} -\nu p_{j})}{1+\sum_{b=1}^{J}\exp (\beta x_{b} -\alpha p_{b}+\xi _{b}+\zeta x_{b} -\nu p_{b})}\times \\
 \tfrac{\exp (\beta x_{k} -\alpha p_{k}+\xi _{k}+x_{k}\zeta -\nu p_{k})}{1+\sum_{b=1}^{J}\exp (\beta x_{b} -\alpha p_{b}+\xi _{b}+\zeta x_{b} -\nu p_{b})} 
\end{array}\right)
f(\zeta ,\nu ;\lambda )d(\zeta ,\nu )}{\int_{(\zeta ,\nu )}(\alpha + \nu) 
\left(\begin{array}{c}
\tfrac{\exp (\beta x_{k} -\alpha p_{k}+\xi _{k}+x_{k}\zeta -\nu p_{k})}{1+\sum_{b=1}^{J}\exp (\beta x_{b} -\alpha p_{b}+\xi _{b}+\zeta x_{b} -\nu p_{b})}\times \\
\left( 1-\tfrac{\exp (\beta x_{k} -\alpha p_{k}+\xi _{k}+x_{k}\zeta -\nu p_{k})}{1+\sum_{b=1}^{J}\exp (\beta x_{b} -\alpha p_{b}+\xi _{b}+\zeta x_{b} -\nu p_{b})} \right)
\end{array}\right)
f(\zeta,\nu;\lambda)d(\zeta,\nu)}:\\
 \xi  \in \mathcal{U}(w;\theta )\text{ and }\theta \in C_{n}( 1-\pi )
\end{array}
\right\}.
\end{equation*}
\end{corollary}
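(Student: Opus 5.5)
The plan is to prove Corollary \ref{cor:derived} by a direct containment (projection) argument. Coverage then transfers from $C_n(1-\pi)$ to each derived confidence set. Fix $P\in\mathcal P$, a market with observables $w$, and a quantity of interest, say the own-price elasticity. Write $\varphi_{jj}(\xi,\theta;w)$ for the map in \eqref{eq:elasticity_own2}. By Theorem \ref{thm:eq_objects}, the identified set is $\mathcal E_{jj}(w)=\{\varphi_{jj}(\xi,\theta;w):\xi\in\mathcal U(w;\theta),\ \theta\in\Theta_I(P)\}$. The confidence set $C_{\mathcal E_{jj}}(w;1-\pi)$ is the image of the same map over the enlarged domain $\{(\xi,\theta):\xi\in\mathcal U(w;\theta),\ \theta\in C_n(1-\pi)\}$. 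The same structure holds verbatim for $e_{jk}$, $M_j$ and $D_{jk}$, with the maps from Theorem \ref{thm:eq_objects}, so one argument covers all four.

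The key step is a pointwise event inclusion. For any element $e$ of the identified set, choose $\theta\in\Theta_I(P)$ and $\xi\in\mathcal U(w;\theta)$ with $e=\varphi_{jj}(\xi,\theta;w)$. On the event $\{\theta\in C_n(1-\pi)\}$, the pair $(\xi,\theta)$ lies in the enlarged domain. Note that $\mathcal U(w;\theta)$ depends only on $(w,\theta)$ and not on the sample. Hence $e\in C_{\mathcal E_{jj}}(w;1-\pi)$, which gives
\[
P\big(e\in C_{\mathcal E_{jj}}(w;1-\pi)\big)~\geq~P\big(\theta\in C_n(1-\pi)\big).
\]
Take the infimum over the representing $\theta\in\Theta_I(P)$ (the right side is then bounded below by $\inf_{\theta\in\Theta_I(P)}P(\theta\in C_n(1-\pi))$), then over $e$ in the identified set and $P\in\mathcal P$. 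Applying $\liminf_{n\to\infty}$ and invoking \eqref{eq:CS_validity} yields
\[
\liminf_{n\to\infty}\inf_{P\in\mathcal P}\inf_{e\in\mathcal E_{jj}(w)}P\big(e\in C_{\mathcal E_{jj}}(w;1-\pi)\big)~\geq~1-\pi.
\]
This is the asymptotic validity claimed. The general statement about subvectors of $\theta$ (or any functional $h(\theta)$) follows identically, with $\varphi$ replaced by a coordinate projection.

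The only delicate point is interpretive rather than technical: making precise which object is covered. The guarantee is coverage of each point of the identified set of the derived quantity, in the sense of \eqref{eq:CS_validity}, uniformly over $P\in\mathcal P$ and for a fixed market $w$. I would state this explicitly. I would also check that measurability of the event $\{e\in C_{\mathcal E_{jj}}\}$ is not needed, by working with inner probability or by noting that the inclusion of events suffices. Nothing beyond Theorem \ref{thm:eq_objects} and \eqref{eq:CS_validity} is used, and the argument is conservative, as projection always is.
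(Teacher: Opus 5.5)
Your proposal is correct and follows the paper's route. The paper's proof only remarks that the corollary follows immediately from \eqref{eq:CS_validity} and the identified sets in Theorem \ref{thm:eq_objects}. Your inclusion $\{\theta\in C_n(1-\pi)\}\subseteq\{e\in C_{\mathcal{E}_{jj}}(w;1-\pi)\}$ for a representing $\theta\in\Theta_I(P)$, followed by taking infima and the $\liminf$, is exactly the detail the paper leaves implicit.
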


\section{Empirical illustration}\label{sec:Application}

This section presents an empirical illustration based on the data used by GLS, that is, \citet{gandhi/lu/shi:2023}. The data consist of scanner-level observations on tuna products sold by the Dominick's Finer Foods retail chain. Unlike the GLS analysis, we exclude products with zero sales and retain only the top $30$ products.\footnote{This restriction follows \citet{Nevo:2006aa}, who restrict their sample to the top $30$ UPCs and therefore drop products with small or zero sales.} We further restrict attention to markets in which all $30$ products are available. The resulting sample contains $n = 2{,}358$ markets.

In the empirical application in GLS, inside-good market shares are constructed as the ratio of product sales to the number of consumers who visited the store during a given week. Accordingly, the empirical outside share, $s_0$, is defined as one minus the ratio of total quantity sold to the consumer count in the market. The distribution of $s_0$ is concentrated at high values: the minimum is $0.622$, the $25$th percentile is $0.9318$, the median is $0.9457$, the $75$th percentile is $0.9548$, and the maximum is $0.9798$. We use these values to motivate our specification of $S_0(w)$.

The demand model is derived from the following indirect utility specification:
\begin{align*}
u_{ij} ~=~ \beta x_{j} - \alpha p_{j} + \xi_{j} - \nu_{i} p_{j}+ \varepsilon_{ij}, ~~~\text{with}~\nu_{i} ~\sim~ N(0, \lambda^{2}).
\end{align*}
Following the specification in GLS, the observable covariates $x_{j}$ include a constant, a time trend, and a full set of product fixed effects. For each market, we approximate the relevant integrals via simulation by drawing $1{,}000$ individuals from the distribution implied by the model.

The model contains a large number of parameters. Although our inference procedure applies directly to the full parameter vector, doing so in this application would make the computation substantially more demanding and the resulting confidence sets difficult to display. For this reason, we focus on a lower-dimensional implementation. Specifically, we partition the parameters into two groups. The first group contains the parameters of interest, given by the three-dimensional vector $\theta=(\alpha,\beta_1,\lambda)$, where $\alpha$ is the price coefficient, $\beta_1$ is the constant term, and $\lambda$ governs heterogeneity in price sensitivity. The second group contains the remaining parameters, which we treat as nuisance parameters.

We first estimate all model parameters using standard BLP with the observed outside shares, and then fix the nuisance parameters at their estimated values. Given these values, we apply our inference procedure to the lower-dimensional parameter vector $\theta=(\alpha,\beta_1,\lambda)$ without assuming full knowledge of the outside-good shares. In particular, we consider two information structures for the outside-good shares. The first allows for partial information by setting $S_0(w)=[c,1)$. Guided by the empirical distribution of outside shares in the data, we take $c=0.6$ for this exercise. The second allows for a small amount of missing information relative to the benchmark outside share values by setting $S_0(w)=[s_0-0.05,s_0+0.05]\cap(0,1)$, where $s_0$ is the outside-good share calculated by GLS. Under each information structure, we compute $90\%$ confidence sets for $\theta$.


Our inference objective is the three-dimensional parameter vector $\theta = (\alpha,\beta_{1},\lambda)$. For computational purposes, we restrict attention to
\begin{align*}
    \Theta ~=~\Theta_\alpha \times \Theta_{\beta_{1}} \times \Theta_{\lambda}  ~=~ [0, 3] \times [-8, 0] \times [0, 1].
\end{align*}
The parameter space is chosen to contain the standard BLP estimates obtained under the assumption that the observed outside shares are correct, namely $(\hat\alpha,\hat\beta_1,\hat\lambda)=(1.027,-4.1941,0.0363)$.

We construct confidence sets by using the methodology described in Section \ref{sec:Inference}. In particular, we use the class of instrument functions $\mathcal{G}_{cube}$, constructed from the excluded instruments provided in the GLS data. The dataset contains $30$ continuous excluded instruments for each market, and we set $r_0=1$. Constructing $\mathcal{G}_{cube}$ directly, as in Example \ref{ex:hypercube}, would require forming hypercubes in dimension $d_z=30$. With $r_0=1$, each coordinate is partitioned into two intervals, yielding $2^{30}$ cubes. This number is enormous relative to the sample size, $n=2{,}358$. Following the practical recommendation in \citet{andrews/shi:2013}, we therefore use low-dimensional projections of the instrument vector rather than the full $30$-dimensional partition. Specifically, we use selected one- and two-dimensional combinations of the excluded instruments, which yield a final instrument vector of dimension $51$.\footnote{We impose a global cap of 50 instrument columns, excluding the constant, with 60\% allocated to one-dimensional projections and 40\% to two-dimensional projections. Within each projection dimension, subsets of the excluded instruments are considered until the corresponding cap is reached, and columns supported by fewer than three markets are discarded. The constant vector is included, resulting in a total of $51$ instruments.} We discretize $\mathcal{V}$ into $300$ directions. Combined with $51$ instruments, this yields $p_{n} = 15{,}300$ moment inequalities.

We implement the inference methods described in Section \ref{sec:Inference}, including the self-normalized, bootstrap, and two-step hybrid critical values. The resulting confidence sets are qualitatively similar across methods, with the self-normalized procedure yielding the widest sets. For brevity, we therefore report only the self-normalized confidence sets.

\subsection{Partial information on outside shares}

We first report confidence sets for the partial-information case, $S_0(w)=[0.6,1)$. Figure \ref{fig:GLS_CS_partialS0} depicts the $90\%$ confidence set $C_n(0.9)$. Since $C_n(0.9)$ is a three-dimensional object, we visualize it by plotting its sections in the $(\alpha,\beta_1)$ plane for selected values of $\lambda\in[0,1]$. 

\begin{sidewaysfigure}
    \centering
    \includegraphics[width=\textwidth]{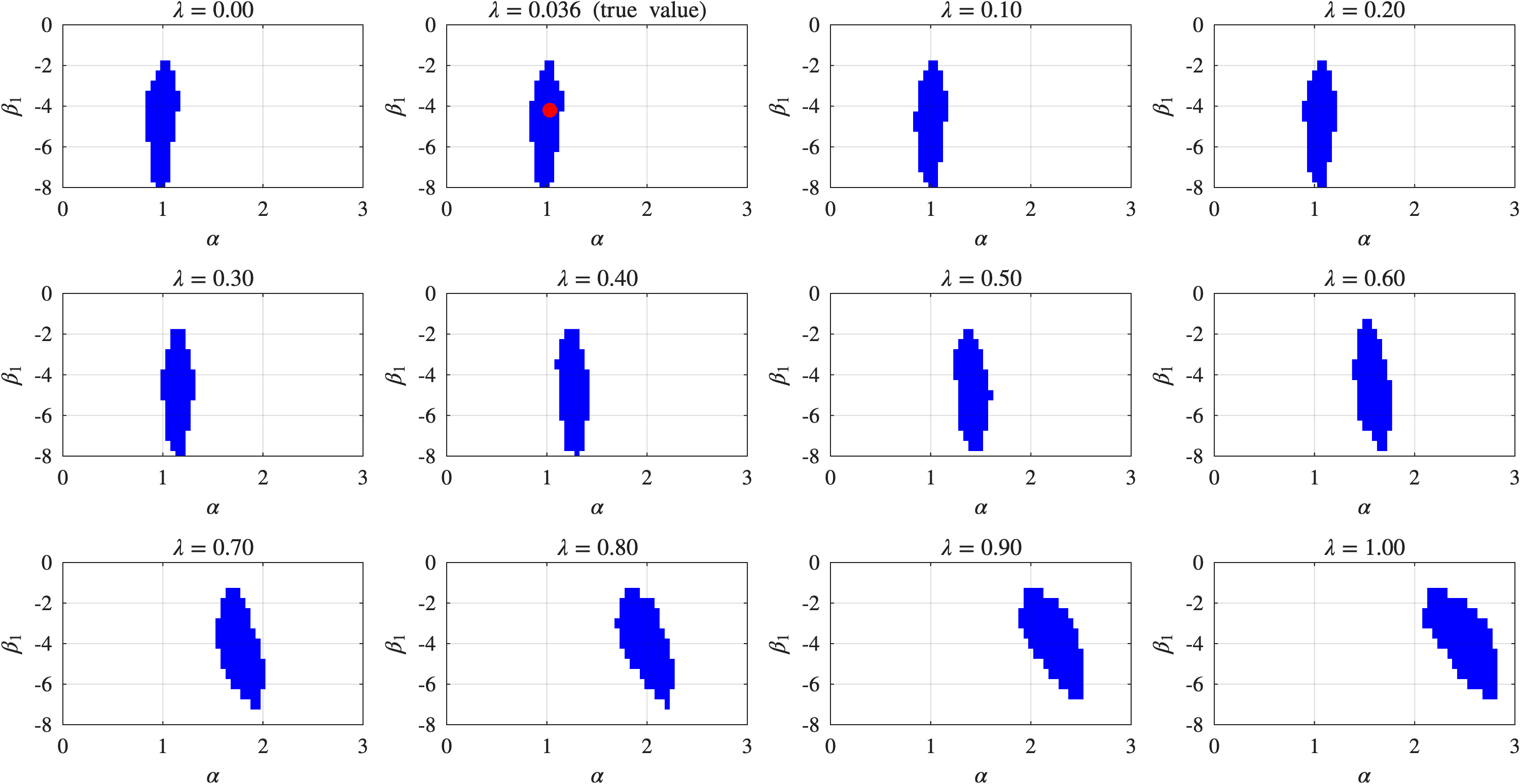}
    \caption{$90\%$ confidence set for $(\alpha,\beta_{1},\lambda)$ and partial information on $s_0$ with $S_0(w)=[0.6,1)$, using the self-normalized critical value.}
    \label{fig:GLS_CS_partialS0}
\end{sidewaysfigure}

The confidence set is admittedly large, especially along the $\beta_1$ and $\lambda$ dimensions. This is a direct reflection of the identifying content of the outside-good shares in the standard BLP analysis. Once full knowledge of these shares is relaxed, the conclusions for some parameters become considerably less precise. Even so, the confidence set remains informative about $\alpha$ and $\beta_1$: it excludes values of $\alpha$ close to zero and rules out values of $\beta_1$ near the upper end of its parameter space. By contrast, the data do not reject any value of $\lambda$ in the parameter space. If we project this confidence set on each parameter coordinate, we obtain the parameter-specific $90\%$ confidence sets:
\begin{align*}
C_\alpha(0.9)~=~[0.85, 2.8],~~~
C_{\beta_{1}}(0.9)~=~[-8, -1.5],~~~\text{and}~~~
C_\lambda(0.9)~=~[0,1].
\end{align*}
As expected, each one of these confidence sets includes their corresponding standard BLP estimates, $(\hat\alpha,\hat\beta_1,\hat\lambda)=(1.027,-4.1941,0.0363)$.

We can also use these results to construct $90\%$ confidence sets for several quantities of interest. For concreteness, we evaluate these objects at the observed values of $w=(\tilde{s},x,p,z)$ in the first market. We begin with confidence sets for elasticities. Since there are $30$ products, many own- and cross-price elasticities could be reported. For conciseness, we focus on three products, selected according to their average market shares. Specifically, we rank products by their average market shares and report results for the first-, second-, and fifteenth-ranked products. This choice allows us to examine interactions between two relatively high-share products, as well as between a high-share product and a medium-share product. The resulting 90\% confidence sets for the selected own- and cross-price elasticities are:
\begin{align*}
&C_{\mathcal{E}_{1,1}}(w;0.9)
= [-1.855,-0.646],
&&C_{\mathcal{E}_{1,2}}(w;0.9)
= [0.000,0.041],
&&C_{\mathcal{E}_{1,15}}(w;0.9)
= [0.000,0.014],\\
&C_{\mathcal{E}_{2,1}}(w;0.9)
= [0.000,0.067],
&&C_{\mathcal{E}_{2,2}}(w;0.9)
= [-1.567,-0.520],
&&C_{\mathcal{E}_{2,15}}(w;0.9)
= [0.000,0.014],\\
&C_{\mathcal{E}_{15,1}}(w;0.9)
= [0.000,0.066],
&&C_{\mathcal{E}_{15,2}}(w;0.9)
= [0.000,0.041],
&&C_{\mathcal{E}_{15,15}}(w;0.9)
= [-2.152,-0.781].
\end{align*}
Although the confidence sets are wide, they preserve several economically meaningful patterns. The own-price elasticities are uniformly negative, the cross-price elasticities are uniformly positive, and the medium-share product (product $15$) tends to exhibit a larger own-price elasticity in magnitude than the two high-share products. As expected, these confidence sets also contain the corresponding elasticity values evaluated at the standard BLP estimates: $e_{1,1} = -0.808$, $e_{2,2} = -0.645$, $e_{15,15} = -0.954$, $e_{1,2} = 0.002$, $e_{1,15} = 0.001$, $e_{2,1} = 0.003$, $e_{2,15} = 0.001$, $e_{15,1}=0.003$ and $e_{15,2} = 0.002$.

The 90\% confidence sets for markups and diversion ratios are:
\begin{align*}
&C_{\mathcal{M}_{1}}(w;0.9)
= [0.426,1.222],
&&C_{\mathcal{M}_{2}}(w;0.9)
= [0.402,1.211],
&&C_{\mathcal{M}_{15}}(w;0.9)
= [0.432,1.191],\\
&C_{\mathcal{D}_{1,2}}(w;0.9)
= [0.000,0.038],
&&C_{\mathcal{D}_{1,15}}(w;0.9)
= [0.000,0.037],
&&C_{\mathcal{D}_{2,1}}(w;0.9)
= [0.000,0.030],\\
&C_{\mathcal{D}_{2,15}}(w;0.9)
= [0.000,0.029],
&&C_{\mathcal{D}_{15,1}}(w;0.9)
= [0.000,0.007],
&&C_{\mathcal{D}_{15,2}}(w;0.9)
= [0.000,0.007].
\end{align*}
The corresponding values evaluated at the standard BLP estimates are $M_{1}=0.978$, $M_{2}=0.977$, and $M_{15}=0.975$ for markups, and $D_{1,2}=0.004$, $D_{1,15}=0.004$, $D_{2,1}=0.003$, $D_{2,15}=0.003$, $D_{15,1}=0.001$, and $D_{15,2}=0.001$ for diversion ratios.
The markup confidence sets are fairly wide, indicating substantial uncertainty about firms' pricing power once full knowledge of the outside-good shares is relaxed. The diversion ratio confidence sets are close to zero, reflecting the small market shares of these products in the selected market.

Overall, this exercise illustrates that uncertainty about the share of the outside goods translates into substantial uncertainty about demand levels and preference heterogeneity. In particular, the confidence set is wide along the $\beta_1$ and $\lambda$ dimensions, and the markup confidence sets are correspondingly broad. At the same time, the confidence sets for diversion ratios and market shares remain close to zero for the selected products. Thus, in this application, limited information on market size primarily affects conclusions about demand levels and markups, whereas the evidence of limited substitution among the selected products is comparatively robust.

\subsection{Small amount of missing information on the outside shares}

We next consider the case in which there is only a small amount of missing information about the outside-good share, with $S_0(w)=[s_0-0.05,s_0+0.05]\cap(0,1)$. Figure \ref{fig:GLS_CS_smallS0} depicts the resulting $90\%$ confidence set $C_n(0.9)$. Since the lowest empirical outside share is $0.622$ and we set $\varepsilon=0.05$, this specification provides substantially more information about the outside-good shares than the one considered in the previous subsection. As a consequence, the confidence set in Figure \ref{fig:GLS_CS_smallS0} is considerably tighter than the one in Figure \ref{fig:GLS_CS_partialS0}.

\begin{sidewaysfigure}
    \centering
    \includegraphics[width=\textwidth]{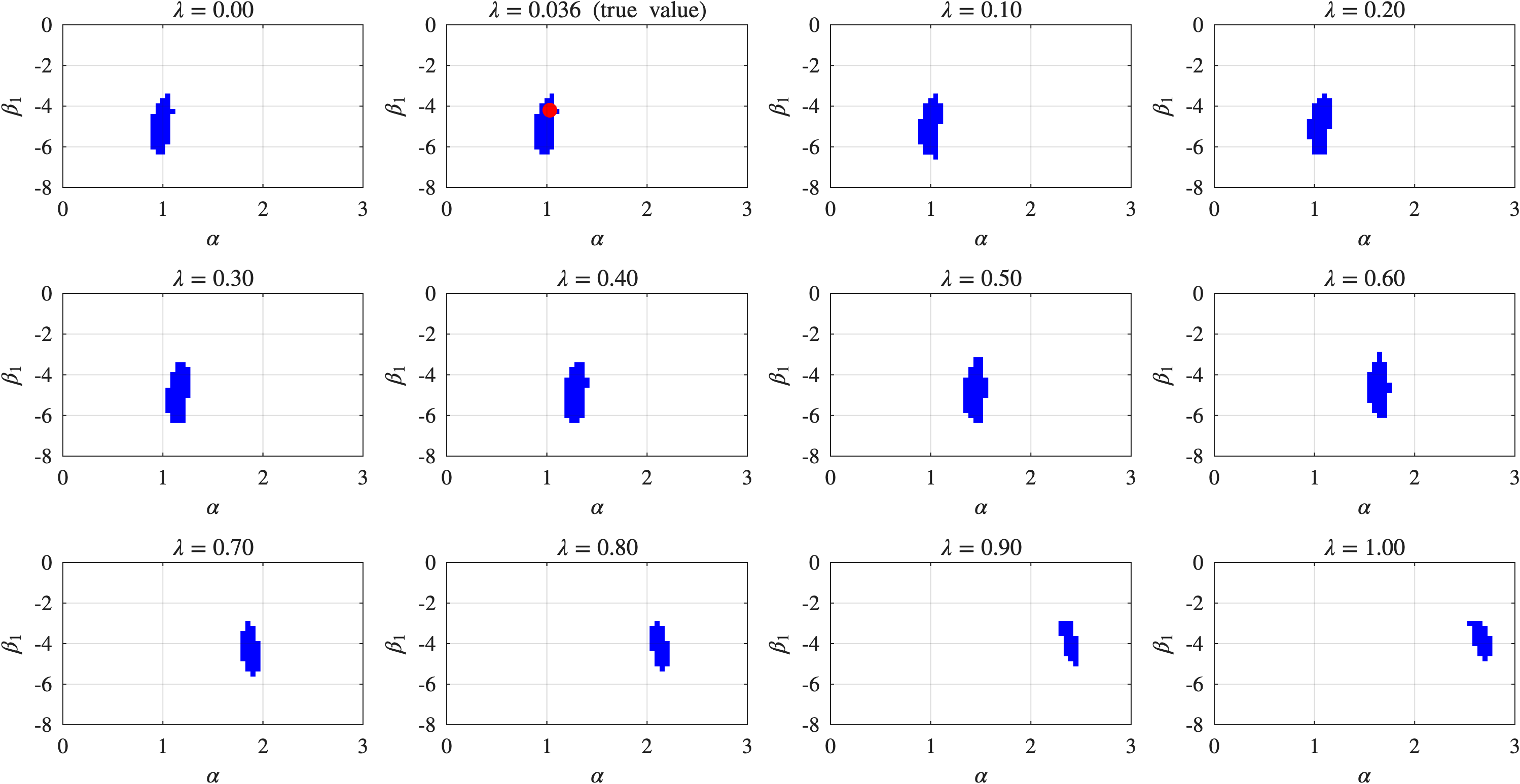}
    \caption{90\% confidence set for $(\alpha,\beta_{1},\lambda)$ and a small amount of missing information on $s_0$, given by $S_0(w)=[s_0-0.05,s_0+0.05]\cap(0,1)$, using the self-normalized critical value.}
    \label{fig:GLS_CS_smallS0}
\end{sidewaysfigure}

Projecting the confidence set onto each parameter coordinate yields parameter-specific confidence sets: 
\begin{align*}
C_\alpha(0.9)~=~[0.9,2.75],~~~
C_{\beta_{1}}(0.9)~=~[-6.5,-3],~~~\text{and}~~~
C_\lambda(0.9)~=~[0,1].
\end{align*}
Despite the missing information on the share of the outside goods, the confidence intervals are quite informative about $\alpha$ and $\beta_1$. By contrast, the projected confidence set for $\lambda$ covers the entire parameter space, indicating that the heterogeneity parameter is not well restricted in this application. This is consistent with the limited precision of the preference-heterogeneity parameters under standard BLP estimation in this empirical setting.

Finally, we compute confidence sets for several quantities of interest. As before, we evaluate these objects at the observed values of $w $ in the first market and focus on the first-, second-, and fifteenth-ranked products by average market share.
\begin{align*}
&C_{\mathcal{E}_{1,1}}(w;0.9)
= [-1.658,-0.700],
&&C_{\mathcal{E}_{1,2}}(w;0.9)
= [0.000,0.009],
&&C_{\mathcal{E}_{1,15}}(w;0.9)
= [0.000,0.003],\\
&C_{\mathcal{E}_{2,1}}(w;0.9)
= [0.000,0.015],
&&C_{\mathcal{E}_{2,2}}(w;0.9)
= [-1.412,-0.560],
&&C_{\mathcal{E}_{2,15}}(w;0.9)
= [0.000,0.003],\\
&C_{\mathcal{E}_{15,1}}(w;0.9)
= [0.000,0.015],
&&C_{\mathcal{E}_{15,2}}(w;0.9)
= [0.000,0.009],
&&C_{\mathcal{E}_{15,15}}(w;0.9)
= [-1.856,-0.829].
\end{align*}
Relative to the partial-information case, these confidence sets are narrower, reflecting the additional information about the outside-good shares. The qualitative patterns, however, are unchanged. The own-price elasticity confidence sets remain uniformly negative, the cross-price elasticity confidence sets remain nonnegative and close to zero, and product $15$ continues to exhibit the largest own-price elasticity in magnitude among the selected products. As expected, the confidence sets also contain the corresponding elasticity values evaluated at the standard BLP estimates reported above.

In this case, the $90\%$ confidence sets for markups and diversion ratios are:
\begin{align*}
&C_{\mathcal{M}_{1}}(w;0.9)
= [0.476,1.128],
&&C_{\mathcal{M}_{2}}(w;0.9)
= [0.446,1.124],
&&C_{\mathcal{M}_{15}}(w;0.9)
= [0.501,1.122],\\
&C_{\mathcal{D}_{1,2}}(w;0.9)
= [0.000,0.008],
&&C_{\mathcal{D}_{1,15}}(w;0.9)
= [0.000,0.010],
&&C_{\mathcal{D}_{2,1}}(w;0.9)
= [0.000,0.007],\\
&C_{\mathcal{D}_{2,15}}(w;0.9)
= [0.000,0.007],
&&C_{\mathcal{D}_{15,1}}(w;0.9)
= [0.000,0.002],
&&C_{\mathcal{D}_{15,2}}(w;0.9)
= [0.000,0.001].
\end{align*}
Once again, our confidence sets are somewhat narrower than those obtained in the previous subsection. The markup confidence sets remain fairly wide, but their lower and upper bounds are both closer to the corresponding standard BLP values. The confidence sets for diversion ratios are also much narrower and remain close to zero, reflecting the small market shares of the selected products.

The comparison with the partial-information case highlights the value of even limited information about the share of the outside goods. Allowing $s_0$ to vary only in a neighborhood of the empirical outside share substantially tightens the confidence sets for $\beta_1$, elasticities, and diversion ratios, while leaving the broad economic conclusions unchanged. The main exception is $\lambda$, whose confidence set continues to cover the full parameter space. This suggests that, in this application, local information about market size is useful for sharpening conclusions about demand levels and substitution patterns, but the data remain much less informative about heterogeneity in price sensitivity.

\section{Conclusions}\label{sec:Conclusions}

This paper studies identification and inference in the BLP demand model when the share of the outside good is unobserved or only partially known. We consider a framework in which the researcher specifies a set that contains the share of the outside good. We show that relaxing the standard assumption of a known outside share typically leads to partial identification, and we derive sharp identified sets for both structural parameters and economically relevant quantities. In particular, we characterize sharp identified sets for structural parameters, unobserved demand shocks, and quantities of interest such as own- and cross-price elasticities, equilibrium markups, and diversion ratios. 

The framework can be used in several ways. When the researcher has partial information about market size, this yields a set of parameter values and counterfactual objects consistent with that information. When the researcher has a benchmark value for the outside share, it can be used for sensitivity analysis by considering neighborhoods around that value. When the researcher has little or no prior information, the approach remains applicable by setting the admissible set for the outside share equal to the open unit interval.

We also develop inference procedures that construct uniformly valid confidence sets for these objects in this partially identified setting. Our results show that even when point identification fails, the model can still yield informative conclusions about demand and market outcomes. Overall, the framework provides a basis for demand estimation and counterfactual analysis when market size is uncertain or only partially known.

Our empirical application illustrates how the proposed framework can be used in practice. It shows how uncertainty about market size can be incorporated directly into demand estimation and used to complement benchmark specifications based on a single outside-share value. The resulting identified sets quantify the extent to which empirical conclusions are driven by information in the model, as opposed to the particular assumptions imposed on the outside-good share.

At the same time, our results show that the identified sets, and the corresponding confidence sets, can be large, especially along certain dimensions. Because our identified sets are sharp, large identified sets indicate that conclusions from the BLP model can be sensitive to the information available about the outside share. Similarly, large confidence sets reflect the combined effect of sampling uncertainty and the loss of identifying information that arises when the outside-share assumption is weakened. In this sense, the analysis quantifies the identifying content of the standard outside-share assumption. When that assumption is relaxed, the data and the model may contain substantially less information about some structural parameters or counterfactual objects. This highlights the value of making assumptions about market size and the outside-good share explicit in empirical IO applications, and of assessing how sensitive empirical conclusions are to those assumptions.

\appendix
 
\begin{small}
 \section{Appendix}\label{sec:appendix_proofs}

 \subsection{Proofs of theorems}
\begin{proof}[Proof of Theorem \ref{thm:Id_set_expression}]
This result is based on \citet[Theorem 5]{chesher/rosen:2017} with $Y=( \tilde{s},x,p) $, $Z=z \in \mathbb{S}_z$, $U=\xi$, and $\mathcal{C}=\{ 0\} $. Recall that $w=(\tilde{s},x,p,z)$. We first verify their Assumptions A1-A5 and MI. To this end, we restate each one of their conditions in our notation and briefly explain why it holds.
\begin{enumerate}
\item Assumption A1: $( w,\xi ) $ are random variables defined on the probability space $( \Omega ,\mathcal{B},P)$, where $\mathcal{B}$ is the Borel $\sigma$-algebra. The support of $(w,\xi ) $ is a subset of a finite-dimensional Euclidean space.\\This condition holds in our setup with $(w,\xi ) \in \Delta _{J}\times \mathbb{R}^{d_{x}}\times \mathbb{R}^{d_{p}}\times \mathbb{R}^{d_{z}}\times \mathbb{R}^{J}$.

\item Assumption A2: The collection of conditional distributions $\mathcal{F} _{( \tilde{s},x,p) |z}=\{ F_{( \tilde{s},x,p) |z=\varpi }:\varpi \in \mathbb{S}_{z}\} $ is identified by the sampling process, where for all $A\subseteq S_{( \tilde{s},x,p) |z=\varpi } $, $\mathcal{F}_{( \tilde{s},x,p) |z}( A|\varpi ) =P( (\tilde{s},x,p)\in A|z=\varpi ) $.\\This follows from the fact that the data are assumed an i.i.d.\ sample from $( \tilde{s} ,x,p,z) $.

\item Assumption A3: There is a $\mathcal{B}$-measurable function $h$ such that $P( h( w,\xi;\theta )=0) =1$, and there is a collection of conditional distributions $\mathcal{G}_{\xi |z}=\{ G_{\xi |z}( \cdot |\varpi ) :\varpi \in \mathbb{S}_{z}\} $, where for all $A\subseteq \mathcal{S}_{\xi |\varpi }$, $G_{\xi |z}( A|\varpi ) =P( \xi \in A|z=\varpi ) $.\\
This holds by defining:
\begin{equation}
h(w,\xi ;\theta) ~=~
\left\{\begin{array}{c}
\left\Vert \tilde{s}-
\dfrac{
\{ \int_{(\zeta ,\nu )}\frac{\exp ( \beta x_{j}-\alpha p_{j}+\xi _{j}+\zeta x_{j} -\nu p_{j} ) }{1+\sum_{b=1}^{J}\exp ( \beta x_{b}-\alpha p_{b}+\xi _{b}+\zeta x_{b} -\nu p_{b} ) }f( \zeta ,\nu ;\lambda ) d( \zeta ,\nu ) \} _{j=1}^{J}
}
{
\int_{(\zeta ,\nu )}\frac{\sum_{e=1}^{J}\exp ( \beta x_{e}-\alpha p_{e}+\xi _{e}+\zeta x_{e}-\nu p_{e} ) }{1+\sum_{l=1}^{J}\exp ( \beta x_{l}-\alpha p_{l}+\xi _{l}+\zeta x_{l} -\nu p_{l} ) }f( \zeta ,\nu ;\lambda ) d( \zeta ,\nu ) 
}\right\Vert\\
+~ I[\sigma_0\big((\beta x_j-\alpha p_j+\xi_j)_{j=1}^J,x,p;\lambda\big) \not\in S_0(w)]
\end{array}\right\},
\label{eq:h_defn}
\end{equation}
where $\sigma_0(\delta,x,p;\lambda)$ is as in \eqref{eq:sigma0_defn}. With this definition, $h(w,\xi;\theta)=0$ if and only if $\xi\in\mathcal{U}(w;\theta)$. The measurability of $h$ follows from the measurability of the share functions and the set $S_0(w)$.

\item Assumption A4: The pair $(h,\mathcal{G}_{\xi |z})$ belongs to a known set of admissible structures $\mathcal{M}$.\\
In our case, we have $\mathcal{M}=\mathcal{M}_{1}\times \mathcal{M}_{2}$, where $\mathcal{M}_{1}=\{ h(\cdot;\theta):\theta \in \Theta \} $ with $h$ as in \eqref{eq:h_defn} and $\mathcal{M}_{2}$ is the space of conditional distributions that satisfy $E[ \xi |z] =0$ a.e.\ $z\in \mathbb{S}_{z}$, respectively. The latter condition is imposed by the fact that $z$ is an IV in this problem.

\item Assumption A5: The random set $\mathcal{U}(w; \theta) $ in \eqref{eq:U_defn} is closed a.s.\ $P(\cdot |z)$ for each $z\in \mathbb{S}_{z}$.\\This is shown in Lemma \ref{lem:U_is_Closed}.

\item Assumption MI with $\mathcal{C}=\{ 0\} $: $G_{\xi |z}$ satisfies $E[ \xi |z ] =0$ a.e.\ $z \in \mathbb{S}_{z}$.\\
The condition follows from the fact that $z$ is an IV in this problem.
\end{enumerate}
Under these conditions, \citet[Theorem 5]{chesher/rosen:2017} implies that
\begin{equation}
\Theta _{I}(P) ~=~\big\{ \theta \in \Theta:\mathbf{0}\in \mathbb{E}[ \mathcal{U}(w; \theta) |z ] \text{ a.e. }z \in \mathbb{S}_{z}\big\} ,
\label{eq:theta_I_Expr1}
\end{equation}
where $\mathbb{E}[\cdot|z]$ denotes the conditional selection or Aumann expectation, see \citet[Section 1.6]{molchanov:2005} or \citet[Definition 3.6]{molchanov/molinari:2018}.
Under our integrability assumption, the conditional selection expectation in \eqref{eq:theta_I_Expr1} is well-defined. By the conditional non-atomicity assumption, this conditional selection expectation is convex a.s.; see \citet[Theorem 3.7]{molchanov/molinari:2018}. Thus, by the support-function characterization of closed convex sets and the expected-support-function formula for selection expectations, $\mathbf{0}\in \mathbb{E}[ \mathcal{U}(w; \theta) |z] $ if and only if
\begin{equation}
    E[ \sup\nolimits_{\xi \in \mathcal{U}(w; \theta) }v^{\prime }\xi |z] \geq 0~~~~\text{for all $v\in \mathbb{R}^{J}$ with $\Vert v\Vert =1$ and a.e.\ $z \in \mathbb{S}_{z}$}
    \label{eq:support_ineq_U}
\end{equation}
where we allow the conditional expectation in \eqref{eq:support_ineq_U} to be extended-valued; see \citet[Theorem 3.11]{molchanov/molinari:2018}. In particular, if the left-hand side of \eqref{eq:support_ineq_U} is equal to $+\infty$ in some direction $v$, then the corresponding inequality is automatically satisfied and imposes no restriction on $\theta$. By this and \eqref{eq:theta_I_Expr1}, we obtain
\begin{equation}
\Theta _{I}(P) ~=~\big\{ 
\theta \in \Theta:E[ \sup\nolimits_{\xi \in \mathcal{U}(w; \theta) }v^{\prime }\xi |z] \geq 0 ~~~
\text{for all }v\in \mathbb{R}^{J}:\Vert v\Vert =1\text{ and a.e. }z \in \mathbb{S}_{z}
\big\} .
\label{eq:theta_I_Expr2}
\end{equation}

To establish \eqref{eq:Id_set_expression} from \eqref{eq:theta_I_Expr2}, it suffices to show that for $\theta =( \alpha ,\beta ,\lambda ) $,
\begin{equation}
\sup\nolimits_{\xi \in \mathcal{U}(w; \theta) }v^{\prime }\xi ~=~\sup\nolimits_{s_{0}\in S_0(w)}v^{\prime }( \sigma ^{-1}( \tilde{s}(1-s_{0}),x,p;\lambda ) -( \beta x_{j}-\alpha p_{j}) _{j=1}^{J}) .
\label{eq:theta_I_aux}
\end{equation}
In turn, this follows from $\mathcal{U}(w; \theta) = \cup _{s_{0}\in  S_0(w)}\{ \sigma ^{-1}(\tilde{s}(1-s_{0}),x,p;\lambda) -( \beta x_{j}-\alpha p_{j}) _{j=1}^{J} \}$, as shown in Lemma \ref{lem:charaU}.
\end{proof}

\begin{proof}[Proof of Theorem \ref{thm:id_sets1}]
Let $\tilde{\Theta} _{I}(P)$ denote the set in the right-hand side expression. We divide the proof into parts. Part 1 shows that $\Theta _{I}(P)\subseteq \tilde{\Theta} _{I}(P)$, and part 2 shows that $\tilde{\Theta} _{I}(P)\subseteq \Theta _{I}(P)$. Their combination implies the desired result.

\noindent \underline{Part 1: $\Theta _{I}(P)\subseteq \tilde{\Theta} _{I}(P)$.} Fix $\theta \in \Theta _{I}(P)$ arbitrarily, and we want to show $\theta \in \tilde{\Theta} _{I}(P)$. By $\theta \in \Theta _{I}(P)$ and the arguments in the proof of Theorem \ref{thm:Id_set_expression}, $\mathbf{0}_{J\times 1}\in \mathbb{E}[\mathcal{U}(w;\theta )|z]$ a.e.\ $z\in \mathbb{S}_{z}$, where $\mathbb{E} [\cdot |z]$ denotes the conditional selection or Aumann expectation. By definition and the fact that $\left\{ \left( \tilde{s},x,p\right) |z\right\} $ has a non-atomic probability space, this implies the existence of a measurable selection $\zeta $ of $\mathcal{U}(w;\theta )$ such that 
\begin{equation}
\mathbf{0}_{J\times 1}~=~E \left[ \zeta |z\right] ~\text{a.e.}~z\in \mathbb{S} _{z}.
\label{eq:id_sets_1}
\end{equation}
By this and Lemma \ref{lem:charaU}, we conclude that $\exists s_{0}^*:\mathbb{S}_{w}\to (0,1)$ with $s_0^*(w) \in S_0(w)$ for all $w \in \mathbb{S}_w$ such that 
\begin{equation}
\zeta ~=~\sigma ^{-1}\left( \tilde{s}(1-s^*_{0}(w)),x,p;\lambda \right) -\left( \beta x_{j}-\alpha p_{j}\right) _{j=1}^{J}.
\label{eq:id_sets_2}
\end{equation}
By \eqref{eq:id_sets_1} and \eqref{eq:id_sets_2}, it follows that $\theta \in \tilde{\Theta} _{I}(P)$, as desired.

\noindent \underline{Part 2: $\tilde{\Theta} _{I}(P)\subseteq \Theta _{I}(P)$.} Fix $\theta \in \tilde{\Theta} _{I}(P)$ arbitrarily, and we want to show $\theta \in \Theta _{I}(P)$. By $\theta \in \tilde{\Theta} _{I}(P)$, $\exists s^*_{0}:\mathbb{S}_{w}\to (0,1)$ with $s^*_0(w) \in S_0(w)$ for all $w \in \mathbb{S}_w$ such that 
\begin{equation}
E[\sigma ^{-1}\left( \tilde{s}(1-s^*_{0}(w)),x,p;\lambda \right) -(\beta x_{j}-\alpha p_{j})_{j=1}^{J}|z]~=~\mathbf{0}_{J\times 1}~\text{a.e.}~z\in \mathbb{S}_{z}.
\label{eq:id_sets_3}
\end{equation}
For each $w\in \mathbb{S}_{w}$ and $v\in \mathbb{R}^{J}$ with $\lVert v\rVert =1$, $s_0:=s^*_0(w)\in S_{0}(w)$ implies
\begin{align}
\sup_{s_0\in S_{0}(w)}v^{\prime }(\sigma ^{-1}\left(\tilde{s} (1-s_0),x,p;\lambda \right) -(\beta x_{j}-\alpha p_{j})_{j=1}^{J})~\geq~ v^{\prime }(\sigma ^{-1}\left( \tilde{s}(1-s_0),x,p;\lambda \right) -(\beta x_{j}-\alpha p_{j})_{j=1}^{J}).
\label{eq:id_sets_4}
\end{align}
By taking conditional expectations, and using \eqref{eq:id_sets_3} and \eqref{eq:id_sets_4}, we have that $ \theta \in \Theta _{I}(P)$, as desired.
\end{proof}

\begin{proof}[Proof of Theorem \ref{thm:id_sets2}]
Per the statement, $S_0(w)=S_0(z)$ for all $w \in \mathbb{S}_w$. We divide the proof into parts. Part 1 shows that $\mathcal{H}(P)\subseteq  \Theta _{I}(P)$, and part 2 provides an example of a data-generating process with ${\Theta} _{I}(P)\neq \emptyset$ and $\mathcal{H}(P)= \emptyset$.

\noindent \underline{Part 1: $\mathcal{H}(P)\subseteq \Theta _{I}(P)$.} This follows immediately from Theorem \ref{thm:id_sets1} and that any function $s_{0}:\mathbb{S}_{z}\to (0,1)$ with $s_0(z) \in S_0(z)$ for all $z \in \mathbb{S}_z$ is a special case of a function $ s_{0}:\mathbb{S}_{w}\to (0,1)$ with $s_0(w) \in S_0(w) = S_0(z)$  for all $w \in \mathbb{S}_w$.

\noindent \underline{Part 2: There are DGPs with ${\Theta} _{I}(P)\neq \emptyset$ and $\mathcal{H}(P)= \emptyset$.} For simplicity, we consider an example with $S_0(z) = (0,1)$. By Lemma \ref{lem:rand_coeff_necessary}, the example then requires a BLP model with random coefficients (i.e., $\lambda \neq \bar{\lambda}$).

Consider a DGP with $J=2$ inside products. We assume $z=0$, $x_{1}=0$, $x_{2}\sim 10B$, $B\sim {\rm Bernoulli}(1/2)$, $p$ has any continuous distribution (e.g., $p=(p_{1},p_{2})$ with $p_{1},p_{2}$ i.i.d.\ $U(0,1)$), independent of $x = (x_1,x_2)$, and $\xi \sim N(\mathbf{0}_{2\times 1},\mathbf{I}_{2\times 2})$, independent of $(x,p)$. The random coefficients are given by $\nu =0$ and $\zeta \sim N(0,\lambda)$, independent of $(x,p,\xi)$. The coefficients are $\theta =(\alpha ,\beta ,\lambda ) =(0,0,1) \in \Theta$. For simplicity, we set $\Theta =\{(0,0,1)\}$. The conditional shares satisfy
\begin{align*}
(s_0, s_{1} ,s_2)~=~
\int \Big(  \tfrac{1}{1+\exp (\xi _{1})+\exp (\xi _{2}+\zeta x_{2})}, \tfrac{\exp (\xi _{1})}{1+\exp (\xi _{1})+\exp (\xi _{2}+\zeta x_{2})} ,
\tfrac{\exp (\xi _{2}+\zeta x_{2})}{1+\exp (\xi _{1})+\exp (\xi _{2}+\zeta x_{2})} \Big)\phi(\zeta) d\zeta
\end{align*}
Also, set $\tilde{s}=( s_{1}/(1-s_{0}) ,s_{2}/(1-s_{0}) )$. For any $\delta \in \mathbb{R}^{2}$, define 
\begin{equation}
\sigma (\delta ,x,p;\lambda )~=~\int\left( 
\begin{array}{c}
 \frac{\exp (\delta _{1})}{1+\exp (\delta _{1})+\exp (\delta _{2}+\zeta x_{2})}  , 
 \frac{\exp (\delta _{2}+\zeta x_{2})}{1+\exp (\delta _{1})+\exp (\delta _{2}+\zeta x_{2})}
\end{array}
\right) \phi(\zeta) d\zeta .
\end{equation}
Note that the right-hand side does not depend on $p$ or $x_{1}$, by $\alpha =0$ and $x_{1}=0$. For any target vector $s\in \Delta _{3}$ with $s_{j}>0$
for $j=1,2,3,$ we can compute $\sigma ^{-1}( (s_1,s_2),x,p;\lambda )$, which will not depend on $p$ or $x_{1}$ (by $\alpha =0$ and $x_{1}=0$). The
conditional inside shares are
\begin{align*}
\tilde{s}_{1} ~=~\frac{\int \frac{\exp (\xi _{1})}{1+\exp (\xi _{1})+\exp (\xi _{2}+\zeta x_{2})}\phi(\zeta) d\zeta }{1-\int \frac{1}{ 1+\exp (\xi _{1})+\exp (\xi _{2}+\zeta x_{2})}\phi(\zeta) d\zeta } ~~\text{ and }~~
\tilde{s}_{2} ~=~\frac{\int \frac{\exp (\xi _{2}+\zeta x_{2})}{1+\exp (\xi _{1})+\exp (\xi _{2}+\zeta x_{2})}\phi(\zeta) d\zeta }{1-\int 
\frac{1}{1+\exp (\xi _{1})+\exp (\xi _{2}+\zeta x_{2})}\phi(\zeta) d\zeta }.
\end{align*}
Note that $\{ (\tilde{s},x,p) |z\} $ is non-atomic, as $p$ is continuously distributed.

By construction, $\theta \in \Theta _{I}(P)$. We now show that $\theta \not\in \mathcal{H}(P)$. To see why, note that $\theta \in \mathcal{H}(P)$ would require $\exists
s_{0}^{\ast }:\mathbb{S}_{z}\rightarrow (0,1)$ s.t.
\begin{equation*}
E[\sigma _{j}^{-1}(\tilde{s}(1-s_{0}^{\ast }(z)),x,p;\lambda ) -(\beta x_{j}-\alpha p_{j})|z]=0\text{ for }j=1,2.
\end{equation*}
Since $\Theta =\{(0,0,1)\}$, $z=0$, $x_{1}=0$, $x_{2}\sim 10B$ with $B\sim {\rm Bernoulli}(1/2)$, and $x_{2}\perp \xi $, this is
equivalent to finding a constant $s_{0}^{\ast }\in (0,1)$ s.t.
\begin{align*}
&\mathbf{0}_{2\times 1}~=~\\
&\left( 
\begin{array}{c}
E_{\xi }\left[ \sigma ^{-1}\left( \left(\frac{\exp (\xi _{1})(1-s_{0}^{\ast })}{\exp (\xi _{1})+\exp (\xi _{2})},\frac{\exp (\xi _{2})(1-s_{0}^{\ast })}{\exp (\xi _{1})+\exp (\xi _{2})}\right),(0,0),p;1\right)  \right] + \\ 
E_{\xi }\left[ \sigma ^{-1}\left( \left(\frac{\left( \int \frac{\exp (\xi _{1})}{1+\exp (\xi _{1})+\exp (\xi _{2}+10\zeta )}\phi(\zeta) d\zeta \right) (1-s_{0}^{\ast })}{\left( 1-\int \frac{1}{1+\exp (\xi _{1})+\exp (\xi _{2}+10\zeta )}\phi(\zeta) d\zeta \right) }, \frac{\left( \int \frac{\exp (\xi _{2}+10\zeta )}{1+\exp (\xi _{1})+\exp (\xi _{2}+10\zeta )}\phi(\zeta) d\zeta \right) (1-s_{0}^{\ast })}{ \left( 1-\int \frac{1}{1+\exp (\xi _{1})+\exp (\xi _{2}+10\zeta )}\phi(\zeta) d\zeta \right) }\right),(0,10),p;1\right) \right] 
\end{array}
\right),
\end{align*}
where $E_{\xi }$ denotes the expectation over $\xi $. We can (numerically) show this is impossible. Define the function:
\begin{align*}
&F(s_{0}) ~=~\\
&\sum_{j=1}^{2}\left( 
\begin{array}{c}
E_{\xi }\left[ \sigma ^{-1}\left( \left(\frac{\exp (\xi _{1})(1-s_{0})}{\exp (\xi _{1})+\exp (\xi _{2})},\frac{\exp (\xi _{2})(1-s_{0})}{\exp (\xi _{1})+\exp (\xi _{2})}\right),(0,0),p;1\right) \right] + \\ 
E_{\xi }\left[ \sigma ^{-1}\left(\left(\frac{\left( \int \frac{\exp (\xi _{1})}{1+\exp (\xi _{1})+\exp (\xi _{2}+10\zeta )}\phi(\zeta) d\zeta \right) (1-s_{0})}{\left( 1-\int \frac{1}{1+\exp (\xi _{1})+\exp (\xi _{2}+10\zeta )}\phi(\zeta) d\zeta \right) }, \frac{\left( \int \frac{\exp (\xi _{2}+10\zeta )}{1+\exp (\xi _{1})+\exp (\xi _{2}+10\zeta )}\phi(\zeta) d\zeta \right) (1-s_{0})}{ \left( 1-\int \frac{1}{1+\exp (\xi _{1})+\exp (\xi _{2}+10\zeta )}\phi(\zeta) d\zeta \right) }\right),(0,10),p;1\right) \right] 
\end{array}
\right) ^{2}.
\end{align*}
We can verify that $F(s_{0})$ attains a minimizer at $s_{0}\approx 0.23$, and its minimized value is $F(s_{0})\approx 0.179>0$.    
\end{proof}

\begin{proof}[Proof of Theorem \ref{thm:eq_objects}]
    This result follows from Theorem \ref{thm:Id_set_expression} and the derivations for the elasticities and markup in Section \ref{sec:standard_BLP}. For brevity, we focus on the argument for the own-price elasticity of product $j=1,\dots,J$, as the arguments for the other quantities are completely analogous.
    
    Consider any $e_{jj}\in\mathcal{E}_{jj}(w)$. We show that $e_{jj}$ belongs to the identified set for the own-price elasticity of product $j$ in the BLP model. By definition of $\mathcal{E}_{jj}(\tilde{s},x,p)$, there exist $\theta\in\Theta_I(P)$ and $\xi\in\mathcal{U}(w;\theta)$ such that $e_{jj}$ satisfies \eqref{eq:elasticity_own2}, i.e., is the own-price elasticity of product $j$ implied by $(\theta,\xi)$. It therefore suffices to verify that $(\theta,\xi)$ is a feasible configuration of structural parameters and demand shocks. By Theorem \ref{thm:Id_set_expression}, $\Theta_I(P)$ is precisely the set of parameter values compatible with the BLP model and the data distribution $P$, and for any $\theta\in\Theta_I(P)$, the set $\mathcal{U}(w;\theta)$ consists of the demand-shock vectors that rationalize the inside shares under $\theta$. Hence, $(\theta,\xi)$ is fully compatible with the data and represents a plausible configuration of structural parameters and demand shocks. Thus, $e_{jj}$ belongs to the identified set for the own-price elasticity of product $j$ in the BLP model.

    Conversely, let $e_{jj}$ be any value in the identified set for the own-price elasticity of product $j$ in the BLP model. According to this model, there exists a structural parameter $\theta$ and a vector of demand shocks $\xi$ that are compatible with the data and rationalize the value $e_{jj}$. By Theorem \ref{thm:Id_set_expression}, this compatibility is equivalent to $\theta\in\Theta_I(P)$ and $\xi\in\mathcal{U}(w;\theta)$. These conditions, together with the equation for the own-price elasticity in \eqref{eq:elasticity_own2}, are exactly the membership conditions for $e_{jj}\in\mathcal{E}_{jj}(w)$. This establishes the converse and completes the proof.
\end{proof}

\subsection{Auxiliary results}

\begin{lemma}\label{lem:BLP_probs} 
Under our assumptions, \eqref{eq:BLP_probs_1} holds.
\end{lemma}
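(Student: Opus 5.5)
We derive \eqref{eq:BLP_probs_1} by first conditioning on the individual random coefficients, integrating out the Type I extreme value taste shocks in closed form, and then integrating over $(\zeta,\nu)$. Fix a market with $(x,p,\xi)$ and a consumer $i$ with random components $(\zeta_i,\nu_i)=(\zeta,\nu)$. Write the mean utility of inside good $j$ for this consumer as $V_j(\zeta,\nu)=\beta x_j-\alpha p_j+\xi_j+\zeta x_j-\nu p_j$ for $j=1,\dots,J$, and set $V_0=0$. Then $u_{ij}=V_j+\epsilon_{ij}$ and $u_{i0}=\epsilon_{i0}$. The consumer chooses $y=j$ when $u_{ij}\ge u_{ib}$ for all $b$; ties have probability zero because the $\epsilon$'s are continuously distributed.

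The first step is the conditional logit formula. Since $(\epsilon_{ij})_{j=0}^J$ are i.i.d.\ Type I extreme value and independent of $(x,p,\xi,\zeta,\nu)$, the standard McFadden computation applies. Condition on $\epsilon_{ij}=e$ and use the Gumbel CDF $F(t)=\exp(-e^{-t})$ to get
\[
P(y=j\mid x,p,\xi,\zeta,\nu)=\int \prod_{b\neq j}\exp\big(-e^{-(e+V_j-V_b)}\big)\,e^{-e}\exp(-e^{-e})\,de .
\]
The substitution $t=e^{-e}$ turns this into $\int_0^\infty \exp\big(-t\sum_{b=0}^J e^{V_b-V_j}\big)\,dt=\exp(V_j)/\sum_{b=0}^J\exp(V_b)$. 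Since $V_0=0$, the denominator is $1+\sum_{b=1}^J\exp(V_b)$.

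The second step removes the conditioning on the random coefficients. By independence of $(\zeta,\nu)$ from $(x,p,\xi)$ and the density assumption $f(\zeta,\nu;\lambda)$, the law of iterated expectations gives $P(y=j\mid x,p,\xi;\theta)=\int P(y=j\mid x,p,\xi,\zeta,\nu)f(\zeta,\nu;\lambda)\,d(\zeta,\nu)$, which is exactly \eqref{eq:BLP_probs_1}. Measurability and integrability are immediate because the integrand is bounded in $(0,1)$. The plain logit case is covered by the degenerate $\bar\lambda$.

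The main point requiring care is the first step: making sure the independence of the $\epsilon$'s from $(\zeta,\nu,\xi,x,p)$ is what allows the Gumbel integral to be computed conditionally, and carrying out the change of variables correctly. Beyond that the argument is routine.
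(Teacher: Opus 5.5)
Your proof is correct and follows essentially the same route as the paper: condition on $(\zeta,\nu)$, use the closed-form logit probability implied by i.i.d.\ Type I extreme value shocks, and integrate against $f(\zeta,\nu;\lambda)$ using the independence of the random coefficients from $(x,p,\xi)$. The only difference is that you derive the conditional logit formula explicitly through the Gumbel integral and the substitution $t=e^{-e}$, whereas the paper simply invokes that formula.
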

\begin{proof}
To this end, fix $j=1,\dots ,J$ arbitrarily. For any $b=1,\dots ,J$, let $\delta _{b}=\beta x_{b} -\alpha p_{b}+\xi _{b}$ and $u_{b}=\zeta x_{b} -\nu p_{b}$, and set $\delta _{0}=0$. Then, 
\begin{align*}
& P(y=j|x,p,\xi ;\theta ) \\
& = ~\int_{(\zeta ,\nu)} P(y=j|x,p,\xi ,\zeta ,\nu;\theta)  dP(\zeta ,\nu \mid x,p,z,\xi;\theta) \\
& \overset{(1)}{=}~ \int_{(\zeta ,\nu)} P\bigg( \delta _{j}+u_{j}+\epsilon _{j} \geq \max_{b \in \{0,\dots ,J\} \setminus \{j\}} \{ \delta _{b}+u_{b}+\epsilon _{b} \} \,\bigg|\, x,p,z,\xi ,\zeta ,\nu ;\theta \bigg)  dP(\zeta ,\nu \mid x,p,z,\xi) \\
& \overset{(2)}{=}~ \int_{(\zeta ,\nu)} P\bigg( \delta _{j}+u_{j}+\epsilon _{j} \geq \max_{b \in \{0,\dots ,J\} \setminus \{j\}} \{ \delta _{b}+u_{b}+\epsilon _{b} \} \,\bigg|\, x,p,z,\xi ,\zeta ,\nu;\theta \bigg) f(\zeta ,\nu ;\lambda) d(\zeta ,\nu) \\
& \overset{(3)}{=}~ \int_{(\zeta ,\nu)} \frac{\exp(\delta _{j}+u_{j})}{1+\sum_{b=1}^{J} \exp(\delta _{b}+u_{b})} f(\zeta ,\nu ;\lambda)  d(\zeta ,\nu) \\
& =~ \int_{(\zeta ,\nu)} \frac{\exp(\beta x_{j} -\alpha p_{j}+\xi _{j}+\zeta x_{j} - \nu p_{j})}{1+\sum_{b=1}^{J} \exp(\beta x_{b} -\alpha p_{b}+\xi _{b}+\zeta x_{b} -\nu p_{b})} f(\zeta ,\nu ;\lambda)  d(\zeta ,\nu),
\end{align*}
as desired, where (1) holds by $\{y=j\} = \{ \delta _{j}+u_{j}+\epsilon _{j} \geq \max_{b \in \{0,\dots ,J\} \setminus \{j\}} (\delta _{b}+u_{b}+\epsilon _{b} ) \}$, (2) by $(\zeta ,\nu) \perp (x,p,z,\xi)$ and $dP(\zeta ,\nu \mid x,p,z,\xi) = f(\zeta ,\nu ;\lambda) \, d(\zeta ,\nu)$, and (3) by the fact that, conditional on $(x,p,z,\xi,\zeta,\nu)$, $(\delta _{b}+u_{b})_{b=1}^{J}$ are non-stochastic, and $(\epsilon _{b})_{b=0}^{J}$ are i.i.d.\ Type I extreme value.
\end{proof}

\begin{lemma}\label{lem:U_is_Closed}
For any $\theta =( \alpha ,\beta ,\lambda ) \in \Theta$, $\mathcal{U}(w;\theta )$ in \eqref{eq:U_defn} is a closed set.
\end{lemma}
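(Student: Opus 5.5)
The plan is to write $\mathcal{U}(w;\theta)$ as the intersection of two preimages of closed sets under continuous maps from $\mathbb{R}^J$. Fix $w=(\tilde s,x,p,z)$ and $\theta=(\alpha,\beta,\lambda)\in\Theta$, and consider the affine map $\xi\mapsto\delta(\xi)=(\beta x_j-\alpha p_j+\xi_j)_{j=1}^J$. With this notation,
\begin{equation*}
\mathcal{U}(w;\theta)~=~\{\xi\in\mathbb{R}^J:\tilde\sigma(\delta(\xi),x,p;\lambda)=\tilde s\}\cap\{\xi\in\mathbb{R}^J:\sigma_0(\delta(\xi),x,p;\lambda)\in S_0(w)\},
\end{equation*}
where $\tilde\sigma$ is as in \eqref{eq:sigma_defn2} and $\sigma_0$ is as in \eqref{eq:sigma0_defn}. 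Since $\xi\mapsto\delta(\xi)$ is continuous, it suffices to show that $\delta\mapsto\sigma(\delta,x,p;\lambda)$ and $\delta\mapsto\sigma_0(\delta,x,p;\lambda)$ are continuous on $\mathbb{R}^J$, and that $\tilde\sigma$ inherits this continuity.

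First, I will establish continuity of $\sigma$ and $\sigma_0$ in $\delta$ by dominated convergence. For fixed $(\zeta,\nu)$, each integrand $\exp(\delta_j+\zeta x_j-\nu p_j)/(1+\sum_b\exp(\delta_b+\zeta x_b-\nu p_b))$, as well as $1/(1+\sum_b\exp(\cdots))$, is continuous in $\delta$ and bounded by $1$. The measure $f(\zeta,\nu;\lambda)d(\zeta,\nu)$ is a probability measure, including the degenerate case $\bar\lambda$. Hence for any sequence $\delta_n\to\delta$ the integrals converge, which gives continuity.

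Next, I will handle $\tilde\sigma$. By \eqref{eq:sigma_defn2}, $\tilde\sigma_j=\sigma_j/(1-\sigma_0)$, and the denominator $1-\sigma_0=\sum_l\sigma_l$ is strictly positive because $\sigma_0\in(0,1)$. So $\tilde\sigma$ is a ratio of continuous functions with a nonvanishing denominator, hence continuous. The first set is then the preimage of the singleton $\{\tilde s\}$, which is closed.

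The one point needing care, and the main obstacle, is the second set, because $S_0(w)$ is only assumed closed relative to $(0,1)$, not in $\mathbb{R}$. I will handle it as follows. Write $S_0(w)=F\cap(0,1)$ with $F$ closed in $\mathbb{R}$. Since $\sigma_0$ takes values in $(0,1)$ for every $\delta$, we have $\{\xi:\sigma_0(\delta(\xi))\in S_0(w)\}=\{\xi:\sigma_0(\delta(\xi))\in F\}$. This is the preimage of a closed set under a continuous map, hence closed. Equivalently, I can argue with sequences: if $\xi_n\to\xi$, then $\sigma_0(\delta(\xi_n))\to\sigma_0(\delta(\xi))\in(0,1)$, and the footnoted closedness condition on $S_0(w)$ gives membership of the limit. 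The intersection of the two closed sets is closed, which proves the lemma.
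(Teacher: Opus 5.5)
Your proposal is correct and follows essentially the same route as the paper. It uses dominated convergence with integrands bounded by $1$ to get continuity of the share maps, the fact that the denominator $1-\sigma_0$ is strictly positive, and the observation that $\sigma_0$ always lies in $(0,1)$, which is what lets relative closedness of $S_0(w)$ apply. The only difference is presentational: you phrase the argument via preimages of closed sets under continuous maps, while the paper runs the equivalent sequential argument directly on a convergent sequence in $\mathcal{U}(w;\theta)$.
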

\begin{proof}
Consider an arbitrary sequence $\{u_{\ell}\in \mathcal{U}(w;\theta ):\ell\in \mathbb{N}\}$ with $u_{\ell}\to u \in \mathbb{R}^J$. To complete the proof, it suffices to show that $u\in \mathcal{U}(w;\theta )$.

For any $\ell\in \mathbb{N}$ and $j=1,\dots,J$, let 
\begin{align*}
A_{j,\ell} &~=~ \int_{(\zeta ,\nu )}\frac{\exp( \beta x_{j}-\alpha p_{j}+u_{m,j}+\zeta x_{j} -\nu p_{j} )}{1+\sum_{b=1}^{J}\exp( \beta x_{b}-\alpha p_{b}+u_{\ell,b}+\zeta x_{b} -\nu p_{b} )} f( \zeta ,\nu ;\lambda ) d( \zeta ,\nu ) \\
A_{0,\ell} &~=~ \int_{(\zeta ,\nu )}\frac{1}{1+\sum_{b=1}^{J}\exp( \beta x_{b}-\alpha p_{b}+u_{\ell,b}+\zeta x_{b} -\nu p_{b} )} f( \zeta ,\nu ;\lambda ) d( \zeta ,\nu ) \\
B_{\ell} &~=~ \int_{(\zeta ,\nu )}\frac{\sum_{l=1}^{J}\exp( \beta x_{l}-\alpha p_{l}+u_{\ell,l}+\zeta x_{l} -\nu p_{l} )}{1+\sum_{b=1}^{J}\exp( \beta x_{b}-\alpha p_{b}+u_{\ell,b}+\zeta x_{b} -\nu p_{b} )} f( \zeta ,\nu ;\lambda ) d( \zeta ,\nu )
\end{align*}
and
\begin{align}
A_{j} &~=~ \int_{(\zeta ,\nu )}\frac{\exp( \beta x_{j}-\alpha p_{j}+u_{j}+\zeta x_{j} -\nu p_{j} )}{1+\sum_{b=1}^{J}\exp( \beta x_{b}-\alpha p_{b}+u_{b}+\zeta x_{b} - \nu p_{b})} f( \zeta ,\nu ;\lambda ) d( \zeta ,\nu ) \notag\\
A_{0} &~=~ \int_{(\zeta ,\nu )}\frac{1}{1+\sum_{b=1}^{J}\exp( \beta x_{b}-\alpha p_{b}+u_{b}+\zeta x_{b} - \nu p_{b})} f( \zeta ,\nu ;\lambda ) d( \zeta ,\nu ) \notag\\
B &~=~\int_{(\zeta ,\nu )}\frac{\sum_{l=1}^{J}\exp( \beta x_{l}-\alpha p_{l}+u_{l}+\zeta x_{l} -\nu p_{l} )}{1+\sum_{b=1}^{J}\exp( \beta x_{b}-\alpha p_{b}+u_{b}+\zeta x_{b} -\nu p_{b} )} f( \zeta ,\nu ;\lambda ) d( \zeta ,\nu ).\label{eq:Bdefn}
\end{align}

By definition, $u_{\ell}\in \mathcal{U}(w;\theta )$ is equivalent to
\begin{equation}
A_{0,\ell} \in S_0(w)~~\text{ and }~~ \tilde{s}_{j}~=~A_{j,\ell}/B_{\ell}~~~~~\text{ for all }j=1,\dots,J.\label{eq:U_is_closed1}
\end{equation}
To complete the proof, it suffices to show that
\begin{equation}
A_{0} \in S_0(w)~~\text{ and }~~ \tilde{s}_{j}~=~A_{j}/B~~~~~\text{ for all }j=1,\dots,J.\label{eq:U_is_closed2}
\end{equation}

Fix $j=1,\dots,J$ for the remainder of this proof. Note that
\begin{align}
\lim_{\ell\to \infty }\frac{\exp( \beta x_{j}-\alpha p_{j}+u_{\ell,j}+\zeta x_{j} -\nu p_{j} )}{1+\sum_{b=1}^{J}\exp( \beta x_{b}-\alpha p_{b}+u_{\ell,b}+\zeta x_{b} -\nu p_{b} )}
&~=~ \frac{\exp( \beta x_{j}-\alpha p_{j}+u_{j}+\zeta x_{j} -\nu p_{j} )}{1+\sum_{b=1}^{J}\exp( \beta x_{b}-\alpha p_{b}+u_{b}+\zeta x_{b} -\nu p_{b} )}.\label{eq:U_is_closed3}
\end{align}
Moreover,
\begin{equation}
\frac{\exp( \beta x_{j}-\alpha p_{j}+u_{\ell,j}+\zeta x_{j} -\nu p_{j})}{1+\sum_{b=1}^{J}\exp( \beta x_{b}-\alpha p_{b}+u_{\ell,b}+\zeta x_{b} -\nu p_{b} )} ~\in~ [0,1).\label{eq:U_is_closed4}
\end{equation}
By \eqref{eq:U_is_closed3} and \eqref{eq:U_is_closed4}, the dominated convergence theorem implies that
\begin{equation}
\lim_{\ell\to \infty }A_{j,\ell}~=~A_{j}.\label{eq:U_is_closed5}
\end{equation}
By an analogous argument, we get
\begin{equation}
\lim_{\ell\to \infty }A_{0,\ell}~=~A_{0}~~\text{ and }~~ \lim_{\ell\to \infty }B_{\ell}~=~B.\label{eq:U_is_closed6}
\end{equation}
By \eqref{eq:U_is_closed1}, \eqref{eq:U_is_closed5}, and \eqref{eq:U_is_closed6}, we conclude that $A_j=B\tilde{s}_j$. By \eqref{eq:Bdefn} and $u\in\mathbb{R}^J$, we conclude that $B>0$, and so $\tilde{s}_j=A_j/B$. Since $j=1,\dots,J$ was arbitrarily chosen, this proves the second part of \eqref{eq:U_is_closed2}.

It remains to prove that $A_0\in S_0(w)$. Since $u\in\mathbb{R}^J$, for every $(\zeta,\nu)$,
\begin{equation}
\frac{1}{1+\sum_{b=1}^{J}\exp( \beta x_{b}-\alpha p_{b}+u_{b}+\zeta x_{b} -\nu p_{b} )} ~\in~ (0,1).
\label{eq:U_is_closed7}
\end{equation}
Since $f(\zeta,\nu;\lambda)d(\zeta,\nu)$ is a probability measure, \eqref{eq:U_is_closed7} implies that $A_0\in(0,1)$. By \eqref{eq:U_is_closed1}, $A_{0,\ell}\in S_0(w)$ for all $\ell\in\mathbb N$. By \eqref{eq:U_is_closed6}, $A_{0,\ell}\to A_0$. Since $A_0\in(0,1)$ and $S_0(w)$ is closed relative to $(0,1)$, we conclude that $A_0\in S_0(w)$. This proves the first part of \eqref{eq:U_is_closed2} and completes the proof.
\end{proof}

\begin{lemma}\label{lem:charaU}
For any $\theta =( \alpha ,\beta ,\lambda ) \in \Theta$, $\mathcal{U}(w;\theta )$ in \eqref{eq:U_defn} satisfies
\begin{equation}
\mathcal{U}(w;\theta )~=~\bigcup_{s_{0}\in S_0(w)}\left\{ \sigma ^{-1}\left( \tilde{s}(1-s_{0}),x,p;\lambda \right) -(\beta x_j - \alpha p_j)_{j=1}^J \right\}   \label{eq:charaU}
\end{equation}
\end{lemma}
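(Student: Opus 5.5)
We prove \eqref{eq:charaU} by double inclusion. The only ingredient is the Berry–BLP invertibility fact recalled in Section \ref{sec:standard_BLP}: for fixed $(x,p,\lambda)$ and any $s$ with $(1-\sum_j s_j,s)$ in the interior of $\Delta_{J+1}$, there is a unique $\delta\in\mathbb{R}^J$ with $\sigma(\delta,x,p;\lambda)=s$.

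Throughout, write $\delta(\xi)=(\beta x_j-\alpha p_j+\xi_j)_{j=1}^J$. Inspecting \eqref{eq:sigma_defn}, \eqref{eq:sigma0_defn} and \eqref{eq:sigma_defn2}, the ratio condition in \eqref{eq:U_defn} reads
\[
\tilde s=\sigma(\delta(\xi),x,p;\lambda)\big/\big(1-\sigma_0(\delta(\xi),x,p;\lambda)\big).
\]
The denominator equals $\sum_l \sigma_l(\delta(\xi),x,p;\lambda)$.

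\emph{Inclusion $\subseteq$.} Take $\xi\in\mathcal{U}(w;\theta)$ and set $s_0:=\sigma_0(\delta(\xi),x,p;\lambda)$. By the second condition in \eqref{eq:U_defn}, $s_0\in S_0(w)$, and $s_0\in(0,1)$ because the integrand in \eqref{eq:sigma0_defn} lies strictly in $(0,1)$. The ratio condition then gives $\sigma(\delta(\xi),x,p;\lambda)=\tilde s(1-s_0)$. The vector $\tilde s(1-s_0)$ has strictly positive entries (since $\tilde s\in\Delta_J$ has positive entries, as $s\in(0,1)^J$) and together with $s_0$ it sums to one. Uniqueness of the inverse therefore yields $\delta(\xi)=\sigma^{-1}(\tilde s(1-s_0),x,p;\lambda)$, that is, $\xi=\sigma^{-1}(\tilde s(1-s_0),x,p;\lambda)-(\beta x_j-\alpha p_j)_{j=1}^J$.

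\emph{Inclusion $\supseteq$.} Take $s_0\in S_0(w)\subseteq(0,1)$. The target $\tilde s(1-s_0)$ lies in the open simplex, so the contraction result in \cite{berry:1994} and \cite{berry/levinsohn/pakes:1995} gives existence of $\delta^*=\sigma^{-1}(\tilde s(1-s_0),x,p;\lambda)$. Define $\xi=\delta^*-(\beta x_j-\alpha p_j)_{j=1}^J$, so that $\delta(\xi)=\delta^*$ and $\sigma(\delta(\xi),x,p;\lambda)=\tilde s(1-s_0)$. Summing the coordinates gives
\[
1-\sigma_0(\delta(\xi),x,p;\lambda)=(1-s_0)\textstyle\sum_j\tilde s_j=1-s_0,
\]
so $\sigma_0(\delta(\xi),x,p;\lambda)=s_0\in S_0(w)$. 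Dividing then recovers the ratio condition, and hence $\xi\in\mathcal{U}(w;\theta)$.

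The only delicate point is existence and uniqueness of $\sigma^{-1}$ for every target in the open simplex, not merely for shares generated by the model. This is the standard BLP result: Berry's contraction argument applies to any strictly positive $s$ with $\sum_j s_j<1$. It is used exactly in the interior of the simplex, which is why we need $S_0(w)\subseteq(0,1)$ and $\tilde s_j>0$.
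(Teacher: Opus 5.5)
Your proof is correct and follows the paper's own argument: a double inclusion that, for $\subseteq$, sets $s_0=\sigma_0(\delta(\xi),x,p;\lambda)$ and uses uniqueness of the BLP inverse, and, for $\supseteq$, applies $\sigma$ to the inverse and sums coordinates to recover $\sigma_0=s_0\in S_0(w)$. If anything, you are more explicit than the paper in the $\subseteq$ direction, where you show that the ratio condition gives $\sigma(\delta(\xi),x,p;\lambda)=\tilde s(1-s_0)$ before inverting; the paper compresses this step into a citation of \eqref{eq:sigma_defn}.
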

\begin{proof}
Fix $\theta =(\alpha ,\beta ,\lambda ) \in \Theta$ arbitrarily. Let $\tilde{\mathcal{U}}$ denote the RHS of \eqref{eq:charaU}. Our goal is to show that $\mathcal{U}(w;\theta )=\tilde{\mathcal{U}}$.

\noindent \underline{Part 1.} $\mathcal{U}(w;\theta )\subseteq \tilde{\mathcal{U}}$. Take $\xi \in \mathcal{U}(w;\theta )$. First, set
\begin{equation}
s_{0}~\equiv~ \int_{(\zeta ,\nu )}\frac{1}{1+\sum_{b=1}^{J}\exp (\beta x_{b}-\alpha p_{b}+\xi _{b}+\zeta x_{b} -\nu p_{b} )}f(\zeta ,\nu ;\lambda )\,d(\zeta ,\nu )~\in~ (0,1) 
.\label{eq:charaU_1}
\end{equation}
Also, set $s\equiv \tilde{s}(1-s_{0})$, and so
\begin{equation}
\sigma ^{-1}(\tilde{s}(1-s_{0}),x,p;\lambda )-(\beta x_j - \alpha p_j)_{j=1}^J~\overset{(1)}{=}~\sigma ^{-1}(s,x,p;\lambda )-(\beta x_j - \alpha p_j)_{j=1}^J~\overset{(2)}{=}~\xi ,
\label{eq:charaU_3}
\end{equation}
where (1) holds by $s=\tilde{s}(1-s_{0})$ and (2) by \eqref{eq:sigma_defn} and \eqref{eq:charaU_1}. By $s_{0}\in S_0(w)$ and \eqref{eq:charaU_3}, we deduce that $\xi \in \tilde{\mathcal{U}}$.

\noindent \underline{Part 2.} $\mathcal{U}(w;\theta )\supseteq \tilde{\mathcal{U} }$. Take $\xi \in \tilde{\mathcal{U}}$. By definition, $\exists s_{0}\in S_0(w)$ such that 
\begin{equation}
    \sigma ^{-1}(\tilde{s}(1-s_{0}),x,p;\lambda )~=~(\beta x_j - \alpha p_j + \xi_j)_{j=1}^J,
    \label{eq:charaU_4}
\end{equation}
By applying $\sigma (\cdot ,x,p;\lambda )$ on both sides, we get 
\begin{align}
\tilde{s}(1-s_{0}) &~=~\sigma ( (\beta x_j - \alpha p_j + \xi_j)_{j=1}^J ,x,p;\lambda ) \notag  \\
&~\overset{(1)}{=}~\left\{ \int_{(\zeta ,\nu )}\frac{\exp ( \beta x_{j}-\alpha p_{j}+\xi _{j}+\zeta x_{j} -\nu p_{j} ) }{1+\sum_{b=1}^{J}\exp (\beta x_{b}-\alpha p_{b}+\xi _{b}+\zeta x_{b} -\nu p_{b} )}f(\zeta ,\nu ;\lambda )\,d(\zeta ,\nu )\right\} _{j=1}^{J},\label{eq:charaU_5}
\end{align}
where (1) holds by \eqref{eq:sigma_defn}. By \eqref{eq:charaU_5} and $\sum_{j=1}^{J}\tilde{s}_{j}=1$, we
obtain
\begin{equation}
    s_{0}~=~\int_{(\zeta ,\nu )}\frac{1}{1+\sum_{b=1}^{J}\exp (\beta x_{b}-\alpha p_{b}+\xi _{b}+\zeta x_{b} -\nu p_{b} )}f(\zeta ,\nu ;\lambda )\,d(\zeta ,\nu ).\label{eq:charaU_6}
\end{equation}
Then, we get the following derivation for any $j=1,\dots,J$, 
\begin{align*}
&\int_{(\zeta ,\nu )}\frac{\exp (\beta x_{j}-\alpha p_{j}+\xi _{j}+\zeta x_{j} -\nu p_{j} )}{1+\sum_{b=1}^{J}\exp (\beta x_{b}-\alpha p_{b}+\xi _{b}+\zeta x_{b} -\nu p_{b} )}f(\zeta ,\nu ;\lambda )\,d(\zeta ,\nu ) \\
&~\overset{(1)}{=}~\tilde{s}_j(1-s_{0}) \\
&~\overset{(2)}{=}~\tilde{s}_j\left( \int_{(\zeta ,\nu )}\frac{ \sum_{b=1}^{J}\exp (\beta x_{b}-\alpha p_{b}+\xi _{b}+\zeta x_{b} -\nu p_{b} ) }{1+\sum_{b=1}^{J}\exp (\beta x_{b}-\alpha p_{b}+\xi _{b}+\zeta x_{b} -\nu p_{b} )}f(\zeta ,\nu ;\lambda )\,d(\zeta ,\nu )\right) ,
\end{align*}
where (1) holds by \eqref{eq:charaU_5} and (2) by \eqref{eq:charaU_6}. This implies that $\xi \in \mathcal{U}(w;\theta )$, as desired.
\end{proof}

\begin{lemma}\label{lem:rand_coeff_necessary}
    Assume the conditions in Theorem \ref{thm:id_sets2}. In the plain logit case and under the assumption that $S_0(z)$ is sufficiently unrestricted (e.g., $S_{0}(z)=( 0,1)$), $\mathcal{H}(P)=\Theta_{I}(P)$.
\end{lemma}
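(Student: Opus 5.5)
Since Theorem \ref{thm:id_sets2} already gives $\mathcal{H}(P)\subseteq\Theta_I(P)$, the plan is to prove only the reverse inclusion $\Theta_I(P)\subseteq\mathcal{H}(P)$ in the plain logit case with $S_0(z)=(0,1)$. The idea is that, in plain logit, the outside share enters the inverted shocks only through a common additive shift. Once we condition on $z$, any $w$-dependent choice $s_0^*(w)$ can therefore be replaced by a $z$-dependent one.

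First, I would use \eqref{eq:sigmaj_inv} with $s=\tilde s(1-s_0)$. For each $j$ this gives
\begin{equation*}
\sigma^{-1}_j(\tilde s(1-s_0),x,p;\bar\lambda)=\ln\tilde s_j+\ln\big((1-s_0)/s_0\big).
\end{equation*}
Write $c(s_0)=\ln((1-s_0)/s_0)$. This map is a bijection from $(0,1)$ onto $\mathbb{R}$, and it is the same for every product $j$.

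Next, take $\theta\in\Theta_I(P)$. By Theorem \ref{thm:id_sets1}, there is a measurable $s_0^*:\mathbb S_w\to(0,1)$ such that
\begin{equation*}
E\big[\ln\tilde s_j+c(s_0^*(w))-(\beta x_j-\alpha p_j)\,\big|\,z\big]=0\quad\text{for all }j=1,\dots,J,
\end{equation*}
almost everywhere in $z$. I would then define $\bar c(z)=E[c(s_0^*(w))\mid z]$, which is real-valued, and set $\bar s_0(z)=c^{-1}(\bar c(z))\in(0,1)=S_0(z)$. Because $c(s_0)$ is additive and common across $j$, replacing $c(s_0^*(w))$ by $\bar c(z)$ leaves each conditional moment unchanged. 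Hence $\bar s_0$ satisfies the defining condition of $\mathcal{H}(P)$, and so $\theta\in\mathcal{H}(P)$.

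The main obstacle is integrability: $\bar c(z)$ must be a finite, well-defined real number. The selection in Theorem \ref{thm:id_sets1} comes from an integrable selection of $\mathcal U(w;\theta)$ (see Part 1 of that proof). So each coordinate $\ln\tilde s_j+c(s_0^*)-(\beta x_j-\alpha p_j)$ is integrable. Then $c(s_0^*)$, being the difference of one such coordinate and $\ln\tilde s_1-(\beta x_1-\alpha p_1)$, is integrable whenever $E|\ln\tilde s_1|$ and $E|\beta x_1-\alpha p_1|$ are finite. Alternatively, conditional integrability follows directly from the conditional moment condition. If this step needs an extra assumption, I would state it as part of ``sufficiently unrestricted''. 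For more general $S_0(z)$, the same argument works whenever $c(S_0(z))$ is convex (an interval), because then $\bar c(z)$ stays in $c(S_0(z))$; this is the sense of ``sufficiently unrestricted''. Finally, I would check that $\bar s_0$ is measurable in $z$, which holds because $c^{-1}$ is continuous.
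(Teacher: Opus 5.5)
Your proposal is correct and is essentially the paper's proof. In plain logit the outside share enters $\sigma^{-1}(\tilde s(1-s_0),x,p;\bar\lambda)$ only through the common shift $\ln((1-s_0)/s_0)$, and the paper likewise sets $s_0^{**}(z)=\big(\exp(E[\ln((1-s_0^*(w))/s_0^*(w))\mid z])+1\big)^{-1}$, which is exactly your $c^{-1}(\bar c(z))$.

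Your integrability caveat is well placed, since the paper splits the conditional expectation without comment. However, drop the claim that integrability follows from the conditional moment condition alone. That condition only makes the sum $\ln\tilde s_j+c(s_0^*(w))-(\beta x_j-\alpha p_j)$ integrable, not $c(s_0^*(w))$ by itself. So you genuinely need integrability of $\ln\tilde s_1-(\beta x_1-\alpha p_1)$ given $z$. That is a condition on the data and $\theta$, not part of ``sufficiently unrestricted''.

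Separately, your interval condition on $S_0(z)$ is a correct sharpening of what ``sufficiently unrestricted'' needs to mean.
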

\begin{proof}
Per the statement, we focus on the plain logit case (i.e., $\lambda =\bar{\lambda}$). By Theorem \ref{thm:id_sets1}, we have $\mathcal{H}(P)\subseteq \Theta_{I}(P)$ . Thus, it suffices to show $\Theta_{I}(P)\subseteq \mathcal{H}(P)$. Pick $\theta =\left(\beta ,\alpha ,\bar{\lambda}\right) \in \Theta_{I}(P)$ arbitrarily, and we seek to show $( \beta ,\alpha ,\bar{\lambda}) \in \Theta_{I}(P)$. By $\theta =( \beta ,\alpha ,\bar{\lambda}) \in \Theta_{I}(P)$, $\exists s^*_{0}:\mathbb{S}_{w}\rightarrow (0,1)$ with $s^*_0(w) \in S_0(w)$ for all $w \in \mathbb{S}_w$ s.t.
\begin{equation}
E\left[ \sigma ^{-1}(\tilde{s}(1-s^*_{0}(w) ),x,p;\bar{\lambda})-\left.\left( \beta x_{j}-\alpha p_{j}\right) _{j=1}^{J}\right|z \right] ~=~\mathbf{0}_{J\times 1}~\text{a.e.}~z\in \mathbb{S}_{z}.
\label{eq:id_sets_5}
\end{equation}
Under $\lambda =\bar{\lambda}$, for any $s^*_0(w) \in S_0(w)$,
\begin{equation}
\sigma ^{-1}(\tilde{s}(1-s^*_0(w)),x,p;\lambda )
~=~\left\{ \ln ( \tilde{s}_{j})+\ln \left( \frac{1-s^*_0(w) }{s^*_0(w) }\right) \right\} _{j=1}^{J},
\label{eq:id_sets_6}
\end{equation}
and so \eqref{eq:id_sets_5} is equivalent to 
\begin{equation}
\left( E\left[ \ln (\tilde{s}_{j})-\left( \beta x_{j}-\alpha p_{j}\right) |z \right] +E\left[\left. \ln \left( \frac{1-s^*_{0}(w) }{s^*_{0}(w) }\right) \right|z\right] \right) _{j=1}^{J}~=~\mathbf{0}_{J\times 1}~\text{a.e.}~z\in \mathbb{S}_{z}.
\label{eq:id_sets_7}
\end{equation}

For each $z\in \mathbb{S}_{z}$, define $s^{**}_0:\mathbb{S}_{z}\rightarrow (0,1)$ as follows:
\begin{equation}
s^{**}_0(z) ~=~\left.\left( \exp \left( E\left[ \ln \left( \frac{1-s^*_{0}(w) }{s^*_{0}(w) }\right) \right|z\right] \right) +1\right) ^{-1}.
\label{eq:id_sets_8}
\end{equation}
Provided that $S_0(z)$ is sufficiently unrestricted (e.g., $S_0(z) = (0,1)$), the right-hand side of \eqref{eq:id_sets_8} belongs to $S_0(z)$ for all $z\in \mathbb{S}_{z}$. Then, \eqref{eq:id_sets_8} implies that $s^{**}_0:\mathbb{S}_{z}\rightarrow (0,1)$ with $s^{**}_0(z) \in S_0(z)$ for all $z \in \mathbb{S}_z$ s.t.
\begin{equation}
E\left[ \left.\ln \left( \frac{1-s^*_{0}(w) }{s^*_{0}(w) }\right) \right|z\right] ~=~\ln \left( \frac{1-s^{**}_{0}(z) }{s^{**}_{0}(z) }\right) .
\label{eq:id_sets_9}
\end{equation}
By combining \eqref{eq:id_sets_7}, and \eqref{eq:id_sets_9}, we conclude that
\begin{equation*}
\left( E\left[ \ln (\tilde{s}_{j})-\left( \beta x_{j}-\alpha p_{j}\right) |z \right] +\ln \left( \frac{1-s^{**}_{0}(z) }{{s}^{**}_{0}(z) }\right) \right) _{j=1}^{J}~=~\mathbf{0}_{J\times 1}~\text{a.e.}~z\in \mathbb{S}_{z}.
\end{equation*}
By combining this and \eqref{eq:id_sets_6}, we get
\begin{equation}
E\left[ \sigma ^{-1}(\tilde{s}(1-s^{**}_{0}(z) ),x,p;\bar{\lambda})-\left.\left( \beta x_{j}-\alpha p_{j}\right) _{j=1}^{J}\right|z\right] ~=~\mathbf{0}_{J\times 1}~\text{a.e.}~z\in  \mathbb{S}_{z}.
\end{equation}
By definition, this means that $( \beta ,\alpha ,\bar{\lambda}) \in \mathcal{H}(P).$
\end{proof}

\begin{proof}[Proof of Corollary \ref{cor:derived}]
    This follows immediately from \eqref{eq:CS_validity} and the identified sets in Theorem \ref{thm:eq_objects}.
\end{proof}

\subsection{Additional numerical results}\label{sec:Numerical_NoInfo}

This section reports additional numerical results for the case $S_0(w)=(0,1)$, which represents a complete lack of information about the share of outside goods. We use exactly the same computational procedure as in Section \ref{sec:Numerical1}, except that we replace the lower bound $c=0.8$ with $c=0$. 

This exercise has two related goals. First, it illustrates what can be learned when the researcher has no information about the outside-good share beyond the restrictions imposed by the parameter space. Second, by comparing these results with the partial-information case in the main text, it shows the identifying content of imposing a lower bound on $s_0$ in settings where, as in this design, the distribution of the outside-good share is concentrated at very high values.

Figure \ref{fig:idset_global} reports the resulting identified set and compares it with the partial-information case $S_0(w)=[0.8,1)$ discussed in the main text. The numerical results show a substantial enlargement of the identified set relative to the partial-information case. Most notably, the upper bound on $\beta_1$ is no longer informative within the parameter space considered, and, for larger values of $\lambda$, the identified sets for $\alpha$ become much wider. This comparison highlights the identifying content of the lower bound on $s_0$: once this restriction is removed, we lose a substantial amount of information about the structural parameters.

\begin{sidewaysfigure}
    \centering
    \includegraphics[width=\textheight]{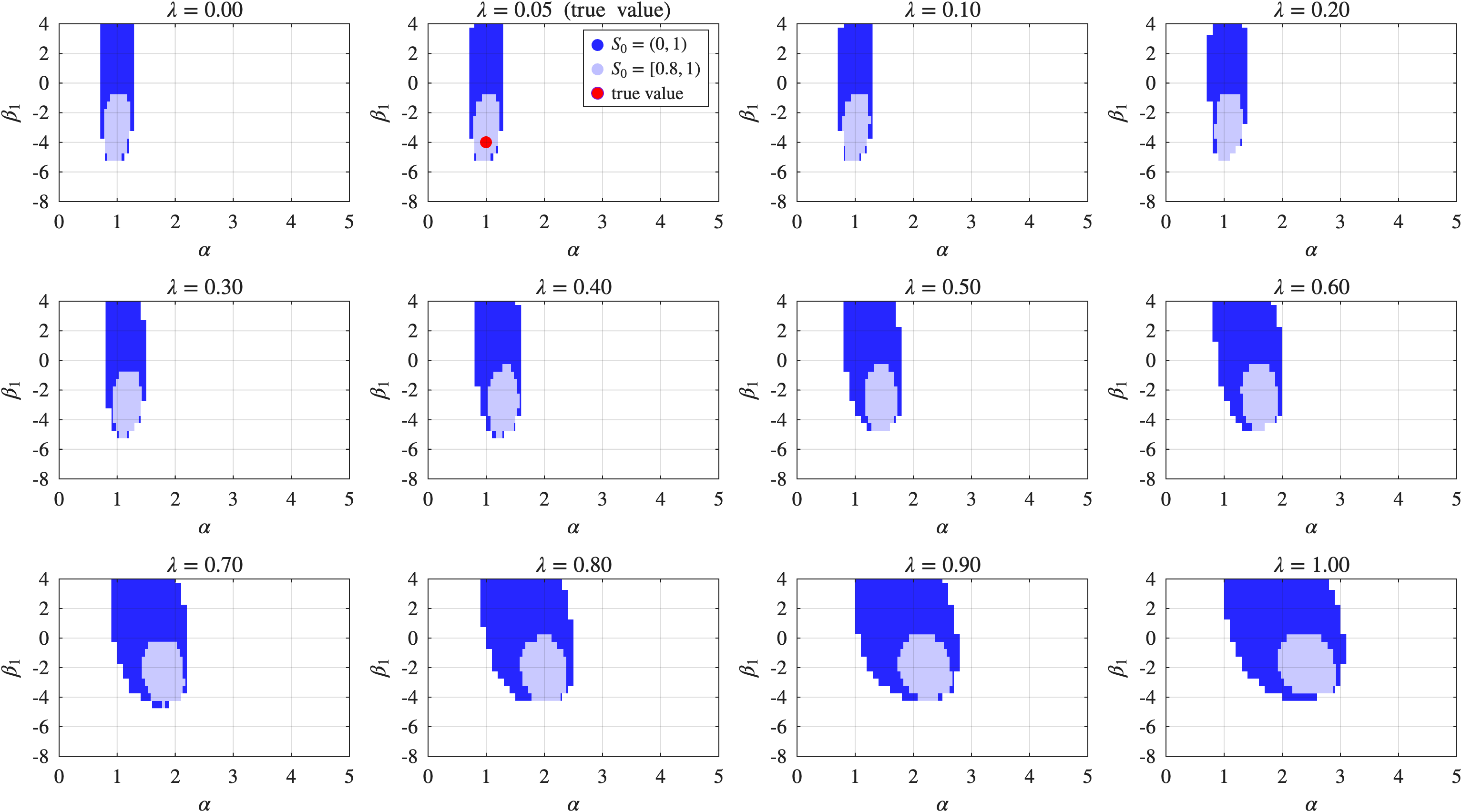}
        \caption{Identified set $\Theta _{I}(P)$ under two information structures for the outside-good share: $S_{0} = (0,1)$ (dark blue) and $S_0(w) = [0.8, 1)$ (light blue), in a DGP with $(\alpha,\beta_{1},\lambda) = (1,-4,0.05)$, computed with tolerance parameter $\tau=0.01$.}
    \label{fig:idset_global}
\end{sidewaysfigure}
\end{small}

\bibliography{BIBLIOGRAPHY}

\begin{thebibliography}{30}
\newcommand{\enquote}[1]{``#1''}
\expandafter\ifx\csname natexlab\endcsname\relax\def\natexlab#1{#1}\fi

\bibitem[\protect\citeauthoryear{Aguiar and Kashaev}{Aguiar and
  Kashaev}{2025}]{Aguiar:2025aa}
\textsc{Aguiar, V.~H. and N.~Kashaev} (2025): \enquote{Identification and
  estimation of discrete choice models with unobserved choice sets,}
  \emph{Journal of Business \& Economic Statistics}, 43, 204--215.

\bibitem[\protect\citeauthoryear{Andrews and Shi}{Andrews and
  Shi}{2013}]{andrews/shi:2013}
\textsc{Andrews, D. W.~K. and X.~Shi} (2013): \enquote{Inference Based on
  Conditional Moment Inequalities,} \emph{Econometrica}, 81, 609--666.

\bibitem[\protect\citeauthoryear{Armstrong}{Armstrong}{2016}]{armstrong:2016}
\textsc{Armstrong, T.~B.} (2016): \enquote{Large Market Asymptotics For
  Differentiated Product Demand Estimators with Economic Models of Supply,}
  \emph{Econometrica}, 84, 1961--1980.

\bibitem[\protect\citeauthoryear{Backus, Conlon, and Sinkinson}{Backus
  et~al.}{2021}]{Backus:2021aa}
\textsc{Backus, M., C.~Conlon, and M.~Sinkinson} (2021): \enquote{Common
  ownership and competition in the ready-to-eat cereal industry,} Tech. rep.,
  National Bureau of Economic Research.

\bibitem[\protect\citeauthoryear{Belloni, Bugni, and Chernozhukov}{Belloni
  et~al.}{2019}]{belloni/bugni/chernozhukov:2019}
\textsc{Belloni, A., F.~Bugni, and V.~Chernozhukov} (2019): \enquote{Subvector
  inference in partially identified moment inequality models with many moment
  inequalities,} Mimeo: Duke University and M.I.T.

\bibitem[\protect\citeauthoryear{Berry, Carnall, and Spiller}{Berry
  et~al.}{2006}]{Berry:2006tt}
\textsc{Berry, S., M.~Carnall, and P.~T. Spiller} (2006): \enquote{Airline
  hubs: costs, markups and the implications of customer heterogeneity,}
  \emph{Competition policy and antitrust}.

\bibitem[\protect\citeauthoryear{Berry, Gandhi, and Haile}{Berry
  et~al.}{2013}]{berry/gandhi/haile:2013}
\textsc{Berry, S., A.~Gandhi, and P.~Haile} (2013): \enquote{Connected
  Substitutes and Invertibility of Demand,} \emph{Econometrica}, 81,
  2087--2111.

\bibitem[\protect\citeauthoryear{Berry and Haile}{Berry and
  Haile}{2014}]{berry/haile:2014}
\textsc{Berry, S. and P.~Haile} (2014): \enquote{Identification in
  Differentiated Products Markets Using Market Level Data,}
  \emph{Econometrica}, 82, 1749--1797.

\bibitem[\protect\citeauthoryear{Berry, Levinson, and Pakes}{Berry
  et~al.}{1995}]{berry/levinsohn/pakes:1995}
\textsc{Berry, S., J.~Levinson, and A.~Pakes} (1995): \enquote{Automobile
  Prices in Market Equilibrium,} \emph{Econometrica}, 63, 841--890.

\bibitem[\protect\citeauthoryear{Berry, Linton, and Pakes}{Berry
  et~al.}{2004}]{berry/linton/pakes:2004}
\textsc{Berry, S., O.~Linton, and A.~Pakes} (2004): \enquote{Limit Theorems for
  Estimating the Parameters of Differentiated Product Demand Systems,}
  \emph{Review of Economic Studies}, 71, 613--654.

\bibitem[\protect\citeauthoryear{Berry}{Berry}{1994}]{berry:1994}
\textsc{Berry, S.~T.} (1994): \enquote{Estimating Discrete-Choice Models of
  Product Differentiation,} \emph{The RAND Journal of Economics}, 25, 242--262.

\bibitem[\protect\citeauthoryear{Berry and Haile}{Berry and
  Haile}{2021}]{berry/haile:2021}
\textsc{Berry, S.~T. and P.~A. Haile} (2021): \enquote{Foundations of demand
  estimation,} in \emph{Handbook of Industrial Organization}, Elsevier, vol.~4,
  1--62.

\bibitem[\protect\citeauthoryear{Billingsley}{Billingsley}{1995}]{billingsley:1995}
\textsc{Billingsley, P.} (1995): \emph{Probability and Measure}, John Wiley and
  Sons, Inc.

\bibitem[\protect\citeauthoryear{Byrne, Imai, Jain, and Sarafidis}{Byrne
  et~al.}{2022}]{Byrne:2022vz}
\textsc{Byrne, D.~P., S.~Imai, N.~Jain, and V.~Sarafidis} (2022):
  \enquote{Instrument-free identification and estimation of differentiated
  products models using cost data,} \emph{Journal of Econometrics}, 228,
  278--301.

\bibitem[\protect\citeauthoryear{Chernozhukov, Chetverikov, and
  Kato}{Chernozhukov et~al.}{2019}]{chernozhukov/chetverikov/kato:2019}
\textsc{Chernozhukov, V., D.~Chetverikov, and K.~Kato} (2019):
  \enquote{Inference on Causal and Structural Parameters using Many Moment
  Inequalities,} \emph{The Review of Economic Studies}, 86, 1867--1900.

\bibitem[\protect\citeauthoryear{Chesher and Rosen}{Chesher and
  Rosen}{2017}]{chesher/rosen:2017}
\textsc{Chesher, A. and A.~Rosen} (2017): \enquote{Generalized Instrumental
  Variable Models,} \emph{Econometrica}, 85, 959--989.

\bibitem[\protect\citeauthoryear{Chu, Leslie, and Sorensen}{Chu
  et~al.}{2011}]{Chu:2011ul}
\textsc{Chu, C.~S., P.~Leslie, and A.~Sorensen} (2011): \enquote{Bundle-size
  pricing as an approximation to mixed bundling,} \emph{The American Economic
  Review}, 263--303.

\bibitem[\protect\citeauthoryear{Freyberger}{Freyberger}{2015}]{freyberger:2015}
\textsc{Freyberger, J.} (2015): \enquote{Asymptotic theory for differentiated
  products demand models with many markets,} \emph{Journal of Econometrics},
  185, 162--181.

\bibitem[\protect\citeauthoryear{Gandhi, Lu, and Shi}{Gandhi
  et~al.}{2023}]{gandhi/lu/shi:2023}
\textsc{Gandhi, A., Z.~Lu, and X.~Shi} (2023): \enquote{Estimating Demand for
  Differentiated Products with Zeroes in Market Share Data,} \emph{Quantitative
  Economics}, 14, 381--418.

\bibitem[\protect\citeauthoryear{Hong, Li, and Li}{Hong
  et~al.}{2021}]{hong/li/li:2021}
\textsc{Hong, H., H.~Li, and J.~Li} (2021): \enquote{BLP Estimation Using
  Laplace Transformation and Overlapping Simulation Draws,} \emph{Journal of
  Econometrics}, 222, 56--72.

\bibitem[\protect\citeauthoryear{Horta{\c{c}}su, Oery, and
  Williams}{Horta{\c{c}}su et~al.}{2022}]{Hortacsu:2022wt}
\textsc{Horta{\c{c}}su, A., A.~Oery, and K.~R. Williams} (2022):
  \enquote{Dynamic price competition: Theory and evidence from airline
  markets,} Tech. rep., National Bureau of Economic Research.

\bibitem[\protect\citeauthoryear{Huang and Rojas}{Huang and
  Rojas}{2013}]{Huang:2013vh}
\textsc{Huang, D. and C.~Rojas} (2013): \enquote{The Outside Good Bias in Logit
  Models of Demand with Aggregate Data,} \emph{Economics Bulletin}, 33,
  198--206.

\bibitem[\protect\citeauthoryear{Kim and Sawada}{Kim and
  Sawada}{2024}]{Kim:2024ti}
\textsc{Kim, D. and M.~Sawada} (2024): \enquote{Market Size and Market Power,}
  \emph{Available at SSRN 4719046}.

\bibitem[\protect\citeauthoryear{Li, Mazur, Park, Roberts, Sweeting, and
  Zhang}{Li et~al.}{2022}]{Li:2022tk}
\textsc{Li, S., J.~Mazur, Y.~Park, J.~Roberts, A.~Sweeting, and J.~Zhang}
  (2022): \enquote{Repositioning and market power after airline mergers,}
  \emph{The RAND Journal of Economics}.

\bibitem[\protect\citeauthoryear{Molchanov}{Molchanov}{2005}]{molchanov:2005}
\textsc{Molchanov, I.} (2005): \emph{Theory of Random Sets}, Probability and
  Its Applications, London: Springer.

\bibitem[\protect\citeauthoryear{Molchanov and Molinari}{Molchanov and
  Molinari}{2018}]{molchanov/molinari:2018}
\textsc{Molchanov, I. and F.~Molinari} (2018): \emph{Random Sets in
  Econometrics}, vol.~60 of \emph{Econometric Society Monographs}, Cambridge
  University Press.

\bibitem[\protect\citeauthoryear{Moon, Shum, and Weidner}{Moon
  et~al.}{2018}]{Moon:2018aa}
\textsc{Moon, H.~R., M.~Shum, and M.~Weidner} (2018): \enquote{Estimation of
  random coefficients logit demand models with interactive fixed effects,}
  \emph{Journal of Econometrics}, 206, 613--644.

\bibitem[\protect\citeauthoryear{Nevo and Hatzitaskos}{Nevo and
  Hatzitaskos}{2006}]{Nevo:2006aa}
\textsc{Nevo, A. and K.~Hatzitaskos} (2006): \enquote{Why does the average
  price paid fall during high demand periods?} Tech. rep., CSIO working paper.

\bibitem[\protect\citeauthoryear{Sweeting, Roberts, and Gedge}{Sweeting
  et~al.}{2020}]{Sweeting:2020tw}
\textsc{Sweeting, A., J.~W. Roberts, and C.~Gedge} (2020): \enquote{A model of
  dynamic limit pricing with an application to the airline industry,}
  \emph{Journal of Political Economy}, 128, 1148--1193.

\bibitem[\protect\citeauthoryear{Zhang}{Zhang}{2024}]{Zhang:2024}
\textsc{Zhang, L.} (2024): \enquote{Identification and estimation of market
  size in discrete choice demand models,} SSRN 4545848.

\end{thebibliography}
\end{document}